\newenvironment{proofof}[1]{\noindent \textbf{Proof of #1.}}{\qed}
\title{Learning Partitions with Optimal Query and Round Complexities}
\author{
Hadley Black \vspace{-.4em} \\
University of California, San Diego \vspace{-.4em} \\ 
{\tt hablack@ucsd.edu}
\and 
Arya Mazumdar \vspace{-.4em} \\ 
University of California, San Diego \vspace{-.4em} \\
{\tt arya@ucsd.edu}
\and 
Barna Saha\thanks{All authors supported by NSF TRIPODS Institute grant 2217058 (EnCORE) and NSF 2133484.} \vspace{-.4em} \\ 
University of California, San Diego \vspace{-.4em} \\
{\tt bsaha@ucsd.edu}
}
\newtheorem{theorem}{Theorem}[section]
\newtheorem{lemma}[theorem]{Lemma}
\newtheorem{claim}[theorem]{Claim}
\newtheorem{corollary}[theorem]{Corollary}
\newcommand{\ignore}[1]{}
\newcommand{\cA}{{\cal A}}
\newcommand{\cC}{{\cal C}}
\newcommand{\cE}{{\cal E}}
\newcommand{\cP}{\mathcal{P}}
\newcommand{\eps}{\varepsilon}
\newcommand{\poly}{\mathrm{poly}}
\newcommand{\ceil}[1]{\lceil#1\rceil}
\newcommand{\floor}[1]{\lfloor#1\rfloor}
\newcommand{\Sec}[1]{\hyperref[sec:#1]{\Cref*{sec:#1}}} 
\newcommand{\Eqn}[1]{\hyperref[eq:#1]{(\ref*{eq:#1})}} 
\newcommand{\Fig}[1]{\hyperref[fig:#1]{Fig.\,\ref*{fig:#1}}} 
\newcommand{\Tab}[1]{\hyperref[tab:#1]{Tab.\,\ref*{tab:#1}}} 
\newcommand{\Thm}[1]{\hyperref[thm:#1]{Theorem\,\ref*{thm:#1}}} 
\newcommand{\Fact}[1]{\hyperref[fact:#1]{Fact\,\ref*{fact:#1}}} 
\newcommand{\Lem}[1]{\hyperref[lem:#1]{Lemma\,\ref*{lem:#1}}} 
\newcommand{\Prop}[1]{\hyperref[prop:#1]{Prop.~\ref*{prop:#1}}} 
\newcommand{\Cor}[1]{\hyperref[cor:#1]{Corollary~\ref*{cor:#1}}} 
\newcommand{\Conj}[1]{\hyperref[conj:#1]{Conjecture~\ref*{conj:#1}}} 
\newcommand{\Def}[1]{\hyperref[def:#1]{Definition~\ref*{def:#1}}} 
\newcommand{\Alg}[1]{\hyperref[alg:#1]{Alg.~\ref*{alg:#1}}} 
\newcommand{\Obs}[1]{\hyperref[obs:#1]{Obs.~\ref*{obs:#1}}} 
\newcommand{\Ex}[1]{\hyperref[ex:#1]{Ex.~\ref*{ex:#1}}} 
\newcommand{\Clm}[1]{\hyperref[clm:#1]{Claim~\ref*{clm:#1}}} 
\newcommand{\Step}[1]{\hyperref[step:#1]{Step~\ref*{step:#1}}} 
\newcommand{\LS}{\mathsf{LearnSparse}}
\newcommand{\LRPQ}{\mathsf{LR}\text{-}\mathsf{SameSetQuery}}
\newcommand{\LRSQ}{\mathsf{LR}\text{-}\mathsf{WeakSubsetQuery}}
\newcommand{\LRSQS}{\mathsf{LR}\text{-}\mathsf{StrongSubsetQuery}}
\newcommand{\NASQ}{\mathsf{NA}\text{-}\mathsf{WeakSubsetQuery}}
\newcommand{\NASQS}{\mathsf{NA}\text{-}\mathsf{StrongSubsetQuery}}
\newcommand{\Ot}{\widetilde{O}}
\newcommand{\cK}{\mathcal{K}}
\newcommand{\tC}{\widetilde{\mathcal{C}}}
\newcommand{\same}{\mathsf{same}\text{-}\mathsf{set}}
\newcommand{\countq}{\mathsf{count}}
\newcommand{\partitionq}{\mathsf{partition}}
\begin{document}

\maketitle

\begin{abstract}
  We consider the basic problem of learning an unknown partition of $n$ elements into at most $k$ sets using simple queries that reveal information about a small subset of elements. Our starting point is the popular and well-studied pairwise \emph{same-set} queries which ask if a pair of elements belong to the same class. It is well-known that non-adaptive (fully parallel) algorithms require $\Theta(n^2)$ queries, while adaptive (fully sequential) algorithms require $\Theta(nk)$ queries, and the best known algorithm uses $k-1$ rounds of adaptivity. Many variations of this problem have been studied over the last two decades in multiple disciplines due to its fundamental nature and connections to clustering, active learning, and crowd-sourcing. In many of these applications, it is of paramount interest to reduce adaptivity, a.k.a the number of rounds, while minimizing the query complexity. In this paper, we give a complete characterization of the deterministic query complexity of this problem as a function of the number of rounds, $r$, which interpolates smoothly between the non-adaptive and adaptive settings: for any constant $r \geq 1$, the query complexity is $\smash{\Theta(n^{1+\frac{1}{2^r-1}}k^{1-\frac{1}{2^r-1}})}$. Additionally, our algorithm only needs $O(\log \log n)$ rounds to attain the optimal $O(nk)$ query complexity, which is a double-exponential improvement over prior works when $k$ is a polynomial in $n$.

Next, we consider two natural generalizations of pair-wise queries to general subsets $S$ of size at most $s$: (1) weak subset queries which return the number of classes intersected by $S$, and (2) strong subset queries which return the entire partition restricted on $S$. Once again in crowd sourcing applications, queries on large sets may be prohibitive. For non-adaptive algorithms, we show $\Omega(n^2/s^2)$ strong queries are needed. In contrast, perhaps surprisingly, we show that there is a non-adaptive randomized algorithm using weak queries that matches this bound up to log-factors for all $s \leq \sqrt{n}$. More generally, we obtain nearly matching upper and lower bounds for algorithms using weak and strong queries in terms of both the number of rounds, $r$, and the query size bound, $s$.
\end{abstract}

\newpage
\tableofcontents
\newpage


\section{Introduction}

Learning set partitions is fundamental to many applications including unsupervised learning, entity resolution, and network reconstruction. We consider the basic algorithmic problem of learning an unknown partition $\cP = (C_1,\ldots,C_k)$ of a universe $U$ of $n$ elements via access to an oracle that provides information about a queried subset $S \subseteq U$. Our starting point is the \emph{pairwise same-set} query oracle, which returns whether a queried pair of elements $\{x,y\} \subset U$ belong to the same class. As far as we know, partition learning under same-set queries goes back at least to  \cite{RS07} who considered the problem of learning the connected components of a graph. The problem was later introduced more broadly in the learning theory and clustering literature by independent works of 
\cite{ashtiani2016clustering},  \cite{MS17a,MS17b,mazumdar2017theoretical}, and  \cite{mitzenmacher2016predicting}. In parallel, the problem garnered interest in the database community as an important primitive to develop crowd-sourced databases  \cite{feng2011crowddb, mozafari2012active,DBLP:journals/tods/DavidsonKMR14}. Since then, it has been studied extensively over the last decade \cite{saha2019correlation, huleihel2019same,bressan2020exact,del2022clustering,DMT24} due to its fundamental nature and relevance to clustering, machine learning, and databases.

In particular, the problem naturally models a situation where one would like to learn a hidden ground-truth clustering when obtaining explicit class labels is difficult or unnecessary, but deciding whether two elements have the same class label is easy. Indeed, learning partitions in this model can be viewed as a \emph{label-invariant} form of clustering. In many applications class labels are difficult to discern computationally due to noisy and incomplete data, but different classes are trivial to distinguish with the human eye. Thus, applications have been developed which implement clustering via same-set queries, where external crowd-workers play the role of the oracle, e.g. \cite{franklin2011crowddb, crowder12}. These queries are also straightforward to implement \cite{mazumdar2017semisupervised}, making them broadly applicable. 
Beyond its many motivating applications, we believe that partition learning with same-set queries is an extremely fundamental algorithmic problem akin to basic questions like comparison-based sorting, and as such deserves a thorough theoretical study.

\paragraph{Learning partitions with pairwise queries in few rounds.} In crowdsource clustering applications, parallelization of queries is paramount to minimizing execution time, since one may not have control over how long it takes for queries to be answered \cite{gu2012towards}. In the context of query-algorithms, parallelism is formalized in terms of \emph{round-complexity}: an algorithm has round-complexity $r$ if its queries can be batched into $r$ rounds $Q_1,\ldots,Q_r \subset U^2$ where the round $t$ queries, $Q_t$, are made all at once (in parallel) and only depend on the oracle's response to the queries in the previous rounds. We say that such an algorithm uses $r$ \emph{rounds of adaptivity}. An algorithm is called \emph{non-adaptive} if $r=1$ and \emph{fully adaptive} if no bound is given for $r$. It is well-known that the same-set query complexity of partition learning is $\Theta(nk)$ for fully adaptive algorithms \cite{RS07,DBLP:journals/tods/DavidsonKMR14, MS17a,LM22} and $\Theta(n^2)$ for non-adaptive algorithms \cite{MS17a,BLMS24}. However, despite numerous works studying same-set queries, and the clear motivation for simultaneously minimizing adaptivity and query complexity, there has not yet been a study of the round-complexity of this basic question. In particular, the best known algorithm achieving query complexity $O(nk)$ uses $k-1$ rounds \cite{RS07}. On the other hand, the $O(n^2)$ non-adaptive upper bound comes by trivially querying every pair of elements, and this is provably optimal, even when there are only $k=3$ sets in the partition. We ask, \emph{can this be significantly improved on using few rounds?}

Our work fills this gap: for every constant $r \geq 1$, we show matching upper and lower bounds of $\Theta(n^{1+\frac{1}{2^r-1}}k^{1-\frac{1}{2^r-1}})$, interpolating smoothly between the non-adaptive and fully adaptive settings. Moreover, our algorithm attains the optimal query complexity using only $O(\log \log n)$ rounds. As previous state of the art algorithms use $O(k)$ rounds, this is a double-exponential improvement over prior works when $k = \poly(n)$. Technically speaking, our algorithm uses a simple recursive framework which may be useful for improving round-complexity for other problems. Our lower bound uses a novel application of Tur\'{a}n's theorem to construct hard instances, and we believe our techniques may be useful for proving lower bounds for other query-based algorithmic problems for graphs and other combinatorial objects. See \Cref{sec:results-pairwise} for explicit statements and further discussion on our results for same-set queries. 



\paragraph{Generalizing to subsets while minimizing query size and rounds.} Next, we consider generalizations of the popular pairwise same-set query model to general subsets $S \subseteq U$. Beyond our specific problem, there has been recent interest in the learning theory community in learning concepts using subset, or group queries, e.g. \cite{KMT24}. Recent works \cite{CL24,BLMS24} introduced the partition learning problem with access to an oracle returning the number of sets in the partition intersecting the query, $S \subseteq U$. An information-theoretic lower bound is $\Omega(n)$ and \cite{CL24} obtained a matching $O(n)$ query adaptive algorithm, while for non-adaptive algorithms \cite{BLMS24} showed that $O(n \log k \cdot (\log k + \log\log n)^2)$ queries is possible. From a practical perspective, an obvious downside to these algorithms is that queried subsets can be large: \cite{CL24} uses $O(k)$ sized queries and \cite{BLMS24} uses $O(n)$ sized queries. To address this, we investigate the query complexity of learning partitions when a bound of $s$ is placed on the allowed query size. In both the adaptive and non-adaptive cases, we obtain algorithms with the same query complexity up to logarithmic factors, while shrinking the query size \emph{quadratically}, which is also optimal. Moreover, we obtain nearly matching upper and lower bounds for query size $s$ in terms of \emph{round-complexity}, similar to our results for pairwise same-set queries. 

To motivate subset queries that count the number of intersected classes (weak subset queries), we also study the strongest possible type of subset query, which returns a full representation of the partition on the queried subset (strong subset queries). Surprisingly, we show that the number of weak vs. strong subset queries that are required is the same, up to logarithmic factors, for all $s$ up to a reasonably large threshold ($s \leq O(\sqrt{n})$ for non-adaptive and $s \leq O(\sqrt{k})$ for adaptive), while weak queries require \emph{significantly} less communication. A detailed discussion of our results for subset queries is given in \Cref{sec:results-subsets}.

\paragraph{Organization.} All of our results are summarized in \Cref{sec:results}. We prove our lower and upper bounds for pairwise queries in \Cref{sec:LB} and \Cref{sec:LR-pair}, respectively. Our results for weak and strong subset queries are proven in \Cref{sec:count} and \Cref{sec:LR-partition}, respectively. Our most technical results are the lower bound for pairwise queries and the non-adaptive algorithm using weak subset queries. Informal overviews for these proofs are given in \Cref{sec:LB-overview} and \Cref{sec:NA-subset-overview}, respectively.


\subsection{Results} \label{sec:results}

\subsubsection{Pairwise Queries} \label{sec:results-pairwise}

We first consider the basic problem of learning an arbitrary unknown $k$-partition\footnote{We use the term $k$-partition as shorthand for a partition into \emph{at most} $k$ sets.} of $n$ elements using pairwise same-set queries: given $x,y \in U$ and a hidden partition $\cP$, $\same(x,y,\cP) = \mathsf{yes}$ if $x,y$ belong to the same set in $\cP$ and $\same(x,y,\cP) = \mathsf{no}$ otherwise.

Our focus is on \emph{round-complexity}: we first design a simple $r$-round deterministic algorithm which attains the optimal $O(nk)$ query complexity using only $O(\log \log n)$ rounds, which is a double-exponential improvement over the previous best $k-1$ round algorithm when $k = \poly(n)$. In general, with $O(\log 1/\eps)$ rounds, our algorithm has query complexity $O(n^{1+\eps}k^{1-\eps})$, which interpolates smoothly between the non-adaptive and fully adaptive settings.

\begin{theorem} [Pair query upper bound] \label{thm:LR-pair-UB} For any $r,k \geq 1$, there exists a deterministic $r$-round algorithm for $k$-partition learning using at most $8n^{(1 + \frac{1}{2^{r}-1})} k^{(1 - \frac{1}{2^{r}-1})}$ pairwise same-set queries.
\end{theorem}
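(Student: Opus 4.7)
The plan is a simple recursive group-and-merge strategy. Writing $\alpha_r := 1 + \frac{1}{2^r-1}$ and $\beta_r := 1 - \frac{1}{2^r-1}$ for the target exponents (so $\alpha_r + \beta_r = 2$), the goal is $T(n,k,r) \le 8\, n^{\alpha_r} k^{\beta_r}$, where $T(n,k,r)$ denotes the worst-case query count. The base case $r=1$ is the trivial non-adaptive algorithm making all $\binom{n}{2} \le n^2/2$ pairwise queries in one round, comfortably within $8n^2 = 8n^{\alpha_1}k^{\beta_1}$. I may also assume $n \ge 16k$ throughout, since otherwise $\binom{n}{2} \le n^2/2 \le 8nk \le 8\, n^{\alpha_r} k^{\beta_r}$, using the fact that $n^{\alpha_r} k^{\beta_r} = nk \cdot (n/k)^{1/(2^r-1)} \ge nk$ whenever $n \ge k$.

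For the inductive step $r \ge 2$, I pick a parameter $s$ (optimized below, with $k \le s \le n$) and arbitrarily partition $U$ into $\lceil n/s \rceil$ groups $U_1, U_2, \ldots$ of size at most $s$. In round $1$, non-adaptively and in parallel across all groups, I query every pair inside each group; since $\sum_i |U_i|^2 \le ns$, this uses at most $ns/2$ queries total and reveals $\cP|_{U_i}$ for each $i$. I then select one representative per class from each group to form a representative set $R \subseteq U$ of size $|R| \le \lceil n/s \rceil k \le 2nk/s$ (using $\lceil n/s \rceil \le 2n/s$ since $s \le n$). The partition induced on $R$ is itself a $k$-partition learning instance, so I solve it in rounds $2, \ldots, r$ using the inductive $(r-1)$-round algorithm, and lift the answer to all of $U$ by copying each representative's class label to every member of its group's class.

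By the inductive hypothesis, the representative phase uses at most $8(2nk/s)^{\alpha_{r-1}} k^{\beta_{r-1}} = 8 \cdot 2^{\alpha_{r-1}} n^{\alpha_{r-1}} k^2 s^{-\alpha_{r-1}}$ queries (using $\alpha_{r-1}+\beta_{r-1}=2$), so
\[ T(n,k,r) \;\le\; \frac{ns}{2} \;+\; 8 \cdot 2^{\alpha_{r-1}}\, n^{\alpha_{r-1}}\, k^2\, s^{-\alpha_{r-1}}. \]
Balancing the two terms with the choice $s = (16 \cdot 2^{\alpha_{r-1}}\, n^{\alpha_{r-1}-1} k^2)^{1/(\alpha_{r-1}+1)}$ and applying the elementary identities $\frac{2\alpha_{r-1}}{\alpha_{r-1}+1} = \alpha_r$ and $\frac{2}{\alpha_{r-1}+1} = \beta_r$ (both immediate from $\alpha_j = 2^j/(2^j-1)$) yields total cost $2^{(4+\alpha_{r-1})/(\alpha_{r-1}+1)} \cdot n^{\alpha_r} k^{\beta_r}$. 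The function $(4+\alpha)/(\alpha+1)$ is decreasing in $\alpha \ge 1$ with maximum $5/2$ attained at $\alpha = 1$, so the leading constant is at most $2^{5/2} < 8$, closing the induction.

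The conceptual content is really just identifying this two-term trade-off: spending round $1$ on coarse structure inside optimally-sized groups, and devoting all remaining $r-1$ rounds to the already-shrunk instance on representatives, correctly interpolates between the non-adaptive $\Theta(n^2)$ regime and the fully adaptive $\Theta(nk)$ regime. The main obstacle I anticipate is just the algebraic bookkeeping --- verifying the exponent identity and confirming that the constant $8$ survives induction --- which turns out to have substantial slack, since the tight recursion constant is closer to $2$. Edge cases around integer rounding of $s$ and the constraints $s \in [k, n]$ for the balanced choice are absorbed into this constant slack after the preliminary reduction to $n \ge 16k$, where a direct calculation from $s^{\alpha_{r-1}+1} = 16 \cdot 2^{\alpha_{r-1}}\, n^{\alpha_{r-1}-1} k^2$ confirms $k \le s \le n$.
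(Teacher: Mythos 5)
Your proposal is correct and follows essentially the same route as the paper's proof: a recursive group-and-merge scheme where round one queries all pairs inside groups of size $\Theta\bigl(n^{1/(2^r-1)}k^{1-1/(2^r-1)}\bigr)$, one representative per class per group forms the shrunken instance for the remaining $r-1$ rounds, and the same exponent identities (your $\tfrac{2\alpha_{r-1}}{\alpha_{r-1}+1}=\alpha_r$, $\tfrac{2}{\alpha_{r-1}+1}=\beta_r$ are the paper's Claim on $\eps(r)$) close the induction. The only cosmetic difference is that you derive the group size by balancing the two cost terms rather than stating it upfront, and your constant-tracking (including the $n\le 16k$ reduction and rounding slack) checks out.
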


In fact, we show that our algorithm attains the optimal interpolation between the non-adaptive and fully adaptive, $O(\log \log n)$ round setting, up to a factor of $r$. In particular, our upper and lower bounds are exactly matching for every constant $r$, and are always tight up to a $O(\log \log n)$ factor.

\begin{theorem} [Pair query lower bound] \label{thm:LR-pair-LB} For all $r \geq 1$, any $r$-round deterministic algorithm for $k$-partition learning must use at least $\Omega\big(\frac{1}{r} \cdot n^{(1+\frac{1}{2^r-1})}k^{(1-\frac{1}{2^r-1})}\big)$ pairwise same-set queries.
\end{theorem}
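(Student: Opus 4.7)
The plan is to prove this bound by induction on the number of rounds $r$, via an adversary argument driven by Tur\'an's theorem. Throughout, the adversary maintains a family of candidate $k$-partitions of $[n]$ consistent with the answers given so far; its goal is to keep at least two distinct partitions alive at the end of round $r$, which contradicts correctness. The base case is $r=1$, where the bound reads $\Omega(n^{1+\eps_1}k^{1-\eps_1}) = \Omega(n^2)$ (since $\eps_1 = 1$) and matches the known non-adaptive pair-query lower bound \cite{MS17a,BLMS24}.

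For the inductive step, let $A$ be any $r$-round algorithm and $q_t$ the number of queries issued in round $t$; write $f_r(n,k) := n^{1+\eps_r}k^{1-\eps_r}$. The key algebraic fact, verified directly from $\eps_r = 1/(2^r-1)$, is the recursion $\eps_r = \eps_{r-1}/(2+\eps_{r-1})$, together with the two identities $(1-\eps_r)(1+\eps_{r-1}) = 1+\eps_r$ and $\eps_r(1+\eps_{r-1}) + 1 - \eps_{r-1} = 1-\eps_r$ which it implies. If $q_1 \geq f_r(n,k)/(2r)$ we are done, so assume $q_1$ is smaller. The adversary then answers every round-$1$ query as ``different class'' and applies the following Tur\'an-based structural lemma to the round-$1$ query graph $G_1$: there exists a subset $V_1 \subseteq [n]$ of size $n_1 = \Omega(n^2 k / q_1)$ such that the induced graph $G_1|_{V_1}$ has average degree $O(k)$, hence degeneracy $O(k)$, and therefore admits $\Omega(k^{n_1})$ distinct proper $k$-colorings. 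The adversary then commits to a partition of $[n]\setminus V_1$ that is decoupled from $V_1$ (e.g.\ placing all of $[n]\setminus V_1$ into a single class reserved for this purpose), so that cross queries are automatically consistent with every proper $k$-coloring of $G_1|_{V_1}$ regarded as a $k$-partition of $V_1$.

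After round $1$, the remaining $r-1$ rounds of $A$ must solve a $k$-partition learning instance on $V_1$ from scratch. By the inductive hypothesis, this requires at least $\frac{c}{r-1} n_1^{1+\eps_{r-1}} k^{1-\eps_{r-1}}$ queries. Substituting $n_1 = \Omega(r \cdot n^{1-\eps_r} k^{\eps_r})$ (which follows from $q_1 < f_r(n,k)/(2r)$ combined with $n_1 = \Omega(n^2k/q_1)$) and applying the two identities above to the exponents of $n$ and $k$ gives $\sum_{t \geq 2}q_t = \Omega(f_r(n,k)/r)$; combined with $q_1 \geq 0$ this completes the induction.

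The main technical obstacle is the structural lemma, specifically the factor of $k$ in $|V_1| = \Omega(n^2 k/q_1)$. A naive maximum-independent-set application of Tur\'an's theorem gives only $|V_1| = \Omega(n^2/q_1)$, which would lose a factor of $k^{1-\eps_r}$ in the final answer. The additional factor of $k$ is recovered by allowing $V_1$ to fail to be independent in $G_1$, and instead requiring only that $G_1|_{V_1}$ be sparse enough (average degree $O(k)$): such a $V_1$ exists in size $\Omega(n^2 k / q_1)$ by a Tur\'an/density-averaging argument, and its $O(k)$-degeneracy implies exponentially many proper $k$-colorings, which is precisely the source of the adversary's $k$-way ambiguity needed to drive the recursion.
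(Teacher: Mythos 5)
Your high-level plan (induction on $r$, Tur\'an-type extraction of a structured subset after round~1, and the two exponent identities for $\eps(r)$) matches the skeleton of the paper's warm-up, but the way you try to recover the factor $k^{1-\eps_r}$ breaks the induction. The inductive hypothesis is a lower bound for algorithms that must learn an \emph{arbitrary} $k$-partition of $V_1$, and it is proved against a specific hard family of partitions. After your round-1 answers, the instance on $V_1$ is \emph{not} "from scratch": the algorithm already knows that every round-1 edge inside $V_1$ joins two different classes, so the family of partitions still alive is only the set of proper $k$-colorings of $G_1|_{V_1}$. Since you deliberately allow $G_1|_{V_1}$ to be non-empty (average degree $O(k)$), there is no guarantee that the hard family used by the inductive hypothesis avoids these edges, so you cannot invoke that hypothesis for the remaining $r-1$ rounds. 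Counting proper colorings does not repair this: the bound is not information-theoretic (each pair query yields one bit, so "exponentially many colorings" only gives $\Omega(n_1\log k)$), and the recursion in the paper is driven by structural indistinguishability, not by the size of the candidate family. This is exactly why the paper insists on genuine \emph{independent} sets: it applies a repeated Tur\'an argument to extract $\ell\approx k/r$ disjoint independent sets per round, makes each of them an entire class of the evolving partition (the Carving Lemma), and thereby forces every earlier query touching the surviving set $S^{(t)}$ to be answered "no" in a way that stays consistent when the final $\{x,y\}$ versus $\{x\},\{y\}$ perturbation is introduced. The $k$-dependence comes from the union of the $\approx k/r$ independent sets being large, and the loss of the factor $r$ in the theorem reflects the class budget of $\ell\approx k/r$ new classes per round --- not from colorings of a sparse induced subgraph.

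Two further problems, more easily fixable but still wrong as written: average degree $O(k)$ of $G_1|_{V_1}$ does not imply degeneracy $O(k)$ (a small dense clique plus isolated vertices has tiny average degree but large degeneracy), so the "hence degeneracy $O(k)$, hence $\Omega(k^{n_1})$ colorings" step needs an additional vertex-deletion argument; and the adversary cannot simultaneously answer \emph{every} round-1 query "different class" and place all of $[n]\setminus V_1$ into a single class, since round-1 queries with both endpoints outside $V_1$ would then have to be answered "same". The paper's construction handles this consistency issue explicitly (item (2) of the Carving Lemma and the consistency check in Lemma \ref{lem:hard-construction}), and that bookkeeping is precisely what your reduction is missing.
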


We remark that it is still open to establish such a lower bound for arbitrary \emph{randomized} algorithms, and we believe that additional technical ideas are needed to achieve such an extension.



\subsubsection{Subset Queries} \label{sec:results-subsets}

Next, we consider the two following generalizations of pairwise same-set queries to subsets. Given hidden partition $\cP$ and a queried subset $S \subseteq U$, each oracle returns the following information.
\begin{itemize}
    \item \textit{Weak subset query oracle:} Given $S \subseteq U$, the oracle returns $\countq(S,\cP) := \sum_{X \in \cP} \mathbf{1}(S \cap X \neq \emptyset),$
     i.e. the number of parts which $S$ intersects.\footnote{Weak subset queries are also sometimes referred to as "rank queries", e.g. \cite{CL24}, or simply as "subset queries", e.g. \cite{BLMS24}.}
    \item \textit{Strong subset query oracle:} Given $S \subseteq U$, the oracle returns $\partitionq(S,\cP) := \{S \cap X \colon X \in \cP\}$, i.e. a full description of $\cP$ restricted on $S$.
\end{itemize}

We are interested in the query complexity of learning partitions when an upper bound of $s \in [2,n]$ is placed on the allowed size of a queried subset. Strong queries are the \emph{most informative} type of subset query that one can define for the partition learning problem and thus provide a meaningful benchmark against which to measure the effectiveness of other query types. When $s=2$, both queries are equivalent (in fact they are the same as pairwise queries). At the other extreme, when $s = n$ a single strong query recovers the entire partition, while a weak query only returns the number of parts. Intuitively, as $s$ increases, the gap between weak and strong queries widens. Given that weak queries require significantly less communication from the oracle ($O(\log k)$ as opposed to $O(s \log k)$ bits), as well as less computation to answer, a motivating question for this line of inquiry is: \emph{is there a regime of} $s \gg 2$ \emph{where weak and strong queries are similarly powerful?}





\paragraph{The non-adaptive case.} 
Obtaining lower bounds, even for strong subset queries, is straightforward using known lower bounds for pairwise queries: observe that one $s$-bounded strong subset query can be simulated (non-adaptively) by ${s \choose 2}$ pair-wise same-set queries\footnote{Given a set $S$, one can query all pairs in $S$ and compute the entire partition restricted on $S$.}. 
Thus, for non-adaptive algorithms, the $\Omega(n^2)$ lower bound for pairwise queries \cite{MS17a,BLMS24} implies that $\Omega(n^2/s^2)$ strong queries are necessary. In fact, we prove (in \Cref{sec:LR-partition}) that there is also a simple deterministic non-adaptive algorithm matching this bound.

\begin{theorem} [Non-adaptive strong queries]\label{thm:NA-strong} For $s \in [2,n]$, the non-adaptive strong query complexity of partition learning is $\Theta(n^2/s^2)$. The algorithm is deterministic and the lower bound holds even for randomized algorithms. \end{theorem}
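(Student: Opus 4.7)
\textbf{Proof plan for \Thm{NA-strong}.}

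The lower bound is essentially inherited from the pairwise setting. Any non-adaptive algorithm using strong subset queries of size at most $s$ can be simulated non-adaptively by pairwise same-set queries: replace each query $S$ by the $\binom{|S|}{2} \leq \binom{s}{2}$ pairwise queries on all pairs in $S$, and then $\partitionq(S,\cP)$ can be reconstructed from the resulting same-set answers. Consequently, if there were a non-adaptive strong-query algorithm using $q$ queries, there would be a non-adaptive pairwise algorithm using at most $\binom{s}{2} q$ queries. Plugging in the known $\Omega(n^2)$ lower bound for non-adaptive pairwise same-set queries (which holds against randomized algorithms, as cited in the excerpt) gives $q = \Omega(n^2/s^2)$.

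For the upper bound, the plan is to give the natural block-pair algorithm. Partition $U$ arbitrarily into $b = \lceil 2n/s \rceil$ blocks $B_1,\ldots,B_b$, each of size at most $s/2$. For every (unordered) pair of indices $i \leq j$, non-adaptively issue the strong subset query on $B_i \cup B_j$, which has size at most $s$. The total number of queries is $\binom{b}{2} + b = O(n^2/s^2)$.

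Correctness is immediate: for any two elements $x,y \in U$ with $x \in B_i$ and $y \in B_j$, the response to the query on $B_i \cup B_j$ contains $\partitionq(B_i \cup B_j,\cP)$ and thus reveals whether $x$ and $y$ belong to the same class of $\cP$. Since this holds for every pair, the algorithm recovers the entire same-set relation on $U$, which uniquely determines $\cP$. The algorithm is deterministic, non-adaptive, and uses only strong queries of size at most $s$.

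There is no real obstacle here: both directions reduce to the pairwise case. The only mild care needed is in the reduction for the lower bound (ensuring that the simulating pairwise algorithm remains non-adaptive, which it does because each strong query $S$ is replaced by a fixed set of pairs depending only on $S$), and in making sure the block-pair algorithm covers same-block pairs (handled by either allowing $i = j$, or by noting that any same-block pair is already contained in $B_i \cup B_j$ for some $j \neq i$ when $b \geq 2$).
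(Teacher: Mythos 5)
Your proposal is correct and matches the paper's own argument: the lower bound is obtained by the same non-adaptive simulation of each strong query by $\binom{s}{2}$ pairwise queries combined with the randomized $\Omega(n^2)$ pairwise lower bound, and the upper bound is the same block-pair construction (the paper partitions $U$ into blocks of size $s/2$ and queries all unions $T_i \cup T_j$, noting same-block pairs are covered by any query touching that block). The only differences are cosmetic, e.g.\ your explicit handling of $i=j$ and of odd $s$, which the paper sidesteps by assuming $s$ even in its upper-bound theorem.
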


On the other hand, a weak subset query contains at most $O(\log k)$ bits of information and there are $k^{\Omega(n)}$ partitions possible, implying an information-theoretic lower bound of $\Omega(n)$, even when $s = n$. Therefore, there is a separation between weak and strong queries when $s \gg \sqrt{n}$, but this leaves as a possibility that weak and strong queries could have similar power when $s = O(\sqrt{n})$.  
Previous work of \cite{BLMS24} provided two algorithms making $\Ot(n^2k/s^2)$ and $\Ot(n^2/s)$ queries, respectively, but it remained open whether $\Ot(n^2/s^2)$ is possible. Our main result for non-adaptive subset queries provides an affirmative answer to this open question. (Question 1.14 of \cite{BLMS24}.) 



\begin{theorem} [Non-adaptive weak queries, \Cref{thm:NA-bounded} informal] For all $s \in [2,\sqrt{n}]$, the weak subset query complexity of partition learning is $\widetilde{\Theta}(n^2/s^2)$. Our algorithm is randomized and succeeds with probability $1-1/\poly(n)$. \end{theorem}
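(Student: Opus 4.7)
The $\Omega(n^2/s^2)$ lower bound is immediate from \Thm{NA-strong}: a strong subset query carries strictly more information than the corresponding weak query (the count $\countq(S,\cP)$ is just the number of parts in $\partitionq(S,\cP)$), so any $s$-bounded non-adaptive weak-query algorithm can be simulated by an $s$-bounded strong-query algorithm using identical queries, and the randomized lower bound of \Thm{NA-strong} transfers directly.

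\paragraph{Upper bound plan.} My plan is a bucket-and-link scheme. First, randomly partition $U$ into $m = \Theta(n/s)$ disjoint buckets $B_1,\dots,B_m$, each of size $\Theta(s)$; for $s$ at least polylogarithmic in $n$, a Chernoff argument ensures $|B_i|+|B_j| \le s$ for every pair with high probability, while for smaller $s$ the trivial $\binom{n}{2}$-query pair scheme still fits within the $\widetilde{O}(n^2/s^2)$ budget. Then issue two families of weak queries, all of size $\le s$. \emph{Within-bucket queries}: apply the non-adaptive weak-query partition-learning algorithm of \cite{BLMS24} to each $B_i$ separately, using $\widetilde{O}(s)$ queries per bucket and $\widetilde{O}(n) \le \widetilde{O}(n^2/s^2)$ queries in total (using $s \le \sqrt{n}$). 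These recover the partition restricted to each $B_i$ and yield a list of intra-bucket classes. \emph{Cross-bucket queries}: for every pair $\{B_i,B_j\}$ and every $\ell = 1,\dots,O(\log n)$, query the random subset $S_{ij}^{(\ell)} \subseteq B_i \cup B_j$ obtained by keeping each element independently with probability $1/2$. With $\binom{m}{2} = \Theta(n^2/s^2)$ pairs, this contributes $\widetilde{O}(n^2/s^2)$ queries, matching the target. The algorithm then outputs the unique partition $\cP$ of $U$ that refines every intra-bucket partition and is consistent with the observed count $\countq(S_{ij}^{(\ell)},\cP)$ for every cross-bucket query.

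\paragraph{Main obstacle and key idea.} The crux is that, for a \emph{single} pair $\{B_i,B_j\}$, the cross-bucket merge map between intra-bucket classes can carry $\Omega(s\log s)$ bits of information, far more than the $O(\log n)$ random slices per pair (each carrying $O(\log s)$ bits) can extract locally. My plan is to exploit \emph{global consistency}: any partition of $U$ is specified by $O(n\log k)$ bits, so the average cross-bucket matching information required per pair is only $\widetilde{O}(1)$ bits, and the decoding must therefore be done jointly across all bucket pairs rather than pair-by-pair. The hardest step in the analysis is a union bound ruling out every spurious partition $\cP' \ne \cP$ that refines the intra-bucket partitions: for a ``localized'' deviation where $\cP'$ differs from $\cP$ by a single cross-bucket merge or split, each random slice $S_{ij}^{(\ell)}$ separates $\cP$ from $\cP'$ with $\Omega(1)$ probability (the discrepancy is witnessed whenever the slice catches elements on both sides of the disputed merge), so $O(\log n)$ slices per pair drive the failure probability below $1/\poly(n)$ after union-bounding over the $\poly(n)$ local deviations. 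Larger deviations produce proportionally larger $\countq$ discrepancies across many slices and are killed even more easily, so the global union bound closes and yields the claimed $\widetilde{O}(n^2/s^2)$ query complexity with success probability $1-1/\poly(n)$.
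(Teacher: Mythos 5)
Your lower bound direction is fine (a strong query determines the corresponding weak answer, so \Thm{NA-strong} transfers), but the upper bound has a genuine gap in the cross-bucket linking step, and it is not one that the ``global consistency'' argument can repair. The problem is that a weak query $\countq(S,\cP)$ is insensitive to \emph{how many} elements of a class lie in $S$: a class contributes $1$ whether $S$ hits it in one element or in fifty. Consequently, your random half-density slices $S_{ij}^{(\ell)} \subseteq B_i \cup B_j$ cannot see how large intra-bucket fragments link up across buckets. Concretely, take $\cP$ to consist of two classes $A,B$ each of size $n/2$, so that each bucket contains two fragments of size $\approx s/2$, and let $\cP'$ be obtained by swapping the roles of the two fragments inside a single bucket $B_j$ (i.e., $A' = (A\setminus B_j)\cup(B\cap B_j)$, $B' = (B\setminus B_j)\cup(A\cap B_j)$). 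Both partitions refine every intra-bucket partition, so your within-bucket queries give identical answers. A slice of $B_i \cup B_j$ has count $2$ under \emph{both} partitions unless it completely misses at least two of the four fragments involved, an event of probability $O(2^{-s/2})$; for $s = \omega(\log n)$ this is negligible even after a union bound over all $\widetilde{O}(n^2/s^2)$ queries, and slices of pairs not involving $B_j$ carry no information about the swap at all. So with probability $1-o(1)$ the transcript is identical under $\cP$ and $\cP'$ and the algorithm must fail on one of them; your claim that a local deviation is witnessed by each slice with $\Omega(1)$ probability is only true when the disputed fragments are singletons or of size $O(\log n)$, and the ``larger deviations are easier'' heuristic fails here because large fragments make the count discrepancy \emph{rarer}, not larger. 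The information-budget argument does not rescue this, since the two partitions induce essentially the same response distribution on every query you make.

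The paper's algorithm is built precisely around this obstruction: it learns classes from largest to smallest rather than bucket-by-bucket. Classes of size at least $n/\polylog(n/\delta)$ are learned with a polylogarithmic-size random representative set and size-$2$ queries against all of $U$ (so large fragments never have to be linked via counts of dense sets). Medium and small classes are then learned by the subroutine $\LS$, which samples random sets $I$ of size $\approx\sqrt{n/c}$; when $I$ meets each unknown class at most once, the count certifies that all pairs in $I$ lie in different classes, and the already-learned large classes $K$ are subtracted out by computing $\countq(I)-\countq(I\cap K)$ from the single query to $I$, so the query distribution never depends on what has been learned and the whole procedure stays non-adaptive. If you want to salvage a bucket-style scheme, you would need your cross queries to isolate $O(1)$ representatives per fragment (as the paper's representative-set and independent-set ideas do), not take dense random slices.
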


We find this result to be surprising since intuitively strong queries seem to be \emph{significantly} more informative, yet up to logarithmic factors they provide no advantage for $s \leq \sqrt{n}$. From an applications perspective, this provides a compelling case for weak subset queries as they are nearly as informative as the strongest possible type of subset query, while being both (a) simpler to answer and (b) requiring significantly less communication.

\paragraph{The adaptive case.} Again, since one strong subset query of size $s$ can be simulated using $O(s^2)$ pairwise queries, \emph{any} lower bound for pairwise query algorithms extends to subset queries with an additional $1/s^2$ factor. Thus the $\Omega(nk)$ lower bound for fully adaptive pair query algorithms \cite{MS17a,LM22} extends to an $\Omega(nk/s^2)$ lower bound for strong subset queries. More generally, our lower bound \Cref{thm:LR-pair-LB} extends in the same fashion. Moreover, we use our non-adaptive subset query algorithms above combined with the algorithmic strategy used to obtain our $r$-round pair query algorithm of \Cref{thm:LR-pair-UB} to prove the following bounds on the query complexity of $r$-round subset query algorithms for partition learning.

\begin{theorem} [$r$-round weak subset queries] For every $s \geq 2$, $k \geq 1$, and $r \geq 1$, there is a randomized $r$-round algorithm for $k$-partition learning that succeeds with probability $1-1/\poly(n)$ using 
\[
\Ot\left(\max\Big(\frac{1}{s^2} \cdot n^{(1+\frac{1}{2^r-1})}k^{(1-\frac{1}{2^r-1})}, n \Big)\right)
\]
$s$-bounded weak subset queries and any $r$-round deterministic algorithm for this task must use 
\[
\Omega\left(\max\Big(\frac{1}{r} \cdot\frac{1}{s^2} \cdot n^{(1+\frac{1}{2^r-1})}k^{(1-\frac{1}{2^r-1})}, n \Big)\right) 
\]
$s$-bounded weak subset queries. \end{theorem}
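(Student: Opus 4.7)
The lower bound decomposes into two independent arguments. First, the $\Omega(n)$ term is information-theoretic: a weak subset query returns an integer in $[1,k]$, carrying only $O(\log k)$ bits, while the set of $k$-partitions of $[n]$ has size $k^{\Omega(n)}$, forcing $\Omega(n)$ queries regardless of $r$ or $s$. Second, for the $\Omega\bigl(\frac{1}{rs^2} \cdot n^{1+1/(2^r-1)} k^{1-1/(2^r-1)}\bigr)$ term, I reduce to \Thm{LR-pair-LB}: any $r$-round deterministic $s$-bounded weak-subset algorithm $A$ using $q$ queries can be converted to an $r$-round pair-query algorithm by replacing each query $S$ with the $\binom{|S|}{2} \leq s^2/2$ pairwise same-set queries on pairs in $S$ --- all asked in the same round --- whose answers jointly determine $\countq(S,\cP)$. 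The resulting pair-query algorithm uses at most $qs^2/2$ queries, so \Thm{LR-pair-LB} gives $qs^2/2 \geq \Omega\bigl(\frac{1}{r} n^{1+1/(2^r-1)} k^{1-1/(2^r-1)}\bigr)$, as required.

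For the upper bound I would induct on $r$. The base case $r=1$ is precisely the informal non-adaptive theorem stated earlier: it furnishes $\Ot(n^2/s^2)$ weak-subset queries for $s \leq \sqrt{n}$, and for $s > \sqrt{n}$ the same algorithm instantiated with size parameter $\sqrt{n}$ respects the $s$-bound while using $\Ot(n)$ queries, matching $\max(n^2/s^2, n)$. For $r \geq 2$ I mirror the block decomposition underlying \Thm{LR-pair-UB}: arbitrarily partition $[n]$ into $t$ blocks of size $n/t$, and in round $1$ invoke the non-adaptive subroutine on each block in parallel at total cost $\Ot(n^2/(ts^2))$. This reveals the restriction of the hidden partition $\cP$ to every block. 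Picking one representative element from each non-empty sub-class leaves at most $tk$ \emph{meta-elements} to be merged, on which I recursively run the $(r-1)$-round algorithm with class bound $k$. Since a weak subset query on a set $T$ of representatives still evaluates $\countq(T,\cP)$, the recursion uses the same oracle without modification.

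Letting $Q_r(n,k)$ denote the target cost and assuming inductively $Q_{r-1}(m,k) = \Ot\bigl(m^{1+\alpha_{r-1}} k^{1-\alpha_{r-1}}/s^2 + n\bigr)$ with $\alpha_{r-1} = 1/(2^{r-1}-1)$, the total query count becomes
\[
\Ot\!\left(\frac{n^2}{t s^2} \;+\; \frac{(tk)^{1+\alpha_{r-1}} k^{1-\alpha_{r-1}}}{s^2} \;+\; n\right) \;=\; \Ot\!\left(\frac{n^2}{t s^2} + \frac{t^{1+\alpha_{r-1}} k^{2}}{s^2} + n\right).
\]
Balancing the first two terms by taking $t \approx (n/k)^{2/(2+\alpha_{r-1})}$ produces $\Ot\bigl(n^{1+\alpha_r} k^{1-\alpha_r}/s^2 + n\bigr)$ with $\alpha_r = \alpha_{r-1}/(2+\alpha_{r-1}) = 1/(2^r-1)$, matching the advertised exponent exactly.

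The anticipated difficulties are mostly bookkeeping rather than conceptual: (i) handling the information-theoretic floor $n$ by reverting to a direct $\Ot(n)$ strategy whenever the recursion enters a regime where $n^2/s^2 \leq n$ or the non-adaptive subroutine's hypothesis $s \leq \sqrt{n/t}$ fails on a block; (ii) ensuring the balancing $t$ is an integer in the valid range $[1, n/s^2]$ and that the meta-problem remains well-posed when $tk \geq n$, in which case the recursion terminates via the base case rather than continuing; and (iii) applying a union bound over the $\poly(n)$ randomized invocations of the non-adaptive subroutine by running each with failure probability $1/\poly(n)$, thereby preserving the overall $1 - 1/\poly(n)$ success guarantee. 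The main conceptual idea --- recursively calling the non-adaptive algorithm on blocks and then on representative meta-elements --- is a direct adaptation of the pair-query approach of \Thm{LR-pair-UB}, so beyond the above edge-case analysis I do not expect fundamentally new ideas to be required.
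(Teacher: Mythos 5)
Your proposal is correct and follows essentially the same route as the paper: the lower bound comes from simulating each $s$-bounded weak query by $\binom{s}{2}$ same-round pairwise queries and invoking \Cref{thm:LR-pair-LB}, together with the $\Omega(n)$ information-theoretic floor, and the upper bound is the recursive block-decomposition of \Cref{thm:LR-pair-UB} with the non-adaptive weak-query algorithm (\Cref{thm:NA-bounded}) as the per-block subroutine, recursing on one representative per learned part. Your balancing choice $t \approx (n/k)^{2/(2+\alpha_{r-1})}$ coincides with the paper's $\ell \approx (n/k)^{1-\eps(r)}$ and the exponent recursion $\alpha_r = \alpha_{r-1}/(2+\alpha_{r-1})$ is exactly the identity the paper proves in \Cref{clm:eps(r)}, and your edge-case handling (capping the query size in recursive calls, union-bounding over subroutine invocations) matches the paper's treatment.
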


Our upper and lower bounds are tight for constant $r$. Using $r = O(\log \log n)$ rounds and query size bound $s = O(\sqrt{k})$ our algorithm has nearly-linear query complexity, $\widetilde{O}(n)$. This is in contrast to \cite{CL24} who obtained an $O(n)$ query algorithm using $r = O(\log k)$ and $s = O(k)$. More generally, with $s = O(\sqrt{k^{1+\eps}})$ and $r = O(\log 1/\eps)$, our algorithm has query complexity $\Ot(n^{1+\eps})$.

We obtain similar results for strong subset queries. Note that an $s$-bounded strong query contains $O(s \log k)$ bits of information, implying an $\Omega(n/s)$ information-theoretic lower bound. 

\begin{theorem} [$r$-round strong subset queries] For every $s \geq 2$, $k \geq 1$, and $r \geq 1$, there is a deterministic $r$-round algorithm for $k$-partition learning using 
\[
O\left(\max\Big(\frac{1}{s^2} \cdot n^{(1+\frac{1}{2^r-1})}k^{(1-\frac{1}{2^r-1})}, \frac{n}{s} \Big)\right)
\]
$s$-bounded strong subset queries and any $r$-round deterministic algorithm for this task must use 
\[
\Omega\left(\max\Big(\frac{1}{r} \cdot\frac{1}{s^2} \cdot n^{(1+\frac{1}{2^r-1})}k^{(1-\frac{1}{2^r-1})}, \frac{n}{s} \Big)\right) 
\]
$s$-bounded strong subset queries.
\end{theorem}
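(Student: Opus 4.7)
The proof naturally splits into the upper bound algorithm and the matching lower bound, and both directions pivot on the simulation fact that one $s$-bounded strong subset query on $S$ returns $\partitionq(S,\cP)$, which encodes the answers to every one of the $\binom{|S|}{2}$ pairwise same-set queries among pairs of $S$. In particular, any $r$-round strong subset query algorithm using $q$ queries can be simulated by an $r$-round pairwise same-set query algorithm using $O(q s^2)$ queries that reaches the identical output.

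For the lower bound I would combine two independent arguments and take the maximum. The first uses the simulation above: if $q$ strong queries of size at most $s$ suffice to learn any $k$-partition in $r$ rounds, then $O(q s^2)$ pairwise same-set queries do as well, and Theorem 1.2 then forces $q s^2 = \Omega\bigl( \frac{1}{r} n^{1+1/(2^r-1)} k^{1-1/(2^r-1)} \bigr)$. The second is an information-theoretic argument: there are at least $k^{n-k}$ distinct $k$-partitions of $[n]$ (assign each of $n-k$ designated elements independently to one of $k$ classes), so distinguishing them requires $\Omega(n \log k)$ bits in the worst case, whereas one $s$-bounded strong subset query carries at most $O(s \log k)$ bits once its returned blocks are identified with the global class labels. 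Dividing the information demand by the per-query bandwidth yields $q = \Omega(n/s)$.

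For the upper bound I would mirror the recursive design behind Theorem 1.1 and replace each batch of within-group pair queries by a cover of that group by subsets of size $s$, each resolved by a single strong subset query; this substitution saves a $\Theta(s^2)$ factor and produces the claimed $\frac{1}{s^2} n^{1+1/(2^r-1)} k^{1-1/(2^r-1)}$ upper bound, with the base case $r=1$ supplied directly by Theorem 1.4. To cover the complementary regime in which $n/s$ is the active term, I would run in parallel a direct deterministic scheme that grows a set $R$ of representatives for the currently-known classes and, in each query, combines a fresh block of unclassified elements with as many representatives as the size budget permits; a careful schedule spends only $O(n/s)$ queries. The final algorithm runs whichever of the two strategies is cheaper for the given parameters $(n,k,s,r)$, and both are deterministic so the combined algorithm is too.

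The main technical obstacle is the upper bound construction: one must verify that in each round of the recursive pair algorithm the queries organize into cliques of size at least $s$ so that the simulation achieves the full $s^2$ savings rather than a strictly smaller factor. Fortunately, the recursion underlying Theorem 1.1 only shrinks its working groups across rounds and not within them, so inside any single round one may pack its within-group pair queries into disjoint size-$s$ blocks whenever the $\frac{1}{s^2} n^{1+1/(2^r-1)}k^{1-1/(2^r-1)}$ term is the dominant one; that is precisely the regime in which we invoke the simulation. The lower bound is comparatively clean because its two components are independent and both already apply against deterministic algorithms without additional machinery.
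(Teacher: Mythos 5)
Your proposal follows essentially the paper's own route. The lower bound is identical: each $s$-bounded strong query is simulated by $\binom{s}{2}$ pair queries within the same round, so \Cref{thm:LR-pair-LB} transfers with a $1/s^2$ loss, and the $\Omega(n/s)$ term comes from counting responses ($\le k^s$ per query) against the $k^{\Omega(n)}$ candidate partitions. Your upper bound for the dominant term is also the paper's: run the recursion behind \Cref{thm:LR-pair-UB} with the within-group all-pairs step replaced by the non-adaptive covering design of \Cref{thm:NA-strong-UB} (querying unions of size-$s/2$ blocks), which is exactly what \Cref{alg:low-round-subset-strong} does; your observation that one should cover each group by size-$s$ sets rather than simulate individual pair queries is precisely how the paper realizes the full $s^2$ saving.

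One caveat concerns your auxiliary scheme for the $O(n/s)$ regime: sequentially growing a representative set $R$ and padding each query with representatives is not an $r$-round algorithm as described, since each query depends on the classes discovered by earlier ones and the schedule can consume $\Theta(n/s)$ rounds. The paper does not use a separate scheme there: its formal result (\Cref{thm:LR-strong}) is stated for $s \leq n^{1/(2^r-1)}k^{1-1/(2^r-1)}$, where the first term of the max dominates and coincides with $n/s$ at the threshold, so the recursion alone suffices; it also caps the query size in the recursive call ($s' = \min(s,\,|R|^{\eps(r-1)}k^{1-\eps(r-1)})$) and does a short case analysis to keep the call valid, a detail your sketch omits but which is routine. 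If you want the $O(n/s)$ bound for larger $s$ in $r$ rounds, the cleaner fix is to keep the same recursion with full-size queries (disjoint size-$s$ blocks in round one, then recurse on representatives) rather than the sequential representative-growing procedure.
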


Again, our bounds are tight for every constant $r$. Note that the query complexity of strong and weak subset queries is the same up to logarithmic factors when $s \leq \sqrt{n^{\frac{1}{2^r-1}}k^{1-\frac{1}{2^r-1}}}$, i.e. until the information-theoretic lower bound is reached for weak subset queries. However, strong subset queries require $\widetilde{\Omega}(s)$ bits of communication by the oracle and so the total communication is never less for strong query algorithms. Refer to \Fig{strongvsweak} for a visual comparison of strong vs. weak subset queries in terms of $s$ for non-adaptive and fully adaptive algorithms.

\begin{figure}
    \hspace*{0cm}
    \includegraphics[scale=0.50]{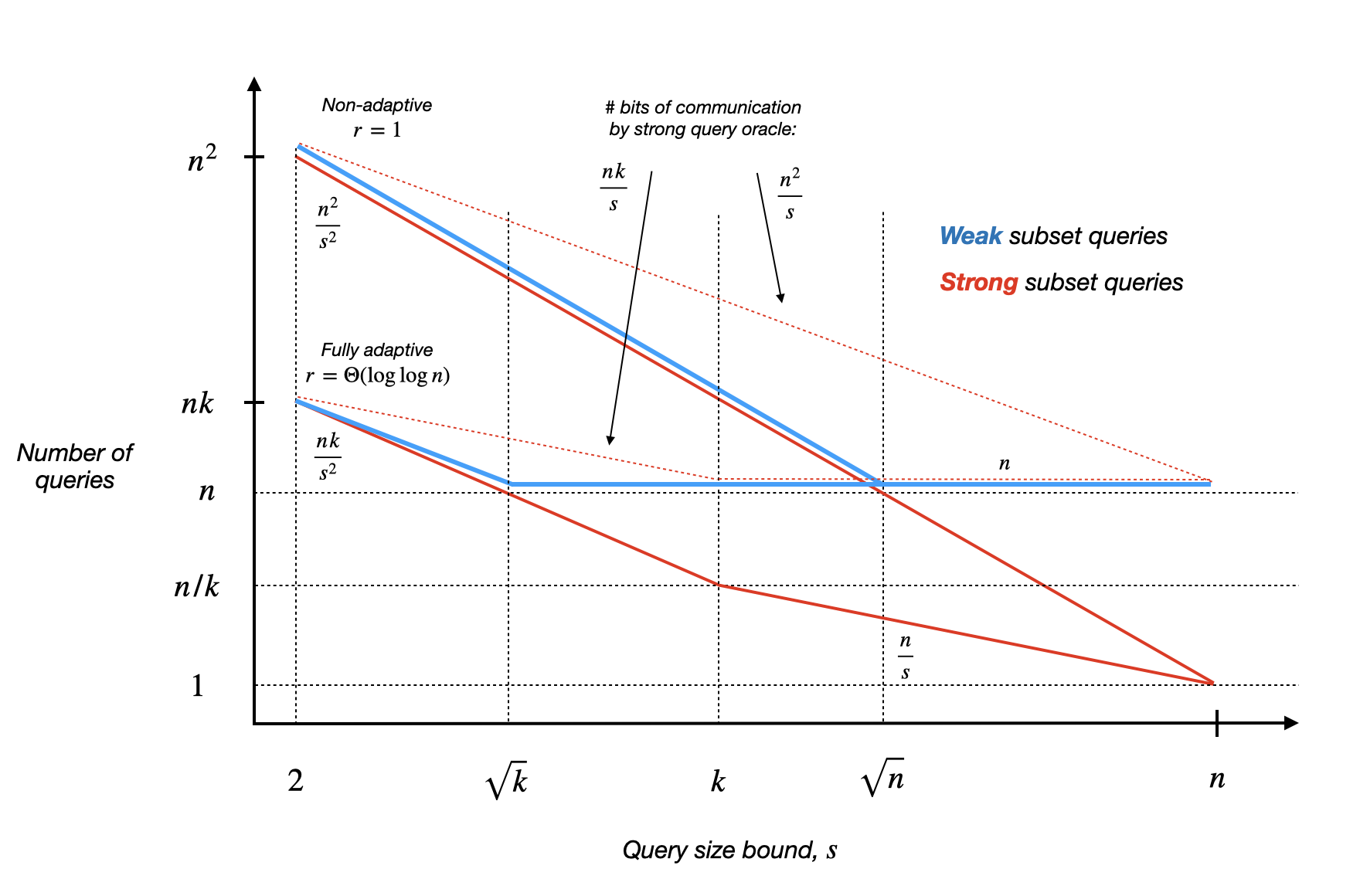}
    \caption{\small{A comparison of weak vs. strong subset queries for non-adaptive and fully adaptive algorithms as a function of the allowed query size, $s$, ignoring $\poly \log(n)$ factors. For the purposes of the diagram we have used $k \leq \sqrt{n}$, but note in general there is no such restriction on $k$. Our results reveal that in the relatively small $s$ regime ($s \leq \sqrt{n}$ for non-adaptive and $s \leq \sqrt{k}$ for adaptive), strong subset queries give no advantage over weak subset queries, which we find to be very surprising. Moreover, answering a weak query requires only $O(\log k)$ bits of communication by the oracle whereas answering a strong query requires $O(s \log k)$ bits of communication. Thus, in the weak query model the total communication follows the query complexity (the blue lines in the diagram), while in the strong query model the total communication is larger than the query complexity by a factor $O(s \log k)$ (pictured by the dotted red lines in the diagram). Thus, in terms of total communication, the weak query model is superior to the strong query model for all values of $s$. For clarity, we have chosen to picture the case of $r=1$ and $r=\Theta(\log \log n)$, but for any intermediate value of $r$, the query complexities follow a similar shape, lying in between these two cases. In general, the query complexities in the strong and weak models are the same for $s$ up to $\sqrt{n^{\eps(r)}k^{1-\eps(r)}}$, at which point the information-theoretic lower bound kicks in for weak queries. The strong query complexity continues to improve at the same rate up to $s = n^{\eps(r)}k^{1-\eps(r)}$, where the query complexity becomes $(n/k)^{1-\eps(r)}$, after which the information theoretic lower bound of $n/s$ kicks in. }} 
    \label{fig:strongvsweak}
\end{figure}

\subsection{Additional Related Work} \label{sec:related-work}


Learning partitions with pairwise same-set queries is also sometimes cast in terms of learning the connected components of a graph where a query detects whether there exists a path between two vertices \cite{RS07,LM22}. Reconstruction of graphs from \emph{edge-detection queries} has also been studied, e.g. \cite{AB19}. The problem is also closely related to the well-studied problems of edge-sign prediction \cite{leskovec2010predicting, mitzenmacher2016predicting}, correlation clustering \cite{BBC04, ACN08, CMSY15, saha2019correlation, CCLLNV24}, and the stochastic block model \cite{abbe2018community, mukherjee2024recovering}. 

Our results for learning partitions with subset queries also fall into the area of \emph{combinatorial search} \cite{A88,DH00}, which studies query-based algorithms for reconstructing combinatorial objects. This field has been around since the 1960's, dating back to early works studying the \emph{coin-weighing problem} \cite{cantor1966determination,Lindstrom65}, i.e. reconstruction of Boolean vectors from additive queries, and \emph{group testing} \cite{HS87,du2000combinatorial, DH00,PR08,mazumdar2016nonadaptive,flodin2021probabilistic}). These results have been used as basic primitives to design optimal algorithms for graph reconstruction using additive and cross-additive queries (also commonly referred to as CUT queries) \cite{GK00,RS07,choi2008optimal,BM11b,BM11a,choi2013polynomial}. A special case of graph reconstruction that is closely related to our work is that of reconstructing matchings \cite{GK00,ABKRS04,AA05}, and in fact this connection was exploited by \cite{CL24} to obtain their optimal adaptive algorithm for partition learning with subset queries. The partition learning problem with weak subset queries can be viewed as learning a \emph{hyper-matching} using CUT queries \cite{CL24}. Beyond reconstruction problems, there is a growing body of work that studies query-based algorithms for various graph problems, such as computing minimum cuts \cite{rubinstein2017computing,ApersEGLMN22} and computing connectivity and spanning forests \cite{AL21,ApersEGLMN22,ChakrabartyL23,C24}.

\section{Lower Bound for Pair Queries} \label{sec:LB}




In this section we prove our lower bound on $r$-round deterministic algorithms for learning $k$-partitions, \Cref{thm:LR-pair-LB}. As a warm-up, we begin by sketching an $\Omega(n^{1+\frac{1}{2^r-1}})$ lower bound, and then proceed to the main proof. At a high level, our approach is to view queries as edges of a graph, and exploit independent sets in this graph to construct hard instances, i.e. a pair of partitions that the algorithm fails to distinguish. To find large independent sets we use Tur\'{a}n's theorem, which we prove in \Cref{sec:deferred-proofs} for completeness.

\begin{theorem} [Tur\'{a}n's Theorem] \label{thm:turan} Let $G = (V,E)$ be an undirected graph with $n$ vertices and average degree $d_G$. Then $G$ contains an independent set of size at least $\frac{n}{1+d_{G}}$. \end{theorem}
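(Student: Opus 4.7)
The plan is to use a probabilistic argument in the spirit of Caro and Wei, followed by a single application of Jensen's inequality to convert a per-vertex bound into one depending only on $d_G$. First I would sample a uniformly random permutation $\pi$ of $V$, and define $I_\pi \eqdef \{v \in V : \pi(v) < \pi(u) \text{ for every } u \in N(v)\}$, i.e., the set of vertices that precede all of their neighbors in $\pi$. Observe that $I_\pi$ is automatically independent: for any edge $\{u,v\}$, one endpoint comes later in $\pi$ than the other, and that later endpoint is excluded from $I_\pi$ by definition.

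Next, I would compute $\EX[|I_\pi|]$. For each vertex $v$, the event $v \in I_\pi$ depends only on the relative order in $\pi$ of the $d(v)+1$ vertices of $\{v\} \cup N(v)$, and by symmetry $v$ is the minimum of these with probability exactly $1/(d(v)+1)$. Linearity of expectation then gives $\EX[|I_\pi|] = \sum_{v \in V} 1/(d(v)+1)$, so there exists an independent set of at least that size --- this is already the Caro--Wei bound.

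To pass to the stated bound in terms of the average degree, I would invoke Jensen's inequality applied to the convex function $f(x) = 1/(1+x)$ on $[0,\infty)$: averaging over $v$ yields $\frac{1}{n}\sum_{v \in V} \frac{1}{1+d(v)} \geq \frac{1}{1 + \frac{1}{n}\sum_{v \in V} d(v)} = \frac{1}{1+d_G}$, and multiplying through by $n$ produces the desired lower bound of $n/(1+d_G)$.

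There is no real obstacle here, since this is a short classical argument; the only point that requires care is applying Jensen in the correct direction, which is legitimate because $f(x) = 1/(1+x)$ is indeed convex on the nonnegative reals (its second derivative is $2/(1+x)^3 \geq 0$). A fully deterministic alternative would be the greedy scheme that repeatedly adds a minimum-degree vertex and removes it together with its neighborhood, but the probabilistic derivation above is more transparent and derandomizes cleanly via the method of conditional expectations if one wants an explicit construction.
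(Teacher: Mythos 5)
Your proposal is correct and follows essentially the same argument as the paper: a random ordering of the vertices, the observation that each $v$ precedes all of its neighbors with probability $1/(1+d_G(v))$, linearity of expectation, and Jensen's inequality applied to the convex function $1/(1+x)$ to pass from the Caro--Wei bound to the average-degree bound. No gaps; the only cosmetic difference is that the paper phrases the independent set via a greedy construction along the ordering, which contains your set of "local minima" and yields the same estimate.
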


\paragraph{Warm-up: an $\Omega(n^{1+\frac{1}{2^r-1}})$ using Tur\'{a}n's Theorem.} Let $r \leq \frac{1}{10} \log \log n$ and consider any $r$-round deterministic algorithm $\cA$. For brevity, we use $\eps(r) := \frac{1}{2^r-1}$ for all $r \geq 1$.

We will inductively construct a pair of $k$-partitions that $\cA$ fails to distinguish. The base case is when $r=1$. For this, suppose $\cA$ makes strictly fewer than ${n \choose 2}$ queries. Then there is a pair of points $x,y \in U$ for which $(x,y)$ is not queried. Then the $3$-partitions $(U \setminus \{x,y\},\{x,y\})$ and $(U \setminus \{x,y\},\{x\},\{y\})$ are not distinguished and this completes the base case. (I.e. the oracle returns the same answer for the two partitions on every query made by $\cA$.) 

Now, let $r \geq 2$ and $k \geq r+2$ and suppose for the sake of contradiction that $\cA$ makes fewer than $\frac{1}{3} \cdot n^{1+\eps(r)}$ queries in the first round. Let $G = (U,E)$ be the graph on $U$ whose edge-set is given by this set of queries. The average degree in this graph is at most $\frac{1}{3} \cdot n^{\eps(r)}$ and so by Tur\'{a}n's theorem, $G$ has an independent set $Z$ of size $|Z| \geq \frac{n}{1 + \frac{1}{3} \cdot n^{\eps(r)}} \geq n^{1-\eps(r)}$ by our upper bound assumption on $r$. In particular, in the first round, $\cA$ makes \emph{no queries whatsoever} in $Z$. Let $\cA_{r-1,Z}$ denote the algorithm using the remaining $r-1$ rounds of $\cA$ restricted on $Z$. Construct a pair of partitions by letting $U \setminus Z$ be a set in each, and within $Z$ inductively define the hard pair of partitions for $(r-1)$-round algorithms for learning a $(k-1)$-partition. By induction $\cA_{r-1,Z}$ must make at least
\[
\frac{1}{3} \cdot |Z|^{1+\eps(r-1)} \geq \frac{1}{3} \cdot \left(n^{1-\eps(r)}\right)^{1+\eps(r-1)} = \frac{1}{3} \cdot n^{\left(1-\eps(r)\right)\left(1+\eps(r-1)\right)} = \frac{1}{3} \cdot n^{1 + \eps(r)}
\]
queries, where in the last step we used item (1) of \Cref{clm:eps(r)}. Thus, we have a contradiction and this completes the proof. \\



We now show how to strengthen the above argument to obtain an additional dependence on $k$, and ultimately prove \Cref{thm:LR-pair-LB}. First, it is not too hard to see that one can repeatedly apply Tur\'{a}n's theorem to obtain a collection of disjoint independent sets, giving the following Corollary, whose proof we defer to \Cref{sec:repeated-Turan-proof}. 

\begin{corollary} [Repeated Tur\'{a}n's Theorem] \label{cor:repeated-turan} Let $G = (V,E)$ be an undirected graph with $n$ vertices and $m \geq n$ edges. Let $N \leq n^2/8 m$ and $\ell \leq n/2N$. Then, $G$ contains $\ell$ disjoint independent sets, each of size $N$. \end{corollary}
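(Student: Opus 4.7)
The plan is to prove the corollary by iteratively invoking Turán's theorem: at each step I will locate an independent set of size $N$ in the current subgraph, then delete its vertices and repeat, for a total of $\ell$ rounds. Since the sets produced this way come from disjoint vertex sets, this immediately yields $\ell$ pairwise disjoint independent sets.

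First I would set up the induction. After $j$ iterations (for $j \in \{0,1,\dots,\ell-1\}$), the surviving subgraph $G_j$ has $n_j := n - jN$ vertices and at most $m$ edges, since vertex deletion can only remove edges. The hypothesis $\ell \leq n/(2N)$ gives $jN \leq (\ell-1)N \leq n/2 - N$ whenever $j \leq \ell-1$, so $n_j \geq n/2 + N$ (and in particular $n_j > n/2$) throughout all iterations. This is the crucial invariant: the subgraph stays large enough relative to the fixed edge budget $m$ so that Turán remains powerful.

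Next I would apply \Cref{thm:turan} to $G_j$: its average degree is at most $2m/n_j$, so there is an independent set of size at least $n_j / (1 + 2m/n_j) = n_j^2/(n_j + 2m)$. The heart of the matter is verifying that this quantity is at least $N$, i.e.\ $n_j(n_j - N) \geq 2Nm$. Using $n_j - N \geq n/2$ together with $n_j \geq n/2$, the left-hand side is at least $n^2/4$, while the hypothesis $N \leq n^2/(8m)$ gives $2Nm \leq n^2/4$, closing the inequality. So we can select an independent set of size $N$ from $G_j$, remove it, and continue to iteration $j+1$.

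The main (and essentially only) obstacle is threading the numeric constants carefully so that Turán keeps producing an $N$-sized independent set in every one of the $\ell$ rounds. The hypotheses $\ell \leq n/(2N)$ and $N \leq n^2/(8m)$ are precisely what is needed to keep $n_j = \Omega(n)$ and $2Nm \leq n^2/4$, respectively, throughout the process; no other ingredients are needed beyond \Cref{thm:turan}.
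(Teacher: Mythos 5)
Your proposal is correct and follows essentially the same route as the paper: iteratively apply Tur\'{a}n's theorem, remove the $N$-vertex independent set found, and use $\ell \leq n/2N$ to keep the surviving subgraph on at least about $n/2$ vertices with at most $m$ edges so that Tur\'{a}n keeps yielding sets of size $N$. Your bookkeeping via $n_j(n_j-N) \geq n^2/4 \geq 2Nm$ is in fact slightly tighter than the paper's (it never invokes the hypothesis $m \geq n$, which the paper uses to absorb the $+1$ in the Tur\'{a}n bound), but the underlying argument is the same.
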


\subsection{Proof Overview} \label{sec:LB-overview}

We now informally describe the proof of \Cref{thm:LR-pair-LB}. Let $\cA$ be an arbitrary deterministic $r$-round algorithm making $\ll n^{1+\eps(r)}(k/r)^{1-\eps(r)}$ queries, where $Q_t$ denotes the set of queries made in the $t$-th round. Note that $Q_1$ is a fixed, pre-determined set, but for $t \geq 2$, $Q_t$ depends on the oracle's responses to the queries in $Q_1 \cup \cdots \cup Q_{t-1}$. We need to show that there exists a pair of $k$-partitions that $\cA$ does not distinguish. To accomplish this we first show that there is a single partition $\cP$ such that after running the first $r-1$ rounds of $\cA$ on $\cP$, there still exists a "large" set $S$ (the size of $S$ will be $\approx \sqrt{n^{1+\eps(r)}(k/r)^{1-\eps(r)}}$) such that for \emph{every query $(u,v) \in Q_1 \cup \cdots \cup Q_{r-1}$ that touches $S$} (at least one of $u,v$ belong to $S$), $u$ and $v$ belong to different sets in $\cP$. Now, we can take any un-queried pair $(x,y) \in {S \choose 2}$ and define $\cP_{x,y}^{(1)}$, $\cP_{x,y}^{(2)}$ which modify $\cP$ by either making $\{x\}, \{y\}$ two separate sets or making $\{x,y\}$ a single set (see \Fig{LB2}). Crucially, this is \emph{well-defined} because all queried pairs involving $x$ or $y$ span different sets in $\cP$. I.e. the oracle's responses on all queries in the first $r-1$ rounds are consistent between the partitions $\cP$ and $\cP_{x,y}^{(b)}$. Finally, to distinguish $\cP_{x,y}^{(1)}$, $\cP_{x,y}^{(2)}$ the final round of queries $Q_r$ must contain the pair $(x,y)$. Thus, we can conclude $Q_r$ must contain every pair in ${S \choose 2} \setminus (Q_1 \cup \cdots \cup Q_{r-1})$, for otherwise there is some pair $\cP_{x,y}^{(1)}$, $\cP_{x,y}^{(2)}$ that is not distinguished by the algorithm. This shows that we must have $|Q_r| \approx n^{1+\eps(r)}(k/r)^{1-\eps(r)}$, contradicting the assumed upper bound on the number of queries.

Now, the main effort in the proof is in constructing the partition $\cP$ and the set $S$ described above. This is accomplished by inductively constructing a sequence of partitions $\cP_1,\cP_2,\ldots,\cP_{r-1}$ and sets $S^{(1)},S^{(2)},\ldots,S^{(r-1)}$ and taking $\cP := \cP_{r-1}$ and $S := S^{(r-1)}$. Essentially, the property we want each pair $(\cP_t, S^{(t)})$ in the sequence to have is what is described in the previous paragraph: every query made in the first $t$ rounds which touches $S^{(t)}$ is spanning different sets in $\cP_t$. This is done in the following manner. Invoke the repeated Tur\'{a}n theorem on $G(U,Q_1)$ to obtain $\ell \approx k/r$ disjoint independent sets $S^{(1)}_1,\ldots,S^{(1)}_{\ell}$ and let $S^{(1)}$ denote their union. Define $\cP_1 = \{U \setminus S^{(1)}, S^{(1)}_1,\ldots,S^{(1)}_{\ell}\}$ and observe that $(\cP_1,S^{(1)})$ has the desired property for the first round. Now, given $(\cP_{t-1},S^{(t-1)})$ with the desired property for the first $t-1$ rounds, we again invoke the repeated Tur\'{a}n theorem, but this time on $G(S^{(t-1)},Q_1 \cup \cdots \cup Q_{t-1})$ to obtain disjoint independent sets $S^{(t)}_1,\ldots,S^{(t)}_{\ell}$ and let $S^{(t)} \subseteq S^{(t-1)}$ denote their union. Now, starting from $\cP_{t-1}$, we construct $\cP_t$ which will define the oracle's response to the $t$-th round queries and which has the desired property for the first $t$ rounds. Each round introduces $\ell \approx k/r$ new sets into the partition and so at the end we have $\approx k$ sets. A subtle aspect of the argument is that one must be very careful to ensure that the oracle's responses on the first $t-1$ rounds of queries are consistent between $\cP_t$ and $\cP_{t-1}$. This is because the set $Q_t$ is determined by the oracle's responses to $Q_1 \cup \cdots \cup Q_{t-1}$ on $\cP_{t-1}$ which then define $\cP_t$. Thus, without this consistency property, this process would be ill-defined. (See \Fig{LB1} for an accompanying illustration of this argument.)

\begin{figure}
    \hspace*{2cm}
    \includegraphics[scale=0.45]{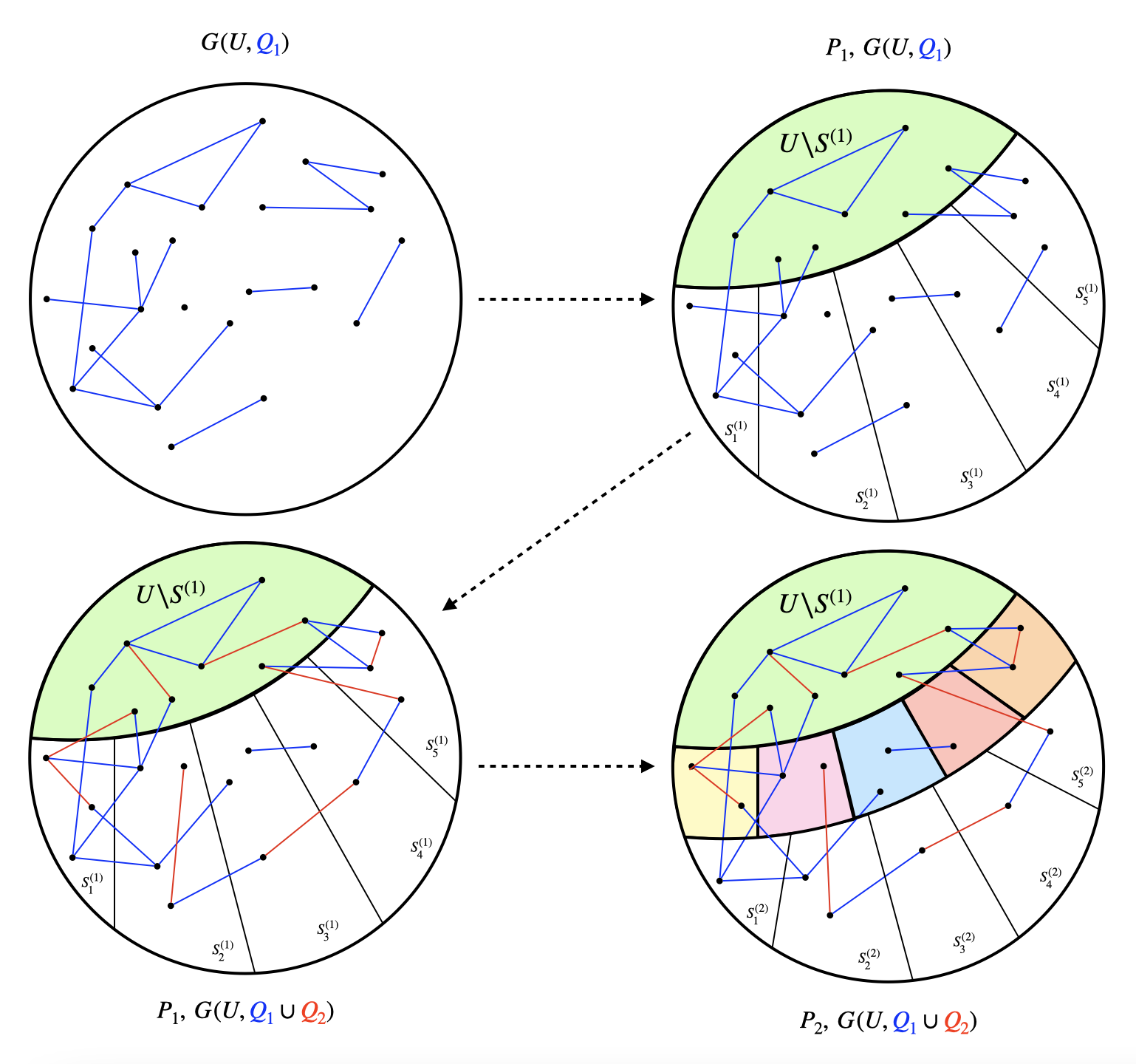}
    \caption{\small{An illustration depicting the construction of $\cP_1,S^{(1)}$ and $\cP_2,S^{(2)}$ in the proof of \Cref{thm:LR-pair-LB}.} The top-left shows the graph whose edges are the first round queries, $Q_1$ (blue edges). Then, \Cref{lem:carving} is applied which uses Tur\'{a}n's theorem to find $\ell$ (in the picture $\ell = 5$) independent sets $S^{(1)} = S^{(1)}_1 \sqcup \cdots \sqcup S^{(1)}_5 \subset U$ with respect to $Q_1$. The partition $\cP_1$ is defined based on these independent sets (top-right). Then, based on the oracle's responses to $Q_1$, the second round queries, $Q_2$, arrive (red edges, bottom-left). Again, \Cref{lem:carving} is applied to find $\ell$ independent sets $S^{(2)} = S^{(2)}_1 \sqcup \cdots \sqcup S^{(2)}_5 \subset S^{(1)}$ with respect to $Q_1 \cup Q_2$ and the partition $\cP_2$ is defined (bottom-right). The (non-green) shaded regions in the bottom-right represent the sets $S^{(1)}_j \setminus S^{(2)}$ for each $j \in [\ell]$. The construction repeats in this way for $r-1$ rounds. A shaded region depicts a set in the partition which is fixed for the remainder of the construction, e.g. the green region is a set in $\cP_1$, and remains a set in $\cP_2,\cP_3$, etc. A white region depicts a set in the partition which may be fragmented when a later partition is defined.} 
    \label{fig:LB1}
\end{figure}

\begin{figure}
    \hspace*{4cm}
    \includegraphics[scale=0.3]{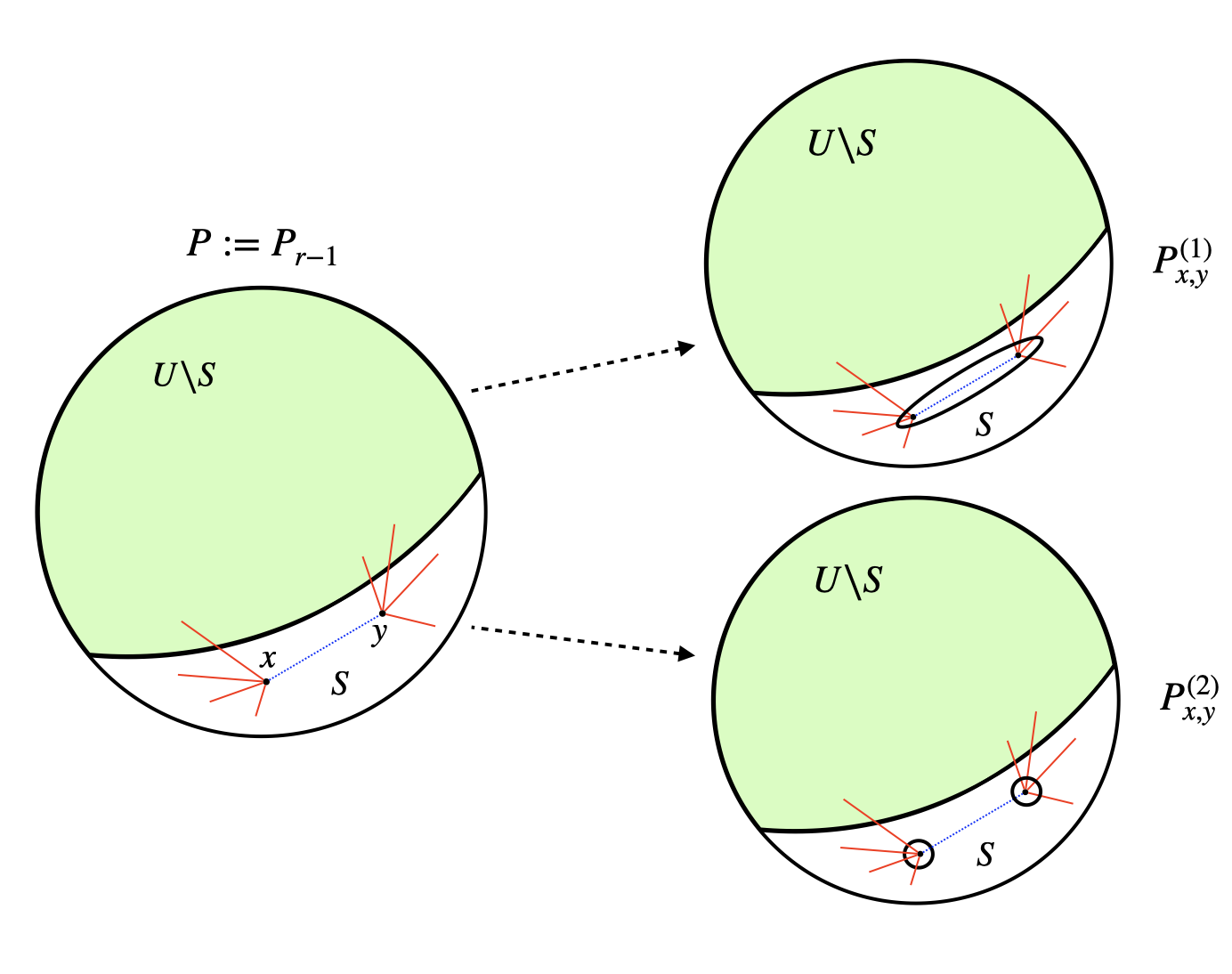}
    \caption{\small{An illustration depicting the final pair of partitions $\cP^{(1)}_{x,y}$ and $\cP^{(2)}_{x,y}$ which the $r$-round algorithm fails to distinguish.} The set $S$ is depicted in white. Note that we have not tried to depict the partition within the sets $S$ and $U \setminus S$. The point is that the partition $\cP = \cP_{r-1}$ has been defined a such a way that every queried pair $(u,v) \in Q_1 \cup \cdots \cup Q_{r-1}$ that touches $S$ is such that $u,v$ belong to different sets in $\cP$. In particular, any queried pair that touches $x$ or $y$ is given query response "no" under $\cP$, which is consistent with the partitions $\cP^{(1)}_{x,y}$ and $\cP^{(2)}_{x,y}$. Thus, these final partitions are well-defined and allow us to the lower bound the final round of queries, $Q_r$.} 
    \label{fig:LB2}
\end{figure}

\subsection{Proof of \Cref{thm:LR-pair-LB}}

For convenience throughout the proof we will use $\eps(r) := \frac{1}{2^r-1}$ for every $r \geq 1$ and $\ell := \floor{\frac{k-3}{r-1}}$ which satisfies $\ell \geq 1$ since $r \leq k-2$ by assumption.

Consider an arbitrary $r$-round deterministic algorithm and for each $t \in \{1,2,\ldots,r\}$, let $Q_t$ denote the queries made in round $t$. Note that for $t \geq 2$, this set depends on the query responses on the previous queries, $Q_1 \cup \cdots \cup Q_{t-1}$. We will assume that $|Q_1 \cup \cdots \cup Q_{r-1}| \leq \frac{n \ell}{100} (n/2\ell)^{\eps(r)}$ (since otherwise the theorem already holds for this algorithm) and using this assumption we will argue that  $|Q_r| \geq \Omega(n\ell (n/\ell)^{\eps(r)})$. 

Additionally, we will assume that $|Q_1| \geq n$ and hence $|Q_1 \cup \cdots \cup Q_t| \geq n$ for all $t \geq 1$. This is without loss of generality since increasing $|Q_1|$ can only help the algorithm. Finally, we assume that $n > C \ell$ for a sufficiently large constant $C$. 


First, $|Q_1| \leq \frac{n \ell}{100} (n/2\ell)^{\eps(r)}$ and so we can apply the repeated Tur\'{a}n theorem (\Cref{cor:repeated-turan}) on the graph $G(U,Q_1)$ to obtain $\ell$ disjoint independent sets of size
\[
N_1 := \left\lfloor{\left(\frac{n}{2\ell}\right) \left(\frac{2\ell}{n}\right)^{\eps(r)}} \right\rfloor \leq \frac{n^2}{2n \ell (n/2\ell)^{\eps(r)}} < \frac{n^2}{8 \cdot \frac{n \ell}{100} (n/2\ell)^{\eps(r)}} \leq \frac{n^2}{8|Q_1|} \text{.}
\]


\paragraph{Construction of $\cP_1$ and $S^{(1)}$.} Let $S^{(1)}_1,\ldots,S^{(1)}_{\ell}$ be the resulting independent sets in $G(U,Q_1)$, which are each of size at least $N_1$. Let $S^{(1)}$ denote their union and note that $|S^{(1)}| = \ell N_1$. We now define a partition $\cP_1$ which will fix the oracle's response to each query in $Q_1$ as follows:
\begin{itemize}\setlength\itemsep{0em}
    \item $U \setminus S^{(1)}$ is a set in $\cP_1$.
    \item Each $S^{(1)}_i$ is a set in $\cP_1$.
\end{itemize}


All queries in $Q_1$ with both endpoints in $U \setminus S^{(1)}$ are fixed to "yes" and all others are fixed to "no" (since each $S^{(1)}_j$ is an independent set in $G(U,Q_1)$). Given these oracle responses on $Q_1$, the second round of queries $Q_2$ is now determined. Our goal is now to repeat this idea iteratively over the first $r-1$ rounds. That is, given the $t$-th round of queries $Q_t$, we wish to construct $\cP_t$ which fixes the oracle's responses on $Q_t$, and which is consistent with $\cP_{t-1}$ on all previous queries so that this process is well-defined. The following lemma is our main tool for performing a single iteration of this process. We prove this lemma in \Cref{sec:carve-proof}.

\begin{lemma} [Carving Lemma] \label{lem:carving} Suppose we have a set of queries $Q \subseteq {U \choose 2}$, $\ell \geq 1$ disjoint independent sets $S_1,\ldots,S_{\ell}$ in $G(U,Q)$, and a partition $\cP$ of $U$ such that $S_i \in \cP$ for all $i \in [\ell]$. Then, given another set of queries $Q' \subseteq {U \choose 2}$, and integers $N \leq \frac{|S|^2}{8|Q \cup Q'|}$ and $\ell' \leq |S|/2N$, there exist $\ell'$ disjoint independent sets $S_1',\ldots,S_{\ell'}'$ each of size $N$ in $G(S,Q \cup Q')$ and a partition $\cP'$ such that $|\cP'| \leq |\cP| + \ell'$ and the following hold:
\begin{enumerate}\setlength\itemsep{0em}
    \item \emph{(Inductive Property)} $S_j' \in \cP'$ for every $i \in [\ell']$.
    \item \emph{(Query Consistency)} For every $\{x,y\} \in Q$, $\same(x,y,\cP) = \same(x,y,\cP')$.
\end{enumerate}    
\end{lemma}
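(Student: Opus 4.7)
The plan is to prove the Carving Lemma in two steps: first, apply the Repeated Tur\'{a}n Theorem (\Cref{cor:repeated-turan}) to extract the $\ell'$ new independent sets inside $S := S_1 \cup \cdots \cup S_\ell$ with respect to the enlarged query set $Q \cup Q'$; then, surgically refine $\cP$ within $S$ in order to promote these new independent sets to parts of $\cP'$, leaving the restriction of $\cP$ to $U \setminus S$ untouched.

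For the first step, consider the induced subgraph $G(S, Q \cup Q')$ on vertex set $S$, which has $|S|$ vertices and at most $|Q \cup Q'|$ edges. The hypotheses $N \leq |S|^2/(8|Q \cup Q'|)$ and $\ell' \leq |S|/(2N)$ are exactly what \Cref{cor:repeated-turan} requires of this subgraph, so it yields the desired pairwise-disjoint independent sets $S_1', \ldots, S_{\ell'}' \subseteq S$, each of size $N$, in $G(S, Q \cup Q')$.

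For the second step, define $T_i := S_i \setminus \bigcup_{j=1}^{\ell'} S_j'$ for each $i \in [\ell]$, and take $\cP'$ to consist of: every part of $\cP$ contained in $U \setminus S$ (there are $|\cP| - \ell$ such parts, since the remaining $\ell$ parts of $\cP$ are exactly $S_1,\ldots,S_\ell$); every non-empty $T_i$ (at most $\ell$ parts); and every $S_j'$ ($\ell'$ parts). The $S_j'$'s are pairwise disjoint subsets of $S$ by construction, the $T_i$'s are pairwise disjoint and each disjoint from every $S_j'$, and together they cover $S$, so $\cP'$ is a valid partition of $U$ with $|\cP'| \leq (|\cP| - \ell) + \ell + \ell' = |\cP| + \ell'$. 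The inductive property $S_j' \in \cP'$ holds by construction.

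All that remains is query consistency. Fix $\{x,y\} \in Q$. Because each $S_i$ is independent in $G(U,Q)$, the endpoints $x$ and $y$ cannot both lie in a single $S_i$, leaving two cases. If $x,y \in U \setminus S$, both lie in their original $\cP$-part, which is preserved in $\cP'$, so $\same(x,y,\cP') = \same(x,y,\cP)$ trivially. Otherwise at least one endpoint lies in $S$ and the endpoints lie in two distinct parts of $\cP$, giving $\same(x,y,\cP) = \mathsf{no}$; each endpoint inside $S$ lies either in some $S_j'$ or in $T_i \subseteq S_i$, and the two endpoints still land in different parts of $\cP'$ \emph{provided} they do not both fall into a common $S_j'$. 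This last possibility is precisely what the stronger independence of $S_j'$ in $G(S, Q \cup Q')$ rules out, since $S_j'$ contains no edge of $Q$. This step is the one place where both independence conditions must be combined and is the main (though mild) obstacle of the argument; everything else is bookkeeping of parts.
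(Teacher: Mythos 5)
Your proposal is correct and follows essentially the same route as the paper's proof: apply the repeated Tur\'{a}n theorem to $G(S, Q\cup Q')$, form $\cP'$ by carving the new independent sets out of the old parts (your $T_i$'s together with the preserved parts are exactly the paper's sets $X \setminus S'$ for $X \in \cP$), and verify query consistency by the same case analysis, with the key case of both endpoints inside a common $S_j'$ ruled out by independence in $G(S,Q\cup Q')$. No substantive differences to report.
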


The idea is now to use \Cref{lem:carving} iteratively to construct a sequence of partitions $\cP_1,\cP_2,\ldots,\cP_{r-1}$ as follows. Crucially, item (2) ensures that this process is well-defined, as we will see. In what follows, for every $t \in \{1,2,\ldots,r-1\}$ define 
\begin{align} \label{eq:Nt}
    N_t := \left\lfloor \left(\frac{n}{2\ell}\right) \cdot \left(\frac{2\ell}{n}\right)^{\frac{\eps(r)}{\eps(t)}} \right\rfloor > \left(\frac{n}{3\ell}\right) \cdot \left(\frac{2\ell}{n}\right)^{\frac{\eps(r)}{\eps(t)}}
\end{align} 
where the inequality holds since $n > c\ell$ is sufficiently large and $\frac{\eps(r)}{\eps(t)} \leq \frac{\eps(r)}{\eps(r-1)} \leq 1/2$. (Note that this definition of $N_t$ is consistent with the definition of $N_1$ above, since $\eps(1) = 1$.)

\begin{lemma} \label{lem:hard-construction} For every $t \in \{1,2,\ldots,r-1\}$, there exists a set $S^{(t)} = S^{(t)}_1 \sqcup \cdots \sqcup S^{(t)}_\ell$ of size $|S^{(t)}| = \ell N_t$ and a partition $\cP_t$ of $U$ of size $|\cP_t| \leq t \ell + 1$ such that the following hold.
\begin{enumerate}\setlength\itemsep{0em}
    \item \emph{(IS Property)} For all $j \in [\ell]$, $S^{(t)}_j \in \cP_t$ and is an independent set in $G(S^{(t-1)},Q_1 \cup \cdots \cup Q_t)$.
    \item \emph{(Query Consistency)} For every $\{x,y\} \in Q_1 \cup \cdots \cup Q_{t-1}$, $\same(x,y,\cP_{t-1}) = \same(x,y,\cP_t)$.
\end{enumerate}
\end{lemma}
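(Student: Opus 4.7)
The plan is to prove \Lem{hard-construction} by induction on $t$, using the Carving Lemma (\Lem{carving}) as the engine of each inductive step.

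\textbf{Base case $(t=1)$.} The pair $(\cP_1, S^{(1)})$ was already built explicitly in the paragraphs preceding \Lem{carving}: one application of \Cor{repeated-turan} to $G(U,Q_1)$ yields $\ell$ disjoint independent sets $S^{(1)}_1,\ldots,S^{(1)}_\ell$ of size at least $N_1$ (shrink each to exactly $N_1$), and the partition $\cP_1 = \{\,U \setminus S^{(1)},\, S^{(1)}_1, \ldots, S^{(1)}_\ell\,\}$ has $|\cP_1| = \ell+1$. Each $S^{(1)}_j$ is a part of $\cP_1$ and an independent set in $G(U, Q_1)$, and query consistency is vacuous.

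\textbf{Inductive step.} Assume $(\cP_{t-1}, S^{(t-1)})$ satisfies both properties. Because of the query consistency guarantee at every earlier stage, the oracle responses to $Q_1 \cup \cdots \cup Q_{t-1}$ under $\cP_{t-1}$ agree with the responses under $\cP_{t'}$ for every $t' \leq t-1$, so the next round of queries $Q_t$ is a well-defined function of $\cP_{t-1}$ — this is the one subtle point that makes the whole iterative construction well-posed. I then invoke \Lem{carving} with $Q := Q_1 \cup \cdots \cup Q_{t-1}$, $Q' := Q_t$, the independent sets $S_j := S^{(t-1)}_j$ (which lie in $\cP_{t-1}$ and are independent in $G(U, Q)$ by IH), partition $\cP := \cP_{t-1}$, and parameters $N := N_t$, $\ell' := \ell$. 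The conclusion gives $\ell$ disjoint independent sets $S^{(t)}_1,\ldots,S^{(t)}_\ell$ of size $N_t$ in $G(S^{(t-1)}, Q_1 \cup \cdots \cup Q_t)$, a partition $\cP_t$ with $|\cP_t| \leq |\cP_{t-1}| + \ell \leq t\ell + 1$ containing each $S^{(t)}_j$, and query consistency with $\cP_{t-1}$ on $Q_1 \cup \cdots \cup Q_{t-1}$. Setting $S^{(t)} := \bigsqcup_j S^{(t)}_j$ closes the induction.

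\textbf{The main obstacle} is verifying the two numerical hypotheses of \Lem{carving}, namely
\[
N_t \;\leq\; \frac{|S^{(t-1)}|^2}{8\,|Q_1 \cup \cdots \cup Q_t|}, \qquad \ell \;\leq\; \frac{|S^{(t-1)}|}{2 N_t}.
\]
I would combine three ingredients: the global assumption $|Q_1 \cup \cdots \cup Q_{r-1}| \leq \frac{n\ell}{100}(n/2\ell)^{\eps(r)}$; the lower bound $|S^{(t-1)}| = \ell N_{t-1} > (n/3)(2\ell/n)^{\eps(r)/\eps(t-1)}$ supplied by \Eqn{Nt}; and the algebraic identity
\[
\frac{2\,\eps(r)}{\eps(t-1)} \;=\; \frac{\eps(r)}{\eps(t)} - \eps(r),
\]
which follows directly from $\eps(t) = 1/(2^t-1)$. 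Squaring the size lower bound and dividing through then produces the first inequality with a constant factor of slack. The second inequality reduces to $N_t \leq N_{t-1}/2$, which follows from the monotonicity $\eps(r)/\eps(t) > \eps(r)/\eps(t-1)$ and $2\ell/n < 1$. This bookkeeping — calibrating $N_t$ in \Eqn{Nt} so that both hypotheses carry simultaneous slack across all $t \in \{1,\ldots,r-1\}$ — is the only non-trivial step; everything else is a direct application of the Carving Lemma.
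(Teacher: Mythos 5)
Your proposal is correct and follows the paper's own proof essentially verbatim: the same induction on $t$ with the base case taken from the construction preceding the Carving Lemma, the same invocation of \Cref{lem:carving} with $Q = Q_1 \cup \cdots \cup Q_{t-1}$, $Q' = Q_t$, $\cP = \cP_{t-1}$, $N = N_t$, $\ell' = \ell$, the same emphasis on query consistency making $Q_t$ well-defined, and the same exponent identity ($2(2^{t-1}-1)+1 = 2^t-1$) driving the verification of the hypothesis $N_t \leq |S^{(t-1)}|^2 / (8|Q_1 \cup \cdots \cup Q_t|)$. The one soft spot is your justification of the second hypothesis: monotonicity of the exponent together with $2\ell/n < 1$ literally yields only $N_t \leq N_{t-1}$ rather than $N_t \leq N_{t-1}/2$, but the paper treats this point with the same brevity (it only records $\ell N_t < \ell N_{t-1} = |S^{(t-1)}|$), so your write-up matches the paper's argument in both substance and level of detail.
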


\begin{proof} The base case of $t=1$ is established in the paragraphs preceding the statement of \Cref{lem:carving}. I.e., we have $S^{(1)} = S^{(1)}_1 \sqcup \cdots \sqcup S^{(1)}_{\ell}$ of size $|S^{(1)}| = \ell N_1$, and we have a partition $\cP_1$ of size $|\cP_1| = \ell + 1$ such that for each $j \in [\ell]$, $S_j^{(1)} \in \cP_1$ and is an independent set in $G(U,Q_1)$.

Now let $2 \leq t \leq r-1$. Let $S^{(t-1)} = S^{(t-1)}_1 \sqcup \cdots \sqcup S^{(t-1)}_\ell$ and $\cP_{t-1}$ be the sets and partition given by induction. In particular, the query responses on $\cP_{t-1}$ now determine the $t$-th round queries $Q_t$. We now construct a partition $\cP_t$ which will define the query responses in $Q_t$. As mentioned, the responses on the previous queries $Q_1 \cup \cdots \cup Q_{t-1}$ must be consistent between $\cP_t$ and $\cP_{t-1}$ so that this process is well-defined. (Otherwise, we would contradict the definition of the sets $Q_2,\ldots,Q_{t-1}$.) See Fig. \ref{fig:LB1} in the appendix for an accompanying illustration.


We will invoke \Cref{lem:carving} with $Q = Q_1 \cup \cdots \cup Q_{t-1}$ and $Q' = Q_t$. By induction $|S^{(t-1)}| \geq \frac{n}{3} (2\ell/n)^{\frac{\eps(r)}{\eps(t-1)}}$ and recall that by assumption we have $|Q_1 \cup \cdots \cup Q_t| \leq \frac{n \ell}{100}(n/2\ell)^{\eps(r)}$. Thus, using our bound on $N_{t-1}$ from \cref{eq:Nt} we have 
\begin{align} \label{eq:bound}
    \frac{|S^{(t-1)}|^2}{8|Q_1 \cup \cdots \cup Q_t|} \geq \frac{\ell^2 N_{t-1}^2}{\frac{8n\ell}{100} (n/2\ell)^{\eps(r)}}  &\geq \frac{n^2(2\ell/n)^{\frac{2\eps(r)}{\eps(t-1)}}}{\frac{72 n\ell}{100}(n/2\ell)^{\eps(r)}} \nonumber \\
    &\geq \frac{25n}{18 \ell} \cdot \left(\frac{2\ell}{n}\right)^{\frac{2\eps(r)}{\eps(t-1)}+\eps(r)} = \frac{25n}{18 \ell} \cdot \left(\frac{2\ell}{n}\right)^{\frac{\eps(r)}{\eps(t)}} > N_t
\end{align}
where the equality holds since $2(2^{t-1}-1) + 1 = 2^t - 1$. Note also that $\ell N_t < \ell N_{t-1} = |S^{(t-1)}|$. Using this observation and \cref{eq:bound}, we can invoke \Cref{lem:carving} with $N := N_t$ and $\ell' := \ell$. This yields $\ell$ disjoint independent sets $S^{(t)} := S^{(t)}_1 \sqcup \cdots \sqcup S^{(t)}_\ell$ in $G(S^{t-1}, Q_1 \cup \cdots \cup Q_t)$ and a partition $\cP_t$ such that (a) for every $j \in [\ell]$, we have $S^{(t)}_j \in \cP_t$ (by item 1 of \Cref{lem:carving}) and (b) for every $\{x,y\} \in Q_1 \cup \cdots \cup Q_{t-1}$, $\same(x,y, \cP_{t-1}) = \same(x,y, \cP_t)$ (by item 2 of \Cref{lem:carving}). Furthermore, we also have $|\cP_t| \leq |\cP_{t-1}| + \ell \leq t\ell + 1$, using the guarantee of \Cref{lem:carving} and the inductive hypothesis. This completes the proof of \Cref{lem:hard-construction}. \end{proof}

Now, invoking \Cref{lem:hard-construction} with $t := r-1$ shows that there exists a partition $\cP := \cP_{r-1}$ and a set $S := S^{(r-1)}$ such that $S = S_1 \sqcup \cdots \sqcup S_{\ell}$ and for all $j \in [\ell]$, $S_j \in \cP$ and is an independent set in $G(U,Q_1 \cup \cdots \cup Q_{r-1})$. Also note that $|\cP| \leq (r-1)\ell + 1 \leq k-2$ using the definition of $\ell$. Moreover, by \Cref{lem:hard-construction} and the bound on $N_{r-1}$ from \cref{eq:Nt}, we have
\(
|S| = \ell N_{r-1} \geq \frac{n}{3} (2\ell/n)^{\frac{2^{r-1}-1}{2^r-1}} \geq \frac{1}{3} \sqrt{n\ell (n/\ell)^{\eps(r)}}
\)
where the last inequality holds since
\(
\frac{2^{r-1}-1}{2^r-1} = \frac{1}{2} \cdot \frac{2^{r}-2}{2^r-1} = \frac{1}{2} \left( 1 - \frac{1}{2^r-1} \right) \text{.}
\)
This means that the number of pairs in $S$ is at least ${|S| \choose 2} \geq \frac{n\ell}{10} (n/\ell)^{\eps(r)}$. Thus, the number of un-queried pairs in $S$ is at least 
\begin{align} \label{eq:|A|}
    \left|{S \choose 2} \setminus (Q_1 \cup \cdots \cup Q_{r-1})\right| \geq \left(\frac{1}{10} - \frac{1}{100}\right)n\ell (n/\ell)^{\eps(r)} > \frac{n\ell}{12} (n/\ell)^{\eps(r)}
\end{align}
by our assumed upper bound on $|Q_1 \cup \cdots \cup Q_{r-1}|$. Let $A := {S \choose 2} \setminus (Q_1 \cup \cdots \cup Q_{r-1})$ denote this set of pairs in $S$ not queried in the first $r-1$ rounds. For every such pair $(x,y) \in A$ we define two partitions $\cP_{x,y}^{(1)}$ and $\cP_{x,y}^{(2)}$ as follows. (See Fig. \ref{fig:LB2} in the appendix for an accompanying illustration.)
\begin{itemize}\setlength\itemsep{0em}
    \item In $\cP_{x,y}^{(1)}$, $\{x,y\}$ form one set of size $2$ and in $\cP_{x,y}^{(2)}$, $\{x\}$ and $\{y\}$ each form one singleton set.
    \item In both $\cP_{x,y}^{(1)}$ and $\cP_{x,y}^{(2)}$, all other points are consistent with the partition $\cP$. Formally for every $X \in \cP$, $X \setminus \{x,y\}$ is a set in both $\cP_{x,y}^{(1)}$ and $\cP_{x,y}^{(2)}$.
\end{itemize}
Clearly, we have $|\cP_{x,y}^{(b)}| \leq |\cP| + 2 \leq k$.

First, for this to be well-defined, we need to argue that for every query $(u,v) \in Q_1 \cup \cdots \cup Q_{r-1}$, the responses given on $\cP$ and $\cP_{x,y}^{(b)}$ (for either $b \in \{1,2\}$) are consistent. Let $(u,v)$ be an arbitrary such query. Clearly if $\{u,v\} \cap \{x,y\} = \emptyset$, then $\same(u,v,\cP) = \same(u,v,\cP_{x,y}^{(b)})$. Also, note that $\{u,v\} \neq \{x,y\}$ since $(x,y) \in A$. The remaining case is when $|\{u,v\} \cap \{x,y\}| = 1$. Without loss of generality, suppose $u = x$. First, if $v \notin S$, then clearly $\same(x,v,\cP) = \same(x,v,\cP_{x,y}^{(b)}) = \mathsf{no}$ (recall that $x \in S$ by definition of the pair $(x,y)$). Now, suppose that $v \in S$. The point is that since $S = S_1 \sqcup \cdots \sqcup S_k$ where each $S_i$ is an independent set in $G(U,Q_1,\ldots,Q_{r-1})$, we must have that $x,v$ lie in different $S_i$-s, and therefore $\same(x,v,\cP) = \same(x,v,\cP_{x,y}^{(b)}) = \mathsf{no}$ (recall that each $S_i \in \cP$). Thus, all queries in the first $r-1$ rounds are consistent on $\cP$ and $\cP_{x,y}^{(b)}$. Thus, $\cP_{x,y}^{(1)}$ and $\cP_{x,y}^{(2)}$ are well-defined.

Now, we will argue that it must be the case that $Q_r \supseteq A$. Suppose not, and so there is some $(x,y) \in A$ where $(x,y) \notin Q_r$. We will argue that the algorithm cannot distinguish $\cP_{x,y}^{(1)}$ and $\cP_{x,y}^{(2)}$. Consider any pair $(u,v) \neq (x,y)$ and observe that $\same(u,v,\cP_{x,y}^{(1)}) = \same(u,v,\cP_{x,y}^{(2)})$ by construction of $\cP_{x,y}^{(1)}$ and $\cP_{x,y}^{(2)}$. Thus, since $(x,y) \notin Q_r$ and $(x,y) \notin Q_1,\ldots,Q_{r-1}$, the algorithm does not distinguish these two partitions. 

This implies that $A \subseteq Q_r$ and consequently $|Q_r| \geq \frac{n\ell}{12} (n/\ell)^{\eps(r)}$ by \cref{eq:|A|}. This completes the proof of \Cref{thm:LR-pair-LB}. 





\subsection{Deferred Proofs} \label{sec:deferred-proofs}

\subsubsection{Tur\'{a}n's Theorem} \label{sec:Turan-proof}

\begin{proofof}{\Cref{thm:turan}} Pick a random ordering $\pi$ on the vertices and construct an independent set $I_{\pi}$ greedily according to $\pi$. I.e., iterating over $i \in [n]$, the $\pi(i)$-th vertex is added to $I_{\pi}$ iff it has no neighbors in $I_{\pi}$. Then, $v$ appears in $I_{\pi}$ if it comes before all of its neighbors in $\pi$, which occurs with probability exactly $1/(1+d_G(v))$. Thus,
\begin{align}
    \mathbb{E}_{\pi}[|I_\pi|] = \sum_{v \in V} \frac{1}{1+d_G(v)} = n \cdot \mathbb{E}_{v \in V}\left[\frac{1}{1+d_G(v)}\right] \text{.}
\end{align}
The function $1/(1+x)$ for $x \geq 0$ is convex and so by Jensen's inequality
\[
\mathbb{E}_{v \in V}\left[\frac{1}{1+d_G(v)}\right] \geq \frac{1}{1+\mathbb{E}_{v \in V}[d_G(v)]} = \frac{1}{1+d_G}
\]
and this completes the proof. \end{proofof}

\subsubsection{Repeated Tur\'{a}n's Theorem} \label{sec:repeated-Turan-proof}

\begin{proofof}{\Cref{cor:repeated-turan}} The average degree in $G$ is $2m/n$, and so by Tur\'{a}n's \Cref{thm:turan}, it contains an independent set of size at least $\frac{n}{(2m/n)+1} \geq \frac{n^2}{4m}$, where here we have used $m \geq n$. Thus, we can take $S_1$ to be an independent set of size $N \leq \frac{n^2}{8m}$. Now, suppose we have constructed $1 \leq t < \ell$ disjoint independent sets $S_1,\ldots,S_t$, each of size $N$. Let $G_t = G[U \setminus (S_1 \cup \cdots \cup S_t)]$ denote the induced subgraph obtained by removing these independent sets. The number of edges in $G_t$ is clearly still at most $m$ and the number of vertices is at least 
\[
n - t N \geq n - \ell N \geq n/2
\]
since $t < \ell \leq n/2N$. Thus, the average degree in $G_t$ is at most $2m/(n/2) \leq 4m/n$, and again by Tur\'{a}n's \Cref{thm:turan} we get an independent set in $G_t$ of size at least $\frac{n}{(4m/n)+1} \geq \frac{n^2}{8m}$ and in particular we can take $S_{t+1}$ to be an independent set of size $N$. This completes the proof. \end{proofof}

\subsubsection{The Carving Lemma} \label{sec:carve-proof}


\begin{proofof}{\Cref{lem:carving}} First, we apply the repeated Tur\'{a}n Theorem (\Cref{cor:repeated-turan}) in $G(S,Q \cup Q')$ to obtain the independent sets $S_1',\ldots,S_{\ell'}'$ each of size $N$ and let $S'$ denote their union. Note that this is possible by the assumed bounds on $N$ and $\ell'$. We now define the partition $\cP'$ as follows: (a) each $S_i' \in \cP'$ is a set of the partition, and (b) for each $X \in \cP$, we make $X \setminus S' \in \cP'$ a set in the partition. Note that $\cP'$ clearly is a partition of $U$, $|\cP'| \leq |\cP| + \ell'$, and item (1) holds by construction.

We now prove that item (2) holds. Consider any $\{x,y\} \in Q$. We break into cases depending on where $x,y$ lie. Note that by our assumption about $S$ and $\cP$ and the construction of $\cP'$, for every set $X \in \cP \setminus \{S_1,\ldots,S_{\ell}\}$, we also have $X \in \cP'$, i.e. these sets are preserved. Thus, if $x,y \in U \setminus S$, then clearly $\same(x,y,\cP) = \same(x,y,\cP')$. This also implies that if exactly one of $x$ or $y$ lie in $U \setminus S$, then $\same(x,y,\cP)] = \same(x,y,\cP') = \mathsf{no}$. 

The remaining case to consider is when both $x,y \in S$. Now recall that $S = S_1 \sqcup \cdots \sqcup S_{\ell}$ where each $S_i$ is an independent set in $G(U,Q)$. In particular, this means that $x,y$ lie in \emph{different} $S_i$-s. Let $x \in S_i, y \in S_j$ where $i \neq j$. This means that $\same(x,y,\cP) = \mathsf{no}$. Now, note that $S_i \setminus S' \in \cP'$ and $S_j \setminus S' \in \cP'$ where (a) $S' = S_1' \sqcup \cdots \sqcup S_{\ell'}'$, (b) each $S_i'$ is an independent set in $G(S,Q \cup Q')$, and (c) each $S_i' \in \cP'$ is a set in $\cP'$. In particular, if both, or exactly one, of $x,y$ lie in $S \setminus S'$, then we clearly have $\same(x,y,\cP') = \mathsf{no}$. Finally, if both $x,y \in S'$, then by (b) we must have that $x,y$ are in different $S_i'$-s, implying that $\same(x,y,\cP')] = \mathsf{no}$. This completes the proof. \end{proofof}


\section{Low-Round Algorithm using Pair Queries} \label{sec:LR-pair}

In this section we prove \Cref{thm:LR-pair-UB}, obtaining an $r$-round deterministic algorithm for learning a $k$-partition which interpolates between the $\Theta(n^2)$ non-adaptive query complexity and the $\Theta(nk)$ fully adaptive query complexity. 
We use a simple recursive strategy described in pseudocode in \Alg{low-round-pair}: divide $U$ into subproblems (line 6), compute the restricted partition in each part by (non-adaptively) querying every pair (line 8), and then recurse on a set $R$ formed by taking exactly one representative from each set in each of the restricted partitions (lines 9-11). For the base case, simply use the trivial strategy of querying all pairs. \\

\begin{proofof}{\Cref{thm:LR-pair-UB}} For shorthand in the proof, we will use $\eps(r) = \frac{1}{2^{r}-1}$ for all $r \geq 1$. The algorithm is recursive and pseudocode is given in \Alg{low-round-pair}. We prove the theorem by induction on $r$. For the base case of $r=1$ (lines (2-4)), we simply make all ${n \choose 2}$ possible pairwise queries in $U$. The partition can trivially be recovered using this set of queries. Moreover, the number of queries is at most the desired bound since $\eps(1) = 1$.

\begin{algorithm}[ht]
\caption{$\LRPQ(U,r)$\label{alg:low-round-pair}} 
\textbf{Input:} Pair query access to hidden partition $\cC$ over $U$ with $n$ points and $|\cC| \leq k$. An allowed number of rounds $r$\; 
\If{$r = 1$ \textbf{or} $n \leq 16k$}{ 
    \textbf{Query} every pair $(x,y) \in {U \choose 2}$ and \textbf{return} the computed partition\; 
} \Else{

Partition the $n$ points of $U$ arbitrarily into at most $\ell := \ceil{\frac{1}{3}\left(\frac{n}{k}\right)^{1-\frac{1}{2^r-1}}}$ sets $U_1,\ldots,U_{\ell}$ each of size $|U_i| \leq t := \ceil{3n^{\frac{1}{2^r-1}} k^{1-\frac{1}{2^r-1}}}$\;

\For{$i\in [\ell]$} {
\textbf{Query} every pair $(x,y) \in {U_i \choose 2}$ to learn the partition $\cC_i = \{C \cap U_i \colon C \in \cC \}$\;
Form $R_i$ by taking exactly one representative from each set $C \in \cC_i$\;
}
Let $R = R_1 \cup \cdots \cup R_{\ell}$, call $\LRPQ(R,r-1)$, and let $\cC_R$ be the returned partition of $R$\;
\textbf{Return} the partition $\{\bigcup_{C' \in \cC_1 \cup \cdots \cup \cC_{\ell} \colon C' \cap C \neq \emptyset} C' \colon C \in \cC_R\}$\;
}
\end{algorithm}

Now suppose $r \geq 2$. First, if $n \leq 16k$, then we simply make all pair-wise queries in $U$ (line 3), for a total of ${n \choose 2} \leq \frac{n^2}{2} \leq 8nk \leq 8n^{1+\eps(r)}k^{1-\eps(r)}$ queries, where the last step simply used $n \geq k$. Note that in this case the algorithm is non-adaptive and the partition is trivially recovered. This completes the proof for the case of $n \leq 16k$.

Now suppose that $n > 16k$. Let us first establish correctness of the algorithm. First, we take a partition $U = U_1 \sqcup \cdots \sqcup U_{\ell}$ and then within each $U_i$ we make all pair-wise queries and trivially recover the partition restricted in $U_i$, i.e. $\cC_i = \{C \cap U_i \colon C \in \cC\}$. Next, our goal is to merge these partitions to recover $\cC$. To do so, we form a set $R$ containing one representative from every set in $\cC_i$ for every $i \in [\ell]$ (lines 9 and 11), and recursively learn the partition restricted on $R$, i.e. $\cC_R = \{C \cap R \colon C \in \cC\}$ (line 11). By induction this correctly computes $\cC_R$ and allows us to compute the final partition by merging all sets $C' \in \cC_1,\ldots,\cC_{\ell}$ which intersect the same set $C \in \cC_R$. This completes the proof of correctness.

We now complete the proof of the claimed query complexity. Recall that we are in the case of $r \geq 2$ and $n > 16k$. Since $r \geq 2$, note that $1-\eps(r) \geq 1-\eps(2) \geq 2/3$. Using these bounds, we have $(n/k)^{1-\eps(r)} > 16^{2/3} > 6$. Recalling the definition of $\ell$ and $t$ in line (6), this implies that
\(
\ell \leq \frac{1}{3}\left(\frac{n}{k}\right)^{1-\eps(r)} + 1 = \frac{1}{2}\left(\frac{n}{k}\right)^{1-\eps(r)} - \frac{1}{6}\left(\frac{n}{k}\right)^{1-\eps(r)} + 1 < \frac{1}{2}\left(\frac{n}{k}\right)^{1-\eps(r)}
\)
and clearly $t \leq 4n^{\eps(r)}k^{1-\eps(r)}$. Thus, the first round (line 8) makes at most
\begin{align} \label{eq:round_1}
     \ell \cdot {t \choose 2} < \frac{1}{2}\left(\frac{n}{k}\right)^{1-\eps(r)} \cdot 8n^{2\eps(r)} k^{2(1-\eps(r))} \leq 4n^{1+\eps(r)} k^{1-\eps(r)}
\end{align}
queries. Then, the resulting set $R$ defined in line (11) is of size
\(
|R| \leq k \cdot \frac{1}{2} \left(\frac{n}{k}\right)^{1-\eps(r)} = \frac{1}{2} \cdot n^{1-\eps(r)} k^{\eps(r)}\text{.}
\)
By induction, the recursive call to $\LRPQ(R,r-1)$ in line (11) then costs 
\begin{align} \label{eq:round_r-1}
    8\left(\frac{1}{2} \cdot n^{1-\eps(r)} k^{\eps(r)}\right)^{1+\eps(r-1)} k^{1-\eps(r-1)} \leq 4n^{(1-\eps(r))(1+\eps(r-1))} k^{\eps(r)(1+\eps(r-1))+(1-\eps(r-1))}
\end{align}
queries at most. To understand the exponents in the RHS we need the following claim about $\eps(r)$. 

\begin{claim} \label{clm:eps(r)} For all $r\geq 1$, let $\eps(r) = \frac{1}{2^r-1}$. The following hold for all $r \geq 2$:
\begin{enumerate}
    \item $(1-\eps(r))(1+\eps(r-1)) = 1+\eps(r)$.
    \item $\eps(r)(1+\eps(r-1))+(1-\eps(r-1)) = 1-\eps(r)$.
\end{enumerate}
\end{claim}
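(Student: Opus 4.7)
The plan is to verify both identities by direct algebraic substitution from the definition $\eps(r) = 1/(2^r - 1)$. Both statements are purely arithmetic identities in powers of two, so no conceptual obstacle arises; the entire argument reduces to careful bookkeeping with geometric factors, and I expect each line to be a one-step manipulation. Because the rest of the paper relies on these two identities being exact (they drive the key exponent relations $(1-\eps(r))(1+\eps(r-1)) = 1+\eps(r)$ used in bounding the recursive round-complexity in \cref{eq:round_r-1}), I will prove them separately.

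For item (1), I will write $1 - \eps(r) = (2^r - 2)/(2^r - 1) = 2(2^{r-1} - 1)/(2^r - 1)$ and $1 + \eps(r-1) = 2^{r-1}/(2^{r-1} - 1)$. Multiplying these two expressions produces an immediate cancellation of the factor $2^{r-1} - 1$, leaving $2 \cdot 2^{r-1}/(2^r - 1) = 2^r/(2^r - 1) = 1 + \eps(r)$, which is exactly what item (1) asserts.

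For item (2), I will first expand and collect, so that the identity becomes equivalent to the single relation $2\eps(r) = \eps(r-1)\bigl(1 - \eps(r)\bigr)$. This last identity can be checked by substituting the definitions on the right: $\eps(r-1)(1 - \eps(r)) = \tfrac{1}{2^{r-1} - 1} \cdot \tfrac{2(2^{r-1} - 1)}{2^r - 1} = \tfrac{2}{2^r - 1} = 2\eps(r)$. If desired one can instead place both summands of the left-hand side of item (2) over the common denominator $(2^r - 1)(2^{r-1} - 1)$ and simplify the resulting numerator using the factorization $x^2 - 2x + 1 = (x-1)^2$ with $x = 2^{r-1}$, which collapses to $2(2^{r-1}-1)^2$ and again yields $(2^r - 2)/(2^r - 1) = 1 - \eps(r)$. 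The only minor point to watch is the boundary case $r = 2$, where $\eps(r-1) = \eps(1) = 1$ makes $1 - \eps(r-1)$ vanish; both identities should be confirmed directly at $r = 2$ as a sanity check, giving $2/3$ on both sides.
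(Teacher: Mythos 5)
Your proof is correct and follows essentially the same route as the paper: item (1) by the same factorization and cancellation of $2^{r-1}-1$, and item (2) by rearranging into an equivalent relation between $\eps(r)$ and $\eps(r-1)$ (your $2\eps(r)=\eps(r-1)(1-\eps(r))$ is just a rewriting of the paper's $\eps(r)=\eps(r-1)/(2+\eps(r-1))$) and verifying it by substitution. The $r=2$ sanity check is fine but unnecessary, since the general computation involves no vanishing denominators for $r\geq 2$ (and note item (1) at $r=2$ gives $4/3$, not $2/3$).
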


\begin{proof} To see that item (1) holds, observe that 
\begin{align*}
    (1-\eps(r))(1+\eps(r-1)) = \frac{2^r-2}{2^r-1} \cdot \frac{2^{r-1}}{2^{r-1}-1} = 2 \cdot \frac{2^{r-1}-1}{2^r-1} \cdot \frac{2^{r-1}}{2^{r-1}-1} = \frac{2^r}{2^r -1} = 1+\eps(r)\text{.}
\end{align*}
To see that item (2) holds, observe that the statement is equivalent to the identity $\eps(r) = \frac{\eps(r-1)}{2 + \eps(r-1)}$. This identity holds since 
\begin{align}
    \frac{\eps(r-1)}{2 + \eps(r-1)} = \frac{1}{(2^{r-1}-1)(2 + \frac{1}{2^{r-1}-1})} = \frac{1}{2(2^{r-1}-1)+1} = \frac{1}{2^r - 1} = \eps(r) 
\end{align}
and this completes the proof. \end{proof}    

Now, by \Cref{clm:eps(r)} the RHS of \cref{eq:round_r-1} is equal to $4n^{1+\eps(r)}k^{1-\eps(r)}$. Combining this with the bound \cref{eq:round_1} on the number of queries in the first round shows that $\LRPQ$ makes at most $8n^{1+\eps(r)}k^{1-\eps(r)}$ queries. This completes the proof of \Cref{thm:LR-pair-UB}. \end{proofof}

\section{Weak Subset Queries} \label{sec:count}

In this section we provide a nearly-optimal randomized non-adaptive algorithm using weak 
subset queries. We then design a nearly-optimal $r$-round algorithm  following the recursive algorithmic template for $r$-rounds described in \Alg{low-round-pair}, with weak and strong subset queries (\Cref{sec:LR-count} and \Cref{sec:LR-partition}), where the trivial all-pair-query subroutine is replaced by the best non-adaptive algorithm for the respective query type. 




\begin{theorem}[Weak Subset Query Non-adaptive Algorithm] \label{thm:NA-bounded} There is a non-adaptive algorithm which, for any query size bound $2 \leq s \leq \sqrt{n}$ and error probability $\delta > 0$, learns an arbitrary partition on $n$ elements exactly using 
\[
O\left(\frac{n^2}{s^2} \log (n/\delta) + n \log^4 (n/\delta) \log s \right) 
\]
weak subset queries of size at most $s$, and succeeds with probability $1-\delta$. Observe that if $s \leq O(\frac{\sqrt{n}}{\log^2(n/\delta)})$, then the query complexity becomes $O(\frac{n^2}{s^2} \log (n/\delta))$. 
\end{theorem}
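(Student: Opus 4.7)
My plan splits the algorithm into two phases whose query costs add up to match the theorem. A ``heavy-hitter'' phase, implemented by a subroutine $\LS$, uses $O(n \log^4(n/\delta) \log s)$ weak subset queries to identify all parts of size at least roughly $s$ and recover their memberships. A subsequent ``block-pair'' phase learns the residual partition (in which every remaining part has size less than $s$) using $O((n^2/s^2) \log(n/\delta))$ weak subset queries. Because every non-adaptive algorithm designed in this way can be run in parallel, the overall algorithm remains non-adaptive.

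For $\LS$ I would use a non-adaptive group-testing backbone. Each element receives independent random labels at $O(\log s)$ levels of granularity (doubling the number of labels per level), and for each label class we query the class alone and in combination with a short calibrated reference set. A part of size at least $s$ is detectable because at the appropriate level it dominates a class, and the count response changes noticeably when its members are excluded. Repeating each level $O(\log^3(n/\delta))$ times amplifies the per-element success probability above $1 - \delta/(2n)$, which yields overall correctness with probability $1 - \delta/2$ and the claimed $O(n \log^4(n/\delta) \log s)$ budget.

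For the block-pair phase I would randomly partition the residual set $U'$ into $m = \Theta(|U'|/s)$ blocks $B_1,\ldots,B_m$ each of size at most $s/2$. For each $i$ we query $B_i$, and for each $i \neq j$ we query $B_i \cup B_j$, using $O((n/s)^2)$ weak queries per round. Letting $f$ denote the coloring induced by $\cP$, the quantity $|f(B_i)| + |f(B_j)| - |f(B_i \cup B_j)|$ equals the number of parts that intersect both $B_i$ and $B_j$. Repeating with $\Theta(\log(n/\delta))$ independent random block partitions amplifies these signals. Because every residual part has size at most $s$, each part is spread across only $O(1)$ blocks per round in expectation, so two same-part elements co-locate in a common block often enough to be linked, while two different-part elements are separated by some round producing distinguishing intersection counts.

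The main obstacle will be the probabilistic decoding analysis for the block-pair phase: I need to show that with probability at least $1-\delta/2$ the collection of block-level counts and block-pair intersection counts, aggregated across rounds, uniquely determines the residual partition. This requires a union bound over the $\binom{|U'|}{2}$ element pairs together with an anti-concentration estimate guaranteeing that for every pair of elements in distinct residual parts there exists at least one round in which the block-level signatures must disagree; the condition $s \leq \sqrt{n}$ should enter precisely to make the $\log(n/\delta)$ round count sufficient. A secondary subtlety is coupling the two phases so that the ``small residual parts'' precondition demanded by the block-pair analysis is delivered by $\LS$ with probability at least $1-\delta/2$, and that the two failure events combine to yield the overall $1-\delta$ success guarantee.
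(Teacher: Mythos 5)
There is a genuine gap, and it sits exactly where you flagged it: the decoding step of the block-pair phase. A weak query on a block or on a union of two blocks only returns counts, and the statistic $\countq(B_i)+\countq(B_j)-\countq(B_i\cup B_j)$ tells you \emph{how many} parts meet both blocks, not \emph{which} elements are matched. For the regime that matters, $s$ close to $\sqrt{n}$, every pair of blocks is expected to share on the order of $\sum_C |C|^2 (s/n)^2 \le s^3/n$ parts, which can be as large as $\Theta(\sqrt{n})$; the $\pm 1$ contribution of a single pair $\{x,y\}$ is therefore buried in a fluctuating count, and no union bound over pairs with $O(\log(n/\delta))$ independent rounds is known (or likely) to extract it. Your alternative linking mechanism, "same-part elements co-locate in a common block," fires with probability only about $s/n \le 1/\sqrt{n}$ per round, so with $O(\log(n/\delta))$ rounds a given same-part pair essentially never co-locates; and even when two elements do share a block, the block's count alone does not certify that they are in the same part. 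So the phase that is supposed to deliver the dominant $O((n^2/s^2)\log(n/\delta))$ term has no working reconstruction argument. The heavy-hitter phase has a secondary problem: your group-testing label classes at coarse granularity have size far exceeding $s$, so they cannot be queried under the size bound, and a part of size barely above $s$ never "dominates" a class of size at most $s$, so the detection signal as described does not exist.

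The paper's route avoids both issues by extracting \emph{exact pairwise information} from counts: its engine ($\LS$) queries random sets $I$ of size $\lfloor\sqrt{n/c}\rfloor$ and, using the already-known large parts $K$, checks whether $\countq(I)-\countq(I\cap K)=|I\setminus K|$; when this holds, $I\setminus K$ is an independent set and \emph{every} pair in it is certified to be separated, so after $O(cn\log(n^2/\delta))$ queries all separated pairs among the unknown (size-$\le c$) parts are certified and the parts are read off as connected components of the uncertified pairs. Large parts (size $\ge n/\log^2(n/\delta)$) are learned with size-$2$ queries from a polylog-size random seed set; medium parts are learned scale by scale by restricting to random sets $X_j\cup R$ on which the unknown parts shrink to size $O(\log(n/\delta))$, so $\LS$ applies cheaply with queries of size $\le s$; and the final $\LS$ call with $c=2n/s^2$ (note: the correct residual threshold is $2n/s^2$, not $s$) gives the $O((n^2/s^2)\log(n/\delta))$ term with queries of size $\le s$. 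If you want to salvage your plan, you would need to replace the aggregate block-pair counts by some mechanism that certifies individual pair separations under the size bound; that is precisely the missing idea.
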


We provide an overview of our proof in \Cref{sec:NA-subset-overview} and the full detailed proof in \Cref{sec:NA-bounded-proof}. Pseudocode for the algorithm is given in \Alg{opt-bounded}.





\subsection{Proof Overview of \Cref{thm:NA-bounded}} \label{sec:NA-subset-overview}

The algorithm proceeds by iteratively learning the sets of the partition from largest to smallest. This is divided into three phases in which we learn the "large" sets (size at least $n/\poly\log(n/\delta)$, lines 4-9), then the "medium" sets (size at most $n/\poly\log(n/\delta)$ and at least $n/s^2$, lines 10-22), and finally the "small" sets (size at most $n/s^2$, lines 23-25). The algorithm maintains $\tC$ containing the sets learned so far. 
To learn the medium and small sets it always exploits the known larger sets in $\tC$ to perform the reconstruction. Note however that the queries made by the algorithm never depend on $\tC$. In particular, all queries can be made in advance before any reconstruction is performed and so the algorithm is indeed non-adaptive.\footnote{One could alternatively present the pseudocode of \Alg{opt-bounded} so that all queries are made first before any reconstruction, making the non-adaptivity of the algorithm more transparent. However, we believe that presenting the query-selection and reconstruction process together makes the pseudocode much more intuitive and readable.}

The large sets are easiest to learn: sample a random set $R$ which is large enough to contain a representative from every large set with high probability, and then make all pairwise queries between $R$ and $U$. The point is that one only needs $|R| \approx \poly\log(n/\delta)$ for this to hold. To learn the medium and small sets we exploit a subroutine $\LS$ (see \Alg{sparse}) for learning partially reconstructed partitions (the subroutine is provided a collection of known sets) where every unknown set is sufficiently small. This procedure (see \Cref{lem:sparse}) learns all unknown sets with high probability when each unknown set has size at most $c$ using only $\widetilde{O}(cn)$ queries of size $\sqrt{n/c}$. Note this allows us to learn the "small" sets (of size at most $n/s^2$) using $\widetilde{O}(n^2/s^2)$ queries of size at most $s$ as desired. 

\paragraph{Learning the medium-sized sets.} The medium sets are too large to apply the subroutine $\LS$ directly and obtain the desired query complexity $\widetilde{O}(n^2/s^2)$. To circumvent this, the key idea is to sample smaller subsets $X \subset U$ and learn the partition restricted on each subset using $\LS$, and then piece these solutions together. This is done in the body of the while-loop (lines 12-21) during which the iteration corresponding to value $B$ tries to learn the remaining unknown sets of size at least $n/B$. 
To learn these sets, we sample a random "core" set $R$ of size $\widetilde{O}(B)$ (line 14) so that with high probability $R$ contains a representative from every unknown set of size at least $n/B$. Then, we sample $p = \widetilde{O}(n/B)$ random sets $X_1,\ldots,X_p$ of size $B$ (line 15). With high probability the following will hold: (i) the $X_j$-s cover $U$, (ii) $R$ contains a representative from every unknown set of size at least $n/B$, and (iii) for every unknown set $C$ (all of these are of size $\leq 2n/B$), we have $|C \cap (X_j \cup R)| \leq O(\log(n/\delta))$. By (i-ii) it suffices to learn the partition restricted on each $X_j \cup R$, and by (iii) this can be done efficiently using $\LS$ (\Cref{lem:sparse}). (See \Fig{sunflower} for an accompanying illustration.)

\begin{algorithm}[H]
\caption{$\NASQ(n,k,s,\delta)$\label{alg:opt-bounded}} 
\textbf{Input:} Subset query access to a hidden partition $\cC$ of $|U| = n$ points into at most $k$ sets. An error probability $\delta > 0$\;

\textbf{Output:} A partition $\widetilde{\cC}$ of $U$ which is equal to $\cC$ with probability $1-\delta$\;

Initialize hypothesis partition $\widetilde{\cC} \gets \emptyset$\;

\emph{(Learn the large sets)}\;

Sample a set $R \subset U$ of $\ln^2(n/\delta) \ln(k/\delta)$ i.i.d. uniform random elements\;

\For{$x \in R$, $y \in U$} {
\textbf{Query} $\{x,y\}$ and let $C_x = \{y \in U \colon \countq(\{x,y\}) = 1\}$ denote the set containing $x$\;

\textbf{If} $|C_x| \geq \frac{n}{\ln^2(n/\delta)}$, then $\widetilde{\cC} \gets \widetilde{\cC} \cup \{C_x\}$\;
}

$\backslash \backslash$ $\triangleright$ $\widetilde{\cC}$ now contains all sets in the partition of size at least $\frac{n}{\ln^2(n/\delta)}$, with probability $1-\delta$. $\backslash \backslash$

\emph{(Learn the medium sets)}\;
$B \gets 2\ln^2(n/\delta))$\;
\While{$B \leq s^2$}{
\emph{(Learn the unknown sets of size at least  $\frac{n}{B}$)}\;
Sample a set $R \subset U$ of $B \ln (6 B\log(s^2)/\delta)$ i.i.d. uniform random elements\;
Sample $p = \frac{n}{B} \ln (6n \log (s^2)/\delta)$ sets $X_1,\ldots,X_p$ each of $B$ i.i.d. uniform random elements\;
\For{$j \in [p]$}
{
Let $\cK_j = \{C \cap (X_j \cup R) \colon C \in \widetilde{C}\}$ denote the current known sets in partition restricted on $X_j \cup R$\; 
Run $\LS(\cK_j, 72 \ln (n/\delta),\delta/(2p \log(s^2)))$ on $X_j \cup R$ and let $G_j$ be the returned partition-graph on $(X_j \cup R) \setminus \bigcup_{K \in \cK_j} K$\;
$\backslash \backslash$ $\triangleright$ Note that the queries made by $\LS$ do not depend on the set $\widetilde{\cC}$. $\backslash \backslash$ 
}
Let $\widetilde{C}_1,\ldots,\widetilde{C}_{\ell}$ denote the connected components of the union $G = G_1 \cup \cdots \cup G_p$\;
Update $\widetilde{\cC} \gets \widetilde{\cC} \cup \{\widetilde{C}_j \colon j \in [\ell] \text{, } |\widetilde{C}_{j}| \geq n/B\}$ and $B \gets 2B$\;
}
$\backslash \backslash$ $\triangleright$ $\widetilde{\cC}$ now contains all sets in the partition of size at least $\frac{2n}{s^2}$ with probability $1-2\delta$. $\backslash \backslash$

\emph{(Learn the small sets)}\;
Run $\LS(\tC, 2n/s^2, \delta)$ and let $G$ denote the returned partition-graph on $U \setminus \bigcup_{C \in \widetilde{\cC}} C$\;

$\backslash \backslash$ $\triangleright$ Note that the queries made by $\LS$ do not depend on the set $\widetilde{\cC}$. $\backslash \backslash$

Add the connected components of $G$ to $\widetilde{\cC}$\;

$\backslash \backslash$ $\triangleright$ At this stage $\widetilde{\cC}$ is exactly equal to $\cC$ with probability $1-3\delta$. $\backslash \backslash$

\textbf{Return} $\widetilde{\cC}$\;
\end{algorithm}

\paragraph{The subroutine for learning sparse partitions.} We now describe the subroutine $\LS$, which is described in pseudocode in \Alg{sparse}. The idea is to learn all pairs of elements which belong to \emph{different} sets in the partition. 
We say that a set $I$ is an \emph{independent set} if every element of $I$ belongs to a different set in the partition. Upon querying an independent set $I$, the oracle responds with $\countq(I) = |I|$, and we learn that all pairs in $I$ belong to different sets, i.e. we learn the relationship of $\approx |I|^2$ pairs. If all sets in the partition are of size at most $c$, then a random $I$ of size $\approx \sqrt{n/c}$ will be an independent set with constant probability, allowing us to learn essentially $\approx n/c$ pairwise relationships per query, leading to a $\approx c n$ query algorithm, since there are at most $n^2$ pairs to learn in total. 

However, this is not the whole story: recall that the only guarantee of \Cref{lem:sparse} is that the \emph{unknown} sets are of size at most $c$, while the \emph{known} sets can be arbitrarily large. Let $K$ denote the union of the known sets. Then, we just need to learn every pair of elements $u,v \in U \setminus K$ which belong to different unknown sets. The point is that since $K$ is known, the oracle response to $I \setminus K$ can be simulated using the query to $I$ (see line 7 and \Fig{IS-sim} for an illustration). Therefore, it suffices to query $I$ and this query does not depend on $K$ at all. Thus, the queries selected by the subroutine can be made without knowledge of $K$, allowing the main algorithm (\Alg{opt-bounded}) to be implemented non-adaptively.

\subsection{Non-adaptive Weak Subset Query Algorithm: Proof of \Cref{thm:NA-bounded}} \label{sec:NA-bounded-proof}

Let $\cC = (C_1,\ldots,C_k)$ denote the hidden partition on $U$. For each $B \in [1,n]$, let $\cC_B = \{C \in \cC \colon |C| \geq n/B\}$ denote the collection of sets in $\cC$ of size at least $n/B$. Our main subroutine will be from the following lemma, which we prove in \Cref{sec:sparse}.

\begin{lemma} [Subroutine for learning a sparse, partially known partition] \label{lem:sparse} Let $\cC$ be a hidden partition over a universe $U$ on $n$ points. Suppose that a subset of the partition $\cK \subseteq \cC$ is completely known and let $K = \bigcup_{C \in \cK} C$. There is a non-adaptive procedure with subset query access to $\cC$, $\LS(\cK,c,\delta)$ which makes $2 c n \ln(n^2/\delta)$ queries of size $\floor{\sqrt{n/c}}$ and returns a graph $G$ on $U \setminus K$ whose connected components are exactly the set of unknown sets $\cC \setminus \cK$ with probability $1-\delta$, if every unknown set $C \in \cC \setminus \cK$ satisfies $|C| \leq c$. Moreover, the queries made by the procedure do not depend on the known sets, $\cK$. \end{lemma}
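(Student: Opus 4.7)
\textbf{Proof Proposal for \Cref{lem:sparse}.}

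The plan is to discover all ``between-unknown-class'' non-edges via random sampling of subsets of size $t = \lfloor\sqrt{n/c}\rfloor$, and then return the complete graph on $U \setminus K$ with all confirmed non-edges deleted. Concretely, the subroutine samples $T = 2cn\ln(n^2/\delta)$ i.i.d.\ uniformly random subsets $I_1,\ldots,I_T \subseteq U$ each of size $t$, and issues a single query $\countq(I_j)$ for each. The $I_j$'s are drawn in advance without any reference to $\cK$, which gives both the required non-adaptivity and the claimed independence of the queries from the known sets.

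In post-processing, using the known $\cK$ we compute $m_j := |\{C \in \cK : C \cap I_j \neq \emptyset\}|$, so that $\countq(I_j \setminus K) = \countq(I_j) - m_j$. Call $I_j$ \emph{good} if $\countq(I_j \setminus K) = |I_j \setminus K|$, equivalently if $I_j \setminus K$ intersects each unknown set in at most one element. Initialize $G$ as the complete graph on $U \setminus K$ and, for every good $I_j$, remove all edges internal to $I_j \setminus K$ from $G$. A good query never contains two elements of the same unknown set, so within-class edges are never deleted; conversely, once every pair of elements lying in different unknown sets is covered by at least one good query, every between-class edge is deleted and the connected components of $G$ are exactly the unknown sets, as required.

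The analysis thus reduces to showing that with probability $\geq 1-\delta$ every between-class pair $u,v \in U \setminus K$ is covered by some good $I_j$. Fix such a pair. A single random $I$ contains both with probability $t(t-1)/(n(n-1))$. Conditioned on $u,v \in I$, a first-moment argument bounds the expected number of unknown-set collisions inside $I \setminus K$ by $\binom{t}{2}\cdot c/(n-1) \leq 1/2$, using $\sum_{D \in \cC \setminus \cK}|D|(|D|-1) \leq cn$ (each unknown set has size at most $c$) and $t^2 \leq n/c$. Markov's inequality then yields $\Pr[I \text{ good}\mid u,v \in I] \geq 1/2$, so $\Pr[I \text{ good for }(u,v)] \geq \tfrac{1}{2}\cdot \tfrac{t(t-1)}{n(n-1)} \geq \tfrac{1}{2cn}$ up to lower-order terms. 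By independence, $\Pr[\text{no good }I_j] \leq e^{-T/(2cn)} = \delta/n^2$, and a union bound over at most $n^2$ between-class pairs finishes the argument.

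The main obstacle will be the bookkeeping of constants in this second-moment/birthday step: both the rounding in $t = \lfloor\sqrt{n/c}\rfloor$ and the fact that conditioning on $u,v \in I$ mildly perturbs the distribution of the remaining $t-2$ elements must be handled so that one obtains $\Pr[I \text{ good for }(u,v)] \geq 1/(2cn)$ exactly — not merely $\Omega(1/(cn))$ — in order for the stated $T = 2cn\ln(n^2/\delta)$ to suffice. This is routine but somewhat tedious; no new ideas are needed beyond the sample-and-Markov scheme above, and the non-adaptivity as well as the independence of the issued queries from $\cK$ are immediate.
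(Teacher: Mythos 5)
Your proposal matches the paper's proof essentially step for step: the same sampling of $2cn\ln(n^2/\delta)$ random sets of size $\lfloor\sqrt{n/c}\rfloor$, the same trick of simulating $\countq(I\setminus K)$ from $\countq(I)$ using the known partition $\cK$, the same detection of independent sets to record between-class non-edges, and the same birthday-style coverage argument ($\Pr[u,v\in I]\geq 1/(cn)$, conditional goodness $\geq 1/2$, then independence and a union bound over at most $n^2$ separated pairs). The constant-bookkeeping caveat you flag is handled no more carefully in the paper itself, so this is the same proof.
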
 

Pseudocode for our algorithm is given in \Alg{opt-bounded}. The algorithm works by maintaining a set $\tC$, which at any point in the algorithm's execution will be equal to $\cC_B$ for some $B \in [1,n]$ with high probability, as we will show. In particular, when the algorithm terminates, we will have $\tC = \cC_{n}$ with probability $1-\delta$. In lines (6-9) we employ a simple strategy to learn all of the sufficiently large sets.

\begin{claim} \label{clm:large} After line (9) of \Alg{opt-bounded}, we have $\tC = \cC_{n/\ln^2 (n/\delta)}$ with probability $1-\delta$. Moreover, the number of queries made in line (7) is at most $n \ln^2 (n/\delta) \ln(k/\delta)$ and every query is of size $2$. \end{claim}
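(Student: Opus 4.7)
The statement has two parts: a query bound and a correctness claim. The query bound is immediate: line (7) iterates over all pairs in $R \times U$, making $|R| \cdot n = n \ln^2(n/\delta) \ln(k/\delta)$ pairwise queries, each trivially of size $2$.

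For correctness, I would first record the (deterministic) observation that for any pair $\{x,y\}$, $\countq(\{x,y\}) = 1$ iff $x$ and $y$ lie in the same class of $\cC$, and equals $2$ otherwise. Hence for every $x \in R$, the set $C_x$ defined in line (7) is exactly the unique class of $\cC$ containing $x$. The threshold test in line (8) therefore guarantees that any set added to $\tC$ is a genuine class of $\cC$ of size at least $n/\ln^2(n/\delta)$; this direction requires no probability.

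The reverse inclusion---that every class $C \in \cC_{n/\ln^2(n/\delta)}$ is captured in $\tC$---reduces to showing $R \cap C \neq \emptyset$ for every such $C$. This is a standard coverage argument via the i.i.d.\ sampling of $R$: for any fixed large $C$,
\[
\Pr[R \cap C = \emptyset] \;\leq\; \left(1 - \tfrac{|C|}{n}\right)^{|R|} \;\leq\; \exp\!\left(-\tfrac{|R|}{\ln^2(n/\delta)}\right) \;=\; \exp(-\ln(k/\delta)) \;=\; \delta/k,
\]
using $|C|/n \geq 1/\ln^2(n/\delta)$ and the choice of $|R|$. Since there are at most $k$ classes overall, and in particular $|\cC_{n/\ln^2(n/\delta)}| \leq k$, a union bound yields total failure probability at most $\delta$. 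On the complementary event, for each such $C$ there exists some $x \in R \cap C$, and then $C_x = C$ passes the size threshold and is added to $\tC$ in line (8).

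There is no real obstacle here: the proof is a short coverage lemma. The only point that needs care is checking that the sample size $|R| = \ln^2(n/\delta) \ln(k/\delta)$ has been tuned precisely so that $|R|/\ln^2(n/\delta) = \ln(k/\delta)$, delivering the per-class miss probability $\delta/k$ that the union bound requires.
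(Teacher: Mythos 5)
Your proposal is correct and follows essentially the same argument as the paper: observe that $C_x$ is exactly the class containing $x$, bound $\Pr[R \cap C = \emptyset] \leq (1-|C|/n)^{|R|} \leq \delta/k$ for each large class, and union bound over at most $k$ classes. The only (harmless) addition is your explicit remark that sets passing the threshold are genuine classes, which the paper leaves implicit.
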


\begin{proof} The number of queries made in line (7) is exactly $|U| \cdot |R| = n \ln^2 (n/\delta) \ln(k/\delta)$, as claimed. Observe that in line (7), the set $C_x$ is exactly the set in $\cC$ which contains the point $x$. Thus, the claim holds as long as $R \cap C \neq \emptyset$ for every $C \in \cC_{n/\ln^2 (n/\delta)}$. For such a fixed $C$, we have 
\[
\Pr_R[R \cap C = \emptyset] = \left(1-\frac{|C|}{n}\right)^{|R|} \leq \left(1-\frac{1}{\ln^2(n/\delta)}\right)^{\ln^2 (n/\delta) \ln(k/\delta)} \leq \delta/k
\]
and so the claim holds by a union bound over the at most $k$ sets in $\cC_{n/\ln^2 (n/\delta)}$. \end{proof}

Next, we employ a different strategy in lines (13-23) to learn the sets with sizes in $[\frac{2n}{s^2},\frac{n}{\ln^2(n/\delta)}]$. This is the most involved phase of the algorithm.

\begin{lemma} \label{lem:medium} If $\tC = \cC_{n/\ln^2(n/\delta)}$ in line (9) of \Alg{opt-bounded}, then in line (23), we have $\tC = \cC_{2n/s^2}$ with probability $1-\delta$. Moreover, the number of queries made by $\LS$ in line (19) is $O(n \ln^4 (n/\delta) \ln s)$ and every query is of size $s$. \end{lemma}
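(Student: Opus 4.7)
The plan is to prove by induction on iterations of the while loop the following invariant: at the start of the iteration with parameter $B$, the partial hypothesis $\tC$ contains precisely the hidden sets of size at least $2n/B$. The base case $B = 2\ln^2(n/\delta)$ is exactly the hypothesis supplied from line (9). For the inductive step I would show that after one execution of the loop body, $\tC$ grows to contain precisely the hidden sets of size at least $n/B$, which matches the invariant for the next iteration (where $B$ has doubled). Applied to the largest value $B^\ast \leq s^2$ reached by the loop (where $2B^\ast > s^2$), this yields $\tC \supseteq \cC_{2n/s^2}$ at line (23), giving the claim.

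For each iteration I would define four events on the randomness in lines (14)-(15) and the internal coins of $\LS$: (a) the core $R$ hits every unknown set of size at least $n/B$; (b) $X_1 \cup \cdots \cup X_p = U$; (c) for every unknown set $C$ (whose size is at most $2n/B$ by the inductive hypothesis) and every $j \in [p]$, $|C \cap (X_j \cup R)| \leq 72 \ln(n/\delta)$; and (d) every one of the $p$ calls to $\LS$ succeeds. Events (a)-(c) follow from standard Chernoff / coupon-collector calculations, and the sample sizes $|R| = B \ln(6B\log(s^2)/\delta)$ and $p = (n/B)\ln(6n\log(s^2)/\delta)$ are calibrated exactly so that each of (a) and (b) fails with probability at most $\delta/(3\log(s^2))$, while a Chernoff tail over the at most $k \cdot p$ pairs $(C,j)$ makes (c) fail with probability at most $\delta/(3\log(s^2))$ using the constant $72$. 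Event (d) holds with probability $1 - \delta/(2\log(s^2))$ by \Lem{sparse} invoked with $\delta' = \delta/(2p\log(s^2))$, using (c) to certify its sparsity premise. A final union bound over the $O(\log s)$ outer iterations yields total failure at most $\delta$.

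Assuming (a)-(d), correctness of one iteration proceeds as follows. For each unknown set $C$ with $|C| \geq n/B$, every $X_j$ meeting $C$ contributes $C \cap (X_j \cup R)$ as a single connected component of $G_j$ (the restricted-partition guarantee of $\LS$), and all such components share the non-empty intersection $R \cap C$, which itself lies in a single component inside each $G_j$; hence in $G = G_1 \cup \cdots \cup G_p$ these components merge into one component equal to $C$ (using (b) to cover $C$), which then passes the size threshold and is added to $\tC$. Conversely, every connected component of $G$ is contained in a single hidden set, so unknown sets of size $< n/B$ can only produce components of size $< n/B$, which are discarded by line (21). This gives the inductive step with no spurious additions.

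For the query cost, each call $\LS(\cK_j, 72\ln(n/\delta), \delta')$ runs on a universe of size $|X_j \cup R| = O(B\log(n/\delta))$, so by \Lem{sparse} it uses $O(B\log^3(n/\delta))$ queries, each of size $\lfloor\sqrt{|X_j \cup R|/c}\rfloor = O(\sqrt{B}) \leq s$ since $B \leq s^2$. Summing over the $p = \widetilde{O}(n/B)$ subsets gives $O(n\log^4(n/\delta))$ queries per outer iteration and $O(n\log^4(n/\delta)\log s)$ in total. I expect the main technical obstacle to be verifying event (c): one needs sharp Chernoff tails on $|C \cap X_j|$ and $|C \cap R|$ simultaneously for all unknown sets $C$ and all indices $j$, tight enough that the constant $72$ in line (19) survives the union bound over the at most $kp$ pairs $(C,j)$ together with the $O(\log s)$ outer iterations — this is precisely why the sample sizes in lines (14)-(15) carry their particular logarithmic factors.
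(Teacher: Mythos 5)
Your proposal matches the paper's proof essentially step for step: the same per-iteration induction on $B$ with invariant $\tC=\cC_{B/2}$, the same good events (core coverage by $R$, coverage of $U$ by the $X_j$'s, sparsity with threshold $72\ln(n/\delta)$) together with success of all $p$ calls to $\LS$, the same merge-via-$R$ correctness argument, and the same query accounting of $O(B\ln^3(n/\delta))$ per call and $O(n\ln^4(n/\delta))$ per iteration. The only deviations are cosmetic: the paper allocates $\delta/(6\log(s^2))$ to each event so the total failure comes out to exactly $\delta$ rather than your $3\delta/2$, and it verifies your event (c) by an explicit $\binom{4B\ln(n/\delta)}{t}(2/B)^t$ tail computation, which indeed confirms that the constant $72$ comfortably survives the union bound you flagged as the main thing to check.
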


\begin{proof} Note that if $s^2 \leq 2\ln^2(n/\delta)$, then the lemma is vacuously correct. We will prove the lemma by an induction on each setting of $B$ during the while loop in lines (13-23), where $2\ln^2 (n/\delta) \leq B \leq s^2$ and $B$ doubles after each iteration, in line (21). The following claim proves correctness and the desired query complexity for each iteration, and \Cref{lem:medium} then follows immediately since there are at most $\log(s^2)$ iterations in total.

\begin{claim} Consider an arbitrary iteration of the while loop beginning in line (13). If in line (13) it holds that $\tC = \cC_{B/2}$, then in line (22), we have $\tC = \cC_B$ with probability $1-\frac{\delta}{\log (s^2)}$. Moreover, during this iteration the number of queries made by the calls to $\LS$ in line (19) is $O(n \ln^4 (n/\delta))$, and every query is of size at most $s$. \end{claim}


\begin{proof} First, we define the following good events. We will argue that conditioned on these events, we will have $\widetilde{\cC} = \cC_B$ with probability $1-\frac{\delta}{2\log(s^2)}$. We will then argue that with probability $1-\frac{\delta}{2\log(s^2)}$ all the events occur, and this will prove the claim by a union bound.

\begin{itemize}
    \item Let $\cE_{R,\mathsf{cover}}$ denote the event that $R \cap C \neq \emptyset$ for every $C \in \cC_B$.
    \item Let $\cE_{X,\mathsf{cover}}$ denote the event that $X_1 \cup \cdots \cup X_p = U$.
    \item Let $\cE_{\mathsf{sparse}}$ denote the event that $|(X_j \cup R) \cap C| \leq 72 \ln(n/\delta)$ for every $C \in \cC \setminus \cC_{B/2}$ and every $j \in [p]$.
\end{itemize}

Recall we are conditioning on $\tC = \cC_{B/2}$ and so in line (18) $\cK_j$ is exactly this collection of sets in the partition restricted on the set $X_j \cup R$. If the call to $\LS$ on $X_j \cup R$ in line (19) succeeds, then the connected components of the returned graph $G_j$ are exactly $X_j \cap C$ for each $C \in \cC_B$. Moreover, if $\cE_{R,\mathsf{cover}}$ occurs, then the connected components of the union $G = G_1 \cup \cdots \cup G_p$ are exactly $(X_1 \cup \cdots \cup X_p) \cap C$ for each $C \in \cC_B$. This shows that if $\cE_{X,\mathsf{cover}}$ occurs, then the set of connected components of $G$ is exactly $\cC_B$ and we get $\tC = \cC_B$ as desired in line (22). Finally, conditioned on $\cE_{\mathsf{sparse}}$, \Cref{lem:sparse} guarantees that each call to $\LS$ in line (18) succeeds with probability $1-\delta/(2p\log(s^2))$, and so by a union bound over the $p$ calls, all of them succeed with probability $1-\delta/2 \log(s^2)$ and we have $\tC = \cC_B$ as argued.

Now, to complete the proof it suffices to show that each good event $\cE_{R,\mathsf{cover}}$, $\cE_{X,\mathsf{cover}}$, and $\cE_{\mathsf{sparse}}$ occurs with probability at least $1-\delta/6 \log(s^2)$.


Recall from line (14) that $R$ is a set of $B \ln (6 B\log(s^2)/\delta)$ i.i.d. uniform random elements. Thus, observe that for $C \in \cC_B$, 
\begin{align} \label{eq:R}
\Pr_R[R \cap C = \emptyset] \leq \left(1-\frac{1}{B}\right)^{|R|} \leq \delta/(6B \log (s^2)) ~\Longrightarrow~ \Pr_R[\cE_{R,\mathsf{cover}}] \geq 1-\delta/(6\log(s^2))
\end{align}
by a union bound since there can be at most $B$ sets in $\cC_B$. 

Now, recalling the definition of $p$ and $X_1,\ldots,X_p$ in line (15), observe that $X = X_1 \cup \cdots \cup X_p$ is simply a set of $n \ln (6n\log (s^2)/\delta)$ i.i.d. uniform random samples from $U$. Thus, the probability that $x \notin X$ for a fixed $x \in U$ is at most $(1-1/n)^{|X|} \leq \delta/(6n\log(s^2))$ and so by a union bound over all $x \in U$, we have
\begin{align} \label{eq:X}
    \Pr_X[\cE_{X,\mathsf{cover}}] \geq 1-\delta/(6\log(s^2)) \text{.}
\end{align}
We now lower bound the probability of $\cE_{\mathsf{sparse}}$. First, $X_j \cup R$ is simply a set of 
\begin{align} \label{eq:RcupX}
	|R \cup X_j| \leq |R| + |X_j| \leq 2B\ln(6B \log (s^2)/\delta) \leq 4 B \ln(n/\delta)
\end{align}
independent uniform random elements\footnote{This inequality holds since $B \leq s^2 \leq n$ and so the argument of the $\log$ is $6B \log(s^2)/\delta \leq n^2/\delta < (n/\delta)^2$.}. Fix a set $C \in \cC \setminus \cC_{B/2}$, i.e. $C$ is of size at most $2n/B$. Thus, a uniform random element lands in $C$ with probability at most $2/B$, and so 
\[
\Pr_{R,X_j}[|(X_j \cup R) \cap C| > t] \leq {4 B \ln(n/\delta) \choose t}\left(\frac{2}{B}\right)^t \leq \left(\frac{4 e B \ln(n/\delta)}{t} \cdot \frac{2}{B}\right)^t \leq \left(\frac{24 \ln (n/\delta)}{t}\right)^t
\]
where the first inequality is by taking a union bound over each subset of $t$ random elements from $X_j \cup R$ and considering the probability that they all land in $C$. The second inequality follows from the inequality ${n \choose t} \leq (\frac{en}{k})^k$. Thus, if we set $t =  24 e \ln (n/\delta) < 72 \ln (n/\delta)$ we get $\Pr_{R,X_j}[|(X_j \cup R) \cap C| > t] \leq (\delta/n)^{24}$, and so (recalling the definition of $\cE_{\mathsf{sparse}}$) clearly by a union bound over every $C \in \cC \setminus \cC_{B/2}$ and every $j \in [p]$, we have
\begin{align} \label{eq:sparse}
    \Pr_{R,X_1,\ldots,X_p}[\cE_{\mathsf{sparse}}] \geq 1-\frac{\delta}{6\log(s^2)}
\end{align}
as desired. Thus, the correctness follows from \cref{eq:R}, \cref{eq:X}, and \cref{eq:sparse}. 

For the query complexity, each call to $\LS$ in line (19) is on a set of size at most $4 B \ln(n/\delta)$ (recall \cref{eq:RcupX}) where each set is of size at most $72\ln(n/\delta)$. Thus, by \Cref{lem:sparse}, the number of queries made in line (19) is $O(B \ln^3 (n/\delta))$ and each query is of size at most 
\[
\sqrt{\frac{4B \ln(n/\delta)}{72 \ln (n/\delta)}} = \sqrt{B/18} < s
\]
Since $\LS$ is called $p = O(\frac{n}{B} \ln(n/\delta))$ times during an iteration, the total number of such queries in an iteration is $O(n \ln^4 (n/\delta))$. \end{proof}
    
This completes the proof of \Cref{lem:medium}. \end{proof}

Finally, we show that the sets of size at most $2n/s^2$ are learned easily in the final stage of the algorithm.

\begin{claim} \label{clm:small} If $\tC = \cC_{2n/s^2}$ in line (23), then at the end of the algorithm's execution we have $\tC = \cC$ with probability $1-\delta$. Moreover, the number of queries made in line (24) is $O(\frac{n^2}{s^2}\ln(n/\delta))$ and every query is of size at most $s$. \end{claim}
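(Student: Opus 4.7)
The plan is to derive this claim as a direct invocation of Lemma~\ref{lem:sparse} (the guarantee for the subroutine $\LS$), since line (24) consists of a single call $\LS(\tC, 2n/s^2, \delta)$. First I would check the precondition: under the hypothesis $\tC = \cC_{2n/s^2}$, every remaining unknown set $C \in \cC \setminus \tC$ has size at most $2n/s^2$, which is exactly the sparsity assumption of Lemma~\ref{lem:sparse} with parameter $c := 2n/s^2$. Applying the lemma, the returned graph $G$ in line (24) has, with probability $1-\delta$, connected components equal precisely to the unknown sets $\cC \setminus \tC$. Since these components are added to $\tC$ in line (25), we obtain $\tC = \cC$ with probability $1-\delta$, as required.

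For the query complexity and query-size bounds I would simply substitute $c = 2n/s^2$ into the corresponding bounds of Lemma~\ref{lem:sparse}. The number of queries is $2 c n \ln(n^2/\delta) = (4n^2/s^2)\ln(n^2/\delta) = O\!\left(\tfrac{n^2}{s^2}\ln(n/\delta)\right)$, matching the claimed count. The size of each query is $\lfloor \sqrt{n/c}\rfloor = \lfloor \sqrt{s^2/2}\rfloor \leq s$, matching the claimed size bound. Because the argument is an immediate application of a previously established lemma, there is no substantive obstacle; the only care needed is to verify the sparsity precondition (already guaranteed by the hypothesis) and to note, as emphasized in Lemma~\ref{lem:sparse}, that the queries of $\LS$ do not depend on the known collection $\tC$, which is what preserves the non-adaptivity of the overall algorithm.
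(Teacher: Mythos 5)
Your proposal is correct and follows the paper's own argument exactly: both are a direct application of Lemma~\ref{lem:sparse} with $c = 2n/s^2$, checking the sparsity precondition from the hypothesis $\tC = \cC_{2n/s^2}$ and reading off the query count $O\big(\tfrac{n^2}{s^2}\ln(n/\delta)\big)$ and query size $\lfloor\sqrt{n/c}\rfloor \leq s$. No differences worth noting.
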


\begin{proof} The proof follows immediately from \Cref{lem:sparse}. We are conditioning on $\tC = \cC_{2n/s^2}$ and so every unknown set is of size at most $2n/s^2$. Thus, by \Cref{lem:sparse} the number of queries made by $\LS$ in line (24) is $O(\frac{n^2}{s^2}\ln(n/\delta))$ and every query is of size at most $\sqrt{n/(2n/s^2)} \leq s$. \end{proof}

Finally, combining \Cref{clm:large}, \Cref{lem:medium}, and \Cref{clm:small}, at the end of the algorithm's execution we have $\tC = \cC$ with probability $1-3\delta$, and this completes the proof of \Cref{thm:NA-bounded}.

\subsection{Learning a Sparse, Partially Known Partition: Proof of \Cref{lem:sparse}} \label{sec:sparse}

Pseudocode for the algorithm is given in \Alg{sparse}. Let $K = \bigcup_{C \in \cK} C$ denote the set of points belonging to a known set.

\begin{algorithm}[H]
\caption{$\LS(\cK,c,\delta)$} \label{alg:sparse}
\textbf{Input:} (1) Subset query access to hidden partition $\cC$ over $U$ with $n$ points. (2) Known sets $\cK \subseteq \cC$ with promise that $|C| \leq c$ for every $C \in \cC \setminus \cK$. (3) Error probability $\delta > 0$\;

\textbf{Output:} A partition $\widetilde{\cC}$ of $U$, which is equal to $\cC$ with probability $1-\delta$\;

Initialize $E \gets \emptyset$ and let $K = \bigcup_{C \in \cK} C$ denote the union of known sets\;

\textbf{Repeat} $2 c n \ln (n^2/\delta)$ times: \\ 

$\longrightarrow$ Let $I$ be a uniform random subset of $U$ of size $\floor{\sqrt{n/c}}$\;

$\longrightarrow$ Let $c_{I,K} = \countq(I \cap K)$ which we can compute without making any queries since the partition $\cK$ of $K$ is known\;

$\longrightarrow$ \textbf{Query} $I$ and if $\countq(I) - c_{I,K} = |I \setminus K|$ (i.e., $I \setminus K$ is an independent set), then $E \gets E \cup {I \setminus K \choose 2}$\;

$\backslash \backslash$ $\triangleright$ Note that here we use the full power of the $\countq$ query. In particular, we are not simply simulating independent set queries. This appears to be crucial to achieve non-adaptivity. $\backslash \backslash$

$\backslash \backslash$ $\triangleright$ Note that all queries can be specified without any knowledge of $\cK$. In particular, this allows $\NASQ$ to be implemented non-adaptively. $\backslash \backslash$

\textbf{Return} the graph $G(U \setminus K, \overline{E})$\;

\end{algorithm}

\begin{figure}
    \hspace*{5cm}
    \includegraphics[scale=0.3]{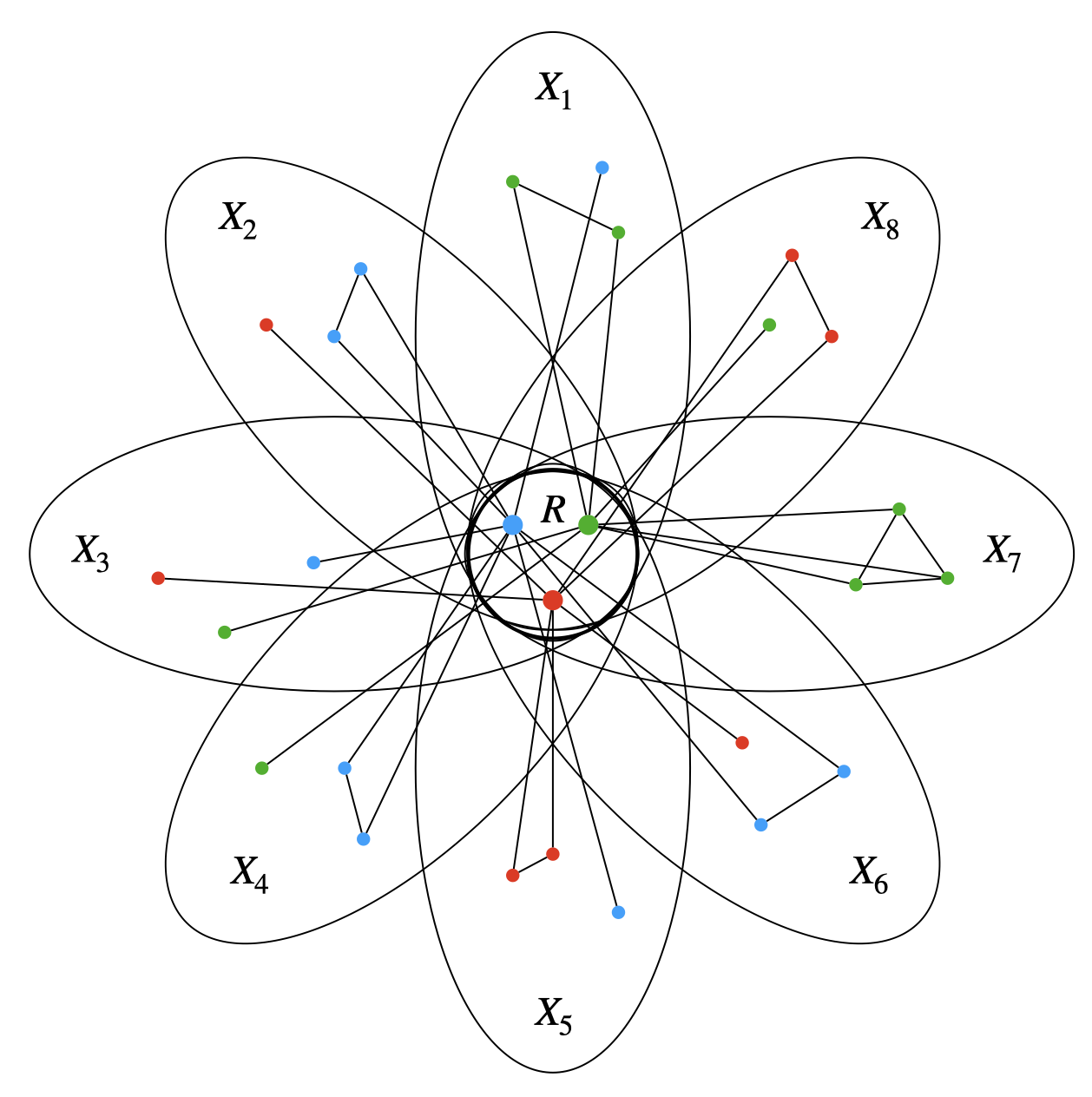}
    \caption{\small{An illustration depicting how \Alg{opt-bounded} learns the "medium"-sized sets. The sets of the partition are represented by the three colors (here $k=3$). The random core set $R$ (chosen in line 14) is large enough so that with high probability it contains a representative from every $C \in \cC_B$ which hasn't yet been learned. Then, we use \Alg{sparse} to learn the partition restricted on $X_j \cup R$ for every $X_j$ (chosen in line 15). The key is that these sets are small enough so that every unlearned set in the partition restricted on $X_j \cup R$ will be very small, allowing us to learn this restricted partition with few queries using \Alg{sparse}. Then, since the $X_j$-s cover $U$, we recover all unlearned sets in $\cC_B$ in lines 20-21 by taking all connected components of size at least $n/B$.}} 
    \label{fig:sunflower}
\end{figure}

\begin{figure}
    \hspace*{6cm}
    \includegraphics[scale=0.3]{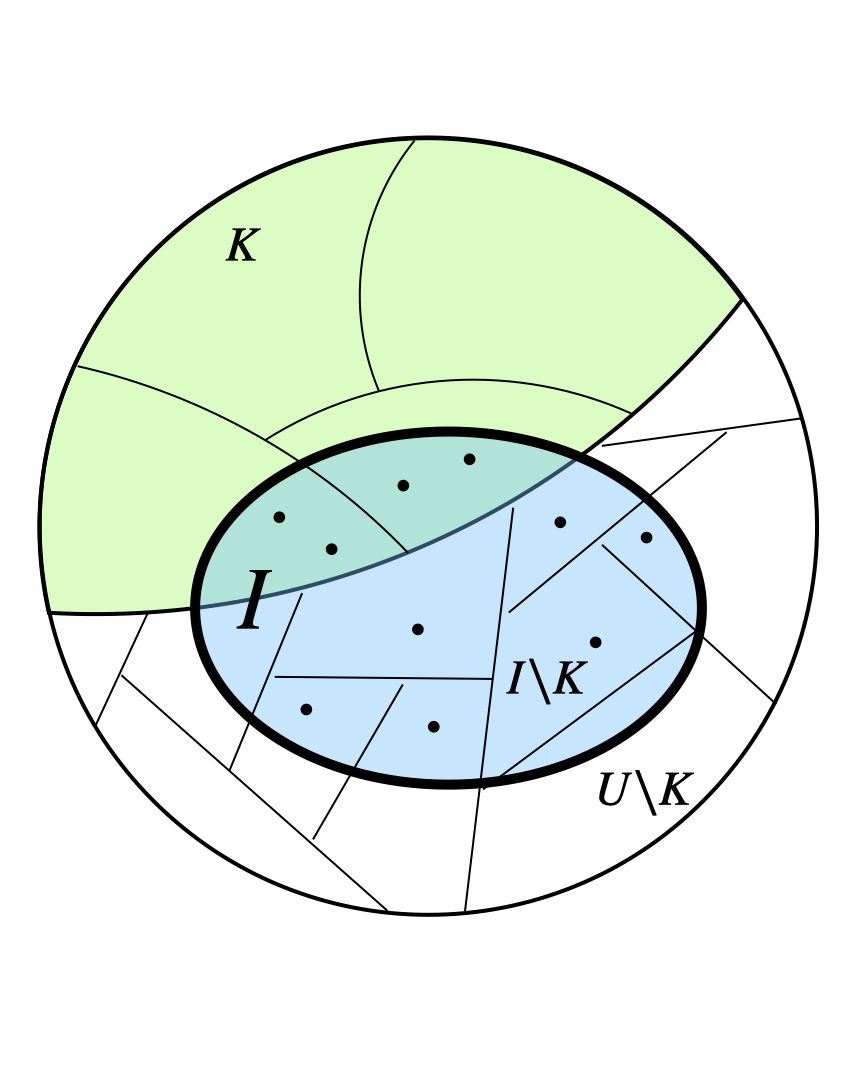}
    \caption{\small{An illustration depicting the key idea of \Alg{sparse} ($\LS$). The green region represents the points belonging to known sets in the partition (the larger sets) and the white regions represent the sets we are trying to learn. The queried set $I$ (pictured in blue) is drawn in line 5. Since the partition on the green region $K$ is known, the reconstruction process can simulate the oracle on the set $I \setminus K$ using the oracle's response to the set $I$ (line 6-7). In particular, the algorithm deduces that $I \setminus K$ is an independent set (line 7), even though $I$ itself is not, and can use this information to recover the unknown sets of the partition. } }
    \label{fig:IS-sim}
\end{figure}

The claimed query complexity and query size bound are obvious by definition of the algorithm. We now show that the connected components of the graph returned in line (8) correspond exactly with the collection of unknown sets $\cC \setminus \cK$ with probability $1-\delta$. 

Let us call a pair of points $x,y \in U \setminus K$ \emph{separated} if they belong to different, unknown sets. If a separated pair is contained in some $I$ for which $I \setminus K$ is an independent set (refer to \Alg{sparse} line (5)), then we learn that $x,y$ do not belong to the same set (i.e. we learn that this is a non-edge of the partition-graph over $U \setminus K$). When this occurs, we record that $x,y$ is a non-edge by adding it to $E$ in line (7). Therefore, if we learn this for \emph{every separated pair}, then we can recover the partition-graph. In particular, at the end of the algorithm's execution $E$ will be exactly the set of non-edges, and so $G(U \setminus K, \overline{E})$ will be partition-graph on $U \setminus K$. Thus, it suffices to show that every separated pair is contained in some $I$ for which $I \setminus K$ is an independent set. Importantly, to tell if $I \setminus K$ is an independent set it suffices to only query $I$ by exploiting the already known partition on $K$ (line 7).

Let $\ell =  2 c n \ln (n^2/\delta)$ be the total number of $I$-s that are sampled in line (5), and let them be denoted by $I_1,\ldots,I_{\ell}$. For each separated pair $x,y$ and for each $i \in [\ell]$, let $\cE_{i,x,y}$ denote the event that $x,y \in I_i$ and $I_i$ is an independent set ($\mathsf{IS}$). We will show the following claim, which completes the proof of \Cref{lem:sparse} by a union bound over all (at most $n^2$) separated pairs.

\begin{claim} For every separated pair $x,y$, we have
\[
\Pr_{I_1,\ldots,I_{\ell}}[\neg \cE_{1,x,y} \wedge \cdots \wedge \neg \cE_{\ell,x,y}] \leq \delta/n^2
\]
\end{claim}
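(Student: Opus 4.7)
The plan is to exploit the mutual independence of $I_1, \ldots, I_{\ell}$, since each is drawn i.i.d.\ uniformly from $\binom{U}{s'}$ with $s' := \lfloor \sqrt{n/c} \rfloor$. This lets me factor
\[
\Pr[\neg \cE_{1,x,y} \wedge \cdots \wedge \neg \cE_{\ell,x,y}] = (1 - \Pr[\cE_{1,x,y}])^{\ell} \leq \exp\!\bigl(-\ell \cdot \Pr[\cE_{1,x,y}]\bigr).
\]
Since $\ell = 2cn \ln(n^2/\delta)$, the whole claim reduces to a per-trial lower bound $\Pr[\cE_{i,x,y}] \geq 1/(2cn)$: plugging in makes the right-hand side $\exp(-\ln(n^2/\delta)) = \delta/n^2$.

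To establish this per-trial bound, I would write $\Pr[\cE_{i,x,y}] = \Pr[\{x,y\} \subseteq I_i] \cdot \Pr[\,I_i \setminus K \text{ is an IS} \mid \{x,y\} \subseteq I_i\,]$. The first factor is exactly $s'(s'-1)/(n(n-1))$, which is $\Theta(1/(cn))$ by the choice of $s'$. For the conditional probability, I would condition on $\{x, y\} \subseteq I_i$ and note that $J := I_i \setminus \{x, y\}$ is then a uniform $(s'-2)$-subset of $U \setminus \{x,y\}$. The failure event ``$I_i \setminus K$ is not an independent set'' splits into two kinds of bad pairs: (i) some $u \in J$ lies in the same unknown class as $x$ or $y$; (ii) some two distinct $u, v \in J \cap (U \setminus K)$ share an unknown class.

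I would handle (i) by a union bound: each of the two classes $C_x, C_y$ has size at most $c$, so there are at most $2(c-1)$ candidate $u$'s, each hitting $J$ with probability $(s'-2)/(n-2)$, giving a bound of $O(c s'/n) = O(\sqrt{c/n})$. For (ii), I would first count the pairs $\{u,v\} \subseteq U \setminus (K \cup \{x,y\})$ lying in a common unknown class using the convexity bound $\sum_C \binom{|C \cap (U \setminus \{x,y\})|}{2} \leq (c-1)(n-|K|-2)/2 \leq cn/2$, where $|C| \leq c$ is used once per pair. Each such pair lies in $J$ with probability $(s'-2)(s'-3)/((n-2)(n-3)) \leq (s')^2/n^2$, so combining with $(s')^2 \leq n/c$ yields bad-probability at most $1/2$. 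Summed, the conditional bad-probability is $1/2 + O(\sqrt{c/n})$, so the conditional IS probability is $\Omega(1)$ once $c \ll n$; multiplying by the first factor gives $\Pr[\cE_{i,x,y}] = \Omega(1/(cn))$.

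The main obstacle is the constant: the Type-(ii) contribution alone is already $\tfrac{1}{2}(1+o(1))$, so the product of the two factors barely clears the threshold $1/(2cn)$. If tighter accounting is needed, one can shrink $s'$ by a small factor (damping the Type-(ii) bound by its square) or enlarge the query budget $\ell$ by the matching constant; either adjustment is absorbed into the $O(\cdot)$ constants in \Cref{lem:sparse}, and in turn into the $\widetilde{O}$ of \Cref{thm:NA-bounded}. A second, minor point is handling the regime where $\sqrt{n/c}$ is small (so the floor in $s'$ matters); in that regime $c$ is close to $n$ and the lemma can be verified directly by querying all pairs inside $U \setminus K$.
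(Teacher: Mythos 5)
Your proposal is correct and follows essentially the same route as the paper: factor over the independent trials and lower-bound each $\Pr[\cE_{i,x,y}]$ by $\Pr[x,y\in I]\cdot\Pr[I\setminus K \text{ is an independent set}\mid x,y\in I]$, bounding the conditional factor via a union bound over pairs landing in a common unknown set. The only difference is bookkeeping: the paper union-bounds over all $\binom{|I|}{2}$ pairs at once (including those containing $x$ or $y$), each at probability at most $c/n$, giving conditional failure probability at most $\binom{|I|}{2}\,c/n\le 1/2$ since $|I|^2\le n/c$ and hence $\Pr[\cE_{i,x,y}]\ge \tfrac{1}{2cn}$ with the stated constants, so the extra $O(\sqrt{c/n})$ term you worry about (and the proposed adjustments to $s'$ or $\ell$) are not needed.
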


\begin{proof} We will drop $x,y$ from the subscript for brevity. Let $I$ be a uniform random subset of $U$ of size $\floor{\sqrt{n/c}}$. We have $\Pr[\cE_i] = \Pr[x,y \in I] \cdot \Pr[I \setminus K \text{ is an } \mathsf{IS} ~|~ x,y \in I]$. Then, 
\[
\Pr[x,y \in I] = \frac{ {n-2 \choose |I|-2} } { {n \choose |I| } } = \frac{|I|(|I|- 1)}{n (n-1)} \geq \frac{1}{c n}\text{.}
\]
Now, recall that we are assuming every unknown set $C \in \cC \setminus \cK$ has size at most $c$ and so the probability of two points (either both uniform random, or one being $x$ or $y$ and the other being uniform random) landing in the same unknown set is at most $c/n$. Thus, by a union bound over all pairs of points $I$, we have
\[
\Pr[I \setminus K \text{ not an } \mathsf{IS} ~|~ x,y \in I] \leq {|I| \choose 2} \frac{c}{n} \leq 1/2
\]
and so $\Pr[\cE_i] \geq \frac{1}{2cn}$. Finally, since the $I_i$-s are independent, the $\cE_i$ events are also independent. Thus, 
\[
\Pr_{I_1,\ldots,I_{\ell}}[\neg \cE_{1,x,y} \wedge \cdots \wedge \neg \cE_{\ell,x,y}]  \leq \left(1 - \frac{1}{2 c n}\right)^{\ell} \leq \delta/n^2
\]
and this completes the proof. \end{proof}

\subsection{Low-Round Weak Subset Query Algorithm} \label{sec:LR-count}

In this section, we give a nearly optimal $r$-round algorithm for partition learning using weak subset queries of bounded size $s$. We prove the following theorem.

\begin{theorem} \label{thm:r-round-weak} Given $r \geq 1,\delta >0$, there is a randomized $r$-round algorithm for learning a $k$-partition using
\[
O\left(\frac{n^{\left(1 + \frac{1}{2^{r}-1}\right)} k^{\left(1 - \frac{1}{2^{r}-1}\right)}}{s^2} \cdot \log^5 (n/\delta)\right)
\]
weak subset queries of size at most $2 \leq s \leq \sqrt{n^{\left(\frac{1}{2^{r}-1}\right)} k^{\left(1 - \frac{1}{2^{r}-1}\right)}}$ that succeeds with probability $1-\delta$. 
\end{theorem}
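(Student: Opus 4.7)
The plan is to follow the recursive template of $\LRPQ$ used to prove \Cref{thm:LR-pair-UB}, but to replace its base-case primitive---querying every pair within a block---with a non-adaptive call to $\NASQ$ from \Cref{thm:NA-bounded}. Define $\LRSQ(U,r,s,\delta)$ as follows. If $r=1$ or $n \leq 16k$, simply invoke $\NASQ(n,k,s,\delta)$ on the whole universe. Otherwise, in round $1$, partition $U$ arbitrarily into $\ell = \lceil \tfrac{1}{3}(n/k)^{1-\eps(r)}\rceil$ blocks $U_1,\ldots,U_\ell$ of size at most $t = \lceil 3 n^{\eps(r)} k^{1-\eps(r)}\rceil$, and in parallel run $\NASQ(|U_i|, k, s, \delta')$ on each block to learn the restricted partition $\cC_i$; then, exactly as in $\LRPQ$, collect one representative from each class of each $\cC_i$ into a set $R$ of size $|R| \leq k\ell \leq \tfrac{1}{2}\, n^{1-\eps(r)} k^{\eps(r)}$, recursively call $\LRSQ(R, r-1, s, \delta')$ to learn $\cC$ restricted to $R$, and merge.

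For the query complexity I would proceed by induction on $r$, mirroring the proof of \Cref{thm:LR-pair-UB}. By \Cref{thm:NA-bounded}, each first-round call to $\NASQ$ on a block of size at most $t$ uses $\widetilde{O}(t^2/s^2)$ queries of size at most $s$, provided $s \leq \sqrt{t}$; the hypothesis $s \leq \sqrt{n^{\eps(r)} k^{1-\eps(r)}}$ in the theorem statement is exactly what ensures this. Summing over the $\ell$ blocks contributes
\[
\ell \cdot \widetilde{O}(t^2/s^2) \;=\; \widetilde{O}\!\left( \tfrac{1}{s^2} \cdot n^{1+\eps(r)} k^{1-\eps(r)}\right)
\]
queries. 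Inductively, the recursive call on $R$ uses
\[
\widetilde{O}\!\left(\tfrac{1}{s^2} \cdot |R|^{1+\eps(r-1)}\, k^{1-\eps(r-1)}\right),
\]
and invoking the two identities of \Cref{clm:eps(r)} (namely $(1-\eps(r))(1+\eps(r-1)) = 1+\eps(r)$ and $\eps(r)(1+\eps(r-1)) + 1 - \eps(r-1) = 1-\eps(r)$) collapses this to the same bound $\widetilde{O}(\tfrac{1}{s^2}\, n^{1+\eps(r)} k^{1-\eps(r)})$, matching the first-round cost and closing the induction.

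Correctness follows from a union bound over all $\NASQ$ invocations in all levels of recursion. Since the recursion depth is at most $r$ and each level spawns at most $\ell \leq n$ parallel calls, setting $\delta' = \delta/\mathrm{poly}(n)$ makes every $\NASQ$ call succeed with probability $\geq 1-\delta'$, yielding overall success probability $\geq 1-\delta$. The resulting $\log(n/\delta)$ blow-up from the union bound, combined with the $\log^4(n/\delta)$ factor already internal to $\NASQ$ and the $\log(s)$ factor from its second term, is absorbed into the $\log^5(n/\delta)$ factor in the theorem statement.

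The main obstacle is purely bookkeeping: verifying that the query-size requirement $s \leq \sqrt{t_{r'}}$ of $\NASQ$ continues to hold at each deeper level $r'$ of the recursion. Using the identity $\eps(r) = \eps(r-1)/(2+\eps(r-1))$ from \Cref{clm:eps(r)} one checks that $|R|^{\eps(r-1)} k^{1-\eps(r-1)} \geq n^{\eps(r)} k^{1-\eps(r)}$, so the constraint only weakens along the recursion and is preserved at every level. Beyond this, all of the genuine difficulty is already packaged into \Cref{thm:NA-bounded} and the recursive template of $\LRPQ$, so no new combinatorial idea is required.
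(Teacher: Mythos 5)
Your overall route is the same as the paper's (the recursive $\LRPQ$ template with the all-pairs base primitive replaced by $\NASQ$, the cost recursion closed via \Cref{clm:eps(r)}, and a union bound with $\delta' = \delta/\poly(n)$), but the step you dismiss as "purely bookkeeping" contains a genuine gap. You claim that $|R|^{\eps(r-1)}k^{1-\eps(r-1)} \geq n^{\eps(r)}k^{1-\eps(r)}$, so that the constraint $s \leq \sqrt{(\cdot)}$ is automatically preserved down the recursion. That inequality would require a \emph{lower} bound on $|R|$, but all you have is the upper bound $|R| \leq \tfrac{1}{2} n^{1-\eps(r)}k^{\eps(r)}$ (your computation effectively assumes $|R|$ attains this maximum). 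If the hidden partition has few classes, $|R|$ can be as small as $\ell \approx \tfrac{1}{3}(n/k)^{1-\eps(r)}$, and then $|R|^{\eps(r-1)}k^{1-\eps(r-1)} \approx (n/k)^{2\eps(r)}k^{1-\eps(r-1)}$, which is below $n^{\eps(r)}k^{1-\eps(r)} = (n/k)^{\eps(r)}k$ unless $(n/k)^{\eps(r)} \geq k^{\eps(r-1)}$ -- false already for, say, $n = 20k$ with $k$ large. In that regime your recursive call $\LRSQ(R,r-1,s,\delta')$ violates the inductive hypothesis, and at the bottom of the recursion you would invoke $\NASQ$ on a universe $R'$ with $s > \sqrt{|R'|}$, which is outside the guarantee of \Cref{thm:NA-bounded}.

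The paper's proof handles exactly this point by recursing with the capped query size $s' = \min\bigl(s, \sqrt{|R|^{\eps(r-1)}k^{1-\eps(r-1)}}\bigr)$ and splitting the cost analysis into two cases: when $s' = s$ your computation goes through, and when $s' < s$ the recursive cost collapses to $\widetilde{O}(|R|) \leq \widetilde{O}(n)$, which still fits the budget because the hypothesis $s^2 \leq n^{\eps(r)}k^{1-\eps(r)}$ makes the target bound $\widetilde{\Omega}(n)$. Your argument is fixable by incorporating this cap and the second case, but as written the induction does not close.
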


In particular, with $O(\log \log n)$ rounds and query-size bound $\sqrt{k}$ our algorithm has query complexity $\widetilde{O}(n)$. In general, with $O(\log \log n)$ rounds and $s \leq \sqrt{k}$, we get $\widetilde{O}(\frac{nk}{s^2})$ queries, matching the $\Omega(\frac{nk}{s^2})$ fully adaptive lower bound up to $\poly \log n$ factors. Our main building block is the nearly optimal non-adaptive algorithm of \Cref{thm:NA-bounded}. Our $r$-round algorithm is obtained by combining the non-adaptive algorithm with the recursive approach used to obtain an optimal $r$-round pair-query algorithm in \Cref{sec:LR-pair}. For simplicity of presentation, we will use the following slightly weaker statement of \Cref{thm:NA-bounded}. (Note that in some cases the $\log$-factors can be improved in the non-adaptive algorithm and this also leads to the same improvements in the resulting $r$-round algorithm.)

\begin{theorem} [Corollary of \Cref{thm:NA-bounded}] \label{thm:NA-bounded-simp} There is a non-adaptive algorithm, $\NASQ$, which for any query size bound $2 \leq s \leq \sqrt{n}$ and error probability $\delta > 0$, learns an arbitrary partition on $n$ elements exactly with probability $1-\delta$ using at most $\frac{n^2}{s^2} \cdot C \log^5(n/\delta)$ weak subset queries of size at most $s$, where $C > 0$ is a universal constant. \end{theorem}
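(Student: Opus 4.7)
The plan is to show that the two-term query complexity in \Cref{thm:NA-bounded} can be consolidated into the single expression $\frac{n^2}{s^2}\cdot C\log^5(n/\delta)$ by exploiting the constraint $s\leq \sqrt{n}$. The algorithm $\NASQ$ is taken to be exactly the procedure from \Cref{thm:NA-bounded}; it already satisfies the query-size bound $s$ and the success probability $1-\delta$, so the only remaining work is a short algebraic comparison of the two query-count terms.

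The first term, $O\bigl(\tfrac{n^2}{s^2}\log(n/\delta)\bigr)$, is immediately bounded by $\tfrac{n^2}{s^2}\cdot C_1\log^5(n/\delta)$ for a suitable constant $C_1$, since $\log(n/\delta)\leq \log^5(n/\delta)$ whenever $n/\delta\geq e$ (and the sub-logarithmic regime can be absorbed by enlarging the constant). For the second term, $O\bigl(n\log^4(n/\delta)\log s\bigr)$, I would invoke the hypothesis $s\leq \sqrt{n}$ in two places: it gives $s^2\leq n$, hence $n\leq n^2/s^2$, and it gives $\log s\leq \tfrac{1}{2}\log n\leq \log(n/\delta)$. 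Multiplying these two estimates together yields
\[
n\log^4(n/\delta)\log s \;\leq\; \frac{n^2}{s^2}\log^5(n/\delta),
\]
up to the hidden constant inside the big-O.

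Adding the two bounds and taking $C$ to be the sum of $C_1$, $C_2$, and the implicit big-O constants produces the claimed bound of $\frac{n^2}{s^2}\cdot C\log^5(n/\delta)$. There is no real obstacle here: the hypothesis $s\leq \sqrt{n}$ is precisely the regime in which the additive $n\log^4(n/\delta)\log s$ term is dominated by the $n^2/s^2$ term (with one extra $\log(n/\delta)$ factor to spare), which is exactly what makes the statement a clean corollary rather than a new result.
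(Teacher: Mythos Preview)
Your proposal is correct and matches the paper's intent: the paper states this result as an immediate corollary of \Cref{thm:NA-bounded} without giving any proof, and your short algebraic argument (using $s\le\sqrt{n}$ to get $n\le n^2/s^2$ and $\log s\le\log(n/\delta)$) is precisely the obvious way to see that both terms in \Cref{thm:NA-bounded} are dominated by $\tfrac{n^2}{s^2}\cdot C\log^5(n/\delta)$.
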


\begin{proofof}{\Cref{thm:r-round-weak}} For shorthand in the proof, we will use $\eps(r) = \frac{1}{2^{r}-1}$ for all $r \geq 1$. We will use the non-adaptive $s$-bounded query algorithm $\NASQ$ of \Cref{thm:NA-bounded-simp}, which uses at most
\begin{align} \label{eq:NA-queries}
    \frac{n^2}{s^2}  \cdot C \log^5(n/\delta)
\end{align}
queries of size at most $s$ where $s \leq [2,\sqrt{n}]$, and succeeds with probability $1-\delta$. The algorithm is recursive and pseudocode is given in \Alg{low-round-subset}.

\paragraph{Query complexity.} We prove by induction on $r$ that \Alg{low-round-subset} uses at most
\begin{align} \label{eq:r-weak-subset-query-complexity}
    \frac{n^{\left(1 + \frac{1}{2^{r}-1}\right)} k^{\left(1 - \frac{1}{2^{r}-1}\right)}}{s^2} \cdot 16C \log^5 (n/\delta)
\end{align}
weak subset queries of size at most $s$.  For the base case of $r=1$, $\LRSQ$ simply runs $\NASQ$, and so the base case is correct by \Cref{thm:NA-bounded-simp}.

\begin{algorithm}
\caption{$\LRSQ(U,s,\delta,r)$} \label{alg:low-round-subset}
\textbf{Input:} Subset query access (on subsets of size at most $s$) to hidden partition $\cC$ over $U$ with $n$ points and $|\cC| \leq k$. An allowed error probability $\delta$, and an allowed number of rounds $r$. Let $\eps(r) := \frac{1}{2^r-1}$\;
\If{$r = 1$ \textbf{or} $n \leq 16k$ }{ 
    Run $\NASQ(U,s,\delta)$ and output the returned partition\; 
} \Else{


Partition the $n$ points of $U$ arbitrarily into $\ell = \ceil{\frac{1}{3}\left(\frac{n}{k}\right)^{1-\eps(r)}}$ sets $U_1,\ldots,U_{\ell}$ each of size $|U_i| \leq t := \ceil{3n^{\eps(r)} k^{1-\eps(r)}}$\;

\For{$i\in [\ell]$} {
Run $\NASQ(U_i,s,\delta)$ and let $\widetilde{\cC}_i$ be the returned partition\;

$\backslash \backslash$ $\triangleright$ Note that $s \leq \sqrt{|U_i|}$ and so this call to $\NASQ$ is valid. $\backslash \backslash$

$\backslash \backslash$ $\triangleright$ Note that $\widetilde{\cC}_i = \{C \cap U_i \colon C \in \cC \}$ with probability $1-\delta$. $\backslash \backslash$

Form $R_i$ by taking exactly one representative from each set $C \in \widetilde{\cC}_i$\;
}
Set $R = R_1 \cup \cdots \cup R_{\ell}$ and $s' = \min(s,\sqrt{|R|^{\eps(r-1)} k^{1-\eps(r-1)}})$ (this is so that the recursive call to $\LRSQ$ has a valid query-size bound)\;
Recursively call $\LRSQ(R,s',\delta,r-1)$ and let $\widetilde{\cC}_R$ be the returned partition of $R$\;
\textbf{Return} the partition $\{\bigcup_{C' \in \cC_1 \cup \cdots \cup \cC_{\ell} \colon C' \cap C \neq \emptyset} C' \colon C \in \widetilde{\cC}_R\}$\;
}
\end{algorithm}

Now, suppose $r \geq 2$. First, if $n \leq 16k$, then the algorithm simply runs $\NASQ$ on $U$ using query size $s$ (line 3), for a total of
\[
\frac{n^2}{s^2} \cdot C\log^5(n/\delta) \leq \frac{nk}{s^2} \cdot 16C \log^5(n/\delta) \leq \frac{n^{1+\eps(r)}k^{1-\eps(r)}}{s^2} \cdot 16C \log^5(n/\delta)
\]
queries, where the last step simply used $n \geq k$. Note that in this case the algorithm is non-adaptive. This completes the proof for the case of $n \leq 16k$.

Now suppose $n > 16k$. Since $r \geq 2$, note that $1-\eps(r) \geq 1-\eps(2) \geq 2/3$. Using these bounds, we have $(n/k)^{1-\eps(r)} > 16^{2/3} > 6$. Recalling the definition of $\ell$ and $t$ in line (6), this implies that
\begin{align} 
    \ell \leq \frac{1}{3}\left(\frac{n}{k}\right)^{1-\eps(r)} + 1 = \frac{1}{2}\left(\frac{n}{k}\right)^{1-\eps(r)} - \frac{1}{6}\left(\frac{n}{k}\right)^{1-\eps(r)} + 1 < \frac{1}{2}\left(\frac{n}{k}\right)^{1-\eps(r)}
\end{align}
and clearly $t \leq 4n^{\eps(r)}k^{1-\eps(r)}$. Then, using the bound \cref{eq:NA-queries} on the query complexity of $\NASQ$, the first round (lines 7-9) makes at most
\begin{align} \label{eq:round_1-weak}
    \ell \cdot \frac{t^2}{s^2} C\log^5(n/\delta) &< \frac{1}{2}\left(\frac{n}{k}\right)^{1-\eps(r)} \cdot \left(\frac{4n^{\eps(r)} k^{1-\eps(r)}}{s} \right)^2 \cdot C \log^5(n/\delta) \nonumber \\
    &= \frac{n^{1+\eps(r)} k^{1-\eps(r)}}{s^2} \cdot 8C \log^5 (n/\delta)
\end{align}
queries, all of size at most $s$. Then, the resulting set $R$ is of size
\begin{align} \label{eq:|R|bound}
    |R| \leq k \cdot \frac{1}{2} \left(\frac{n}{k}\right)^{1-\eps(r)} = \frac{1}{2} \cdot n^{1-\eps(r)} k^{\eps(r)}\text{.}
\end{align}


We now show that the total number of queries made over the remaining $r-1$ rounds is at most $\frac{n^{1+\eps(r)} k^{1-\eps(r)}}{s^2} \cdot 8C \log^5 (n/\delta)$. Combining this with \cref{eq:round_1-weak} shows that the total number of queries is at most $\frac{n^{1+\eps(r)} k^{1-\eps(r)}}{s^2} \cdot 16C \log^5 (n/\delta)$, which completes the proof. 

First, suppose that $s' = \sqrt{|R|^{\eps(r-1)} k^{1-\eps(r-1)}}$. By induction, the recursive call in line (12) costs at most
\begin{align*} 
    \frac{|R|^{1+\eps(r-1)}k^{1-\eps(r-1)}}{|R|^{\eps(r-1)}k^{1-\eps(r-1)}} \cdot 16 C \log^5 (n/\delta) &= |R| \cdot 16C \log^5 (n/\delta) \\
    &\leq 8C n \log^5 (n/\delta) \leq \frac{n^{1+\eps(r)} k^{1-\eps(r)}}{s^2} \cdot 8C \log^5 (n/\delta)
\end{align*}
queries, where we used the bound on $|R|$ and the fact that $k \leq n$. 
Now, suppose $s' = s$. By induction, using the upper on $|R|$ from \cref{eq:|R|bound}, we obtain that running $\LRSQ(R,s,\delta,r-1)$ costs at most
\begin{align} \label{eq:round_r-1-weak}
    &\frac{\left(\frac{1}{2} \cdot n^{1-\eps(r)} k^{\eps(r)}\right)^{1+\eps(r-1)} k^{1-\eps(r-1)}}{s^2} \cdot 16C \log^5(n/\delta)  \nonumber \\ 
    &\leq \frac{n^{(1-\eps(r))(1+\eps(r-1))} k^{\eps(r)(1+\eps(r-1))+(1-\eps(r-1))}}{s^2} \cdot 8C \log^5 (n/\delta) \nonumber \\
    &= \frac{n^{1+\eps(r)}k^{1-\eps(r)}}{s^2} \cdot 8C \log^5 (n/\delta)
\end{align}
queries where the equality is by \Cref{clm:eps(r)}. This completes the proof of the query complexity. 

\paragraph{Correctness.} Observe that the total number of calls to $\NASQ$ per round in \Alg{low-round-subset} is clearly upper bounded by $n$. Thus, by a union bound all such calls are successful with probability at least $1-rn\delta$. We can then simply set $\delta = \delta/rn$, only increasing the query complexity by a constant factor. Now, conditioned on all calls to $\NASQ$ being successful, the proof of correctness for $\LRSQ$ is simple and completely analogous to that of \Cref{thm:LR-pair-UB}. This completes the proof of \Cref{thm:r-round-weak}. \end{proofof}

\section{Strong Subset Queries} \label{sec:LR-partition}

In this section, we investigate the query complexity of learning a $k$-partition over $|U| = n$ elements via access to a strong subset query oracle. The algorithm is given a query size bound $s$ and can query the oracle on any set $S \subseteq U$ of size $|S| \leq s$. For hidden partition $\cP$, the oracle returns $(X \cap S \colon X \in \cP)$, the partition restricted on $S$. We first design a simple deterministic non-adaptive algorithm in \Cref{thm:NA-strong-UB}, which is optimal among all (even randomized) non-adaptive algorithms. We then use this result to obtain an $r$-round deterministic algorithm in \Cref{thm:LR-strong} which is (nearly) optimal for all $r$.

\begin{theorem} \label{thm:NA-strong-UB} For any even $2 \leq s \leq n$, there is a deterministic non-adaptive algorithm for learning an arbitrary partition using at most $\frac{5n^2}{s^2}$ strong subset queries of size $s$. \end{theorem}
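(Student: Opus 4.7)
The plan is to reduce the problem to constructing a pair-covering design of $U$ and then exhibit a simple block construction. A strong subset query on $S \subseteq U$ returns $\partitionq(S,\cP)$, which in particular records, for every pair $\{x,y\} \subseteq S$, whether $x$ and $y$ lie in the same part of $\cP$. Thus, to learn $\cP$ non-adaptively it suffices to fix in advance subsets $S_1, \ldots, S_m \subseteq U$ of size at most $s$ covering every pair $\{x,y\} \in \binom{U}{2}$: once every such pair is covered, the ``same-set'' equivalence relation on $U$ is completely determined by the oracle's responses, and $\cP$ is recovered as its set of equivalence classes.

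For the construction, set $b := s/2$ (an integer, since $s$ is even) and partition $U$ arbitrarily into $m := \lceil n/b \rceil = \lceil 2n/s \rceil$ blocks $B_1, \ldots, B_m$, each of size at most $b$. The queries are taken to be $Q_{ij} := B_i \cup B_j$ for all $1 \leq i < j \leq m$, each of size at most $2b = s$. Every pair $\{x,y\}$ with $x \in B_i$ and $y \in B_j$ is then covered: directly by $Q_{ij}$ if $i \neq j$, and by $Q_{ij'}$ for any $j' \neq i$ in the case $i = j$. Such a $j'$ exists whenever $m \geq 2$, which holds whenever $s \leq n$; the degenerate case $m = 1$ (forcing $n \leq s/2$) is handled by a single query on all of $U$.

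To bound the number of queries, note that $s \leq n$ implies $1 \leq n/s$, so $m \leq 2n/s + 1 \leq 3n/s$, giving
\[
\binom{m}{2} \;\leq\; \frac{m^2}{2} \;\leq\; \frac{9n^2}{2s^2} \;\leq\; \frac{5n^2}{s^2}.
\]
Since all queries are specified in advance (indeed, they depend only on $n$ and $s$, not on any oracle response), the resulting algorithm is deterministic and non-adaptive.

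There is no serious obstacle here: the proof is essentially the observation that a single strong subset query simulates the pairwise same-set queries on all pairs inside the queried set, reducing non-adaptive partition learning to a pair-cover problem, together with the classical two-block-union covering design. The only care needed is the handling of the edge case $m = 1$ and the mild bookkeeping that arises when $s \nmid 2n$, both of which are addressed above.
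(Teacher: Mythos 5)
Your proof is correct and follows essentially the same approach as the paper: partition $U$ into blocks of size $s/2$ and query all pairwise unions of blocks, observing that covering every pair of points with some query determines the same-set relation and hence the partition, with the same $\lceil 2n/s\rceil \leq 3n/s$ and $\binom{m}{2} \leq \tfrac{9n^2}{2s^2} \leq \tfrac{5n^2}{s^2}$ accounting. (The degenerate case $m=1$ you mention never actually arises since $s \leq n$ forces $m \geq 2$, as you yourself note.)
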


\begin{proof} It suffices to design a collection of sets of size at most $s$ such that for every pair of points $x,y$ there is some subset containing both $x$ and $y$. We claim that the following construction has this property. Partition the $n$ points into $\ell = \ceil{\frac{n}{s/2}} \leq \frac{3n}{s}$ sets (since $s \leq n$) $T_1,\ldots,T_{\ell}$ of size $|T_i| = s/2$. Then the set of queries is $T_i \cup T_j$ for every $i \neq j$ for a total of $\smash{{\ell \choose 2} < \frac{5n^2}{s^2}}$ queries of size exactly $s$. If $x,y$ are in the same $T_i$, then clearly this pair is covered by any query involving $T_i$, and if $x \in T_i, y \in T_j$ for $i \neq j$, then clearly this pair is covered by the query $T_i \cup T_j$. \end{proof}

\begin{theorem} \label{thm:LR-strong} For any $r \geq 1$, there is a deterministic $r$-round algorithm for learning a $k$-partition using
\[
\frac{80 n^{\left(1 + \frac{1}{2^{r}-1}\right)} k^{\left(1 - \frac{1}{2^{r}-1}\right)}}{s^2} \text{ strong subset queries of size at most } s \leq n^{\left(\frac{1}{2^r-1}\right)}k^{\left(1-\frac{1}{2^r-1}\right)} \text{ where } s \geq 2 
\text{ is even.}
\]

\end{theorem}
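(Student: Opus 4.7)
The plan is to mirror the recursive template that produced $\LRPQ$ in \Cref{thm:LR-pair-UB} and $\LRSQ$ in \Cref{thm:r-round-weak}, but replace the trivial all-pairs subroutine (or the non-adaptive weak-subset subroutine) with the deterministic non-adaptive strong subset algorithm of \Cref{thm:NA-strong-UB}, which costs $5n^2/s^2$ queries of size $s$. Explicitly, define $\LRSQS(U,s,r)$ as follows. In the base case ($r=1$ or $n \leq 16k$), invoke the non-adaptive algorithm of \Cref{thm:NA-strong-UB} directly. Otherwise, partition $U$ arbitrarily into $\ell = \lceil \tfrac{1}{3}(n/k)^{1-\eps(r)} \rceil$ blocks $U_1,\ldots,U_\ell$ of size at most $t = \lceil 3n^{\eps(r)}k^{1-\eps(r)} \rceil$, run the non-adaptive algorithm on each $U_i$ in parallel during the first round to learn $\cC_i = \{C \cap U_i : C \in \cC\}$, build a representative set $R$ of size at most $k\ell$ by taking one point per class per $U_i$, recursively call $\LRSQS(R,s',r-1)$ with a suitable even $s'$, and merge via the pattern of \Alg{low-round-pair}.

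The query-complexity analysis is the same arithmetic as in the proofs of \Cref{thm:LR-pair-UB} and \Cref{thm:r-round-weak}, with an extra $1/s^2$ factor everywhere coming from \Cref{thm:NA-strong-UB}. The first round costs at most
\[
\ell \cdot \frac{5t^2}{s^2} \;\leq\; \frac{1}{2}\Bigl(\tfrac{n}{k}\Bigr)^{1-\eps(r)} \cdot \frac{5\,(4n^{\eps(r)}k^{1-\eps(r)})^2}{s^2} \;\leq\; \frac{40\, n^{1+\eps(r)}k^{1-\eps(r)}}{s^2}.
\]
Since $|R| \leq \tfrac{1}{2}\, n^{1-\eps(r)}k^{\eps(r)}$, by induction the recursive call costs at most $80\,|R|^{1+\eps(r-1)}k^{1-\eps(r-1)}/s^2$, which equals $40\,n^{1+\eps(r)}k^{1-\eps(r)}/s^2$ using item (1) and item (2) of \Cref{clm:eps(r)} exactly as in \cref{eq:round_r-1-weak}. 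Summing the two contributions yields the claimed bound of $80\,n^{1+\eps(r)}k^{1-\eps(r)}/s^2$. The case $n \leq 16k$ is handled by a direct appeal to \Cref{thm:NA-strong-UB} together with the crude bound $n \leq 16k$, exactly as in the proofs of \Cref{thm:LR-pair-UB} and \Cref{thm:r-round-weak}. Correctness of the merge step is identical to the pair-query version and does not use any property of the oracle beyond returning the restricted partition, so it is immediate.

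The main subtlety, and the only place where the argument is not purely mechanical from the pair-query template, is in threading the query-size constraint $s \leq n^{\eps(r)}k^{1-\eps(r)}$ through the recursion. First, one must check that $s \leq t$ so that \Cref{thm:NA-strong-UB} may be invoked on each $U_i$; this follows directly from the hypothesis $s \leq n^{\eps(r)}k^{1-\eps(r)}$ and the definition of $t$. Second, for the recursive call on $R$ with $r-1$ rounds, the induction hypothesis requires $s' \leq |R|^{\eps(r-1)}k^{1-\eps(r-1)}$. Using $|R| \leq \tfrac{1}{2}n^{1-\eps(r)}k^{\eps(r)}$ together with the identities $(1-\eps(r))\eps(r-1) = 2\eps(r)$ and $\eps(r)\eps(r-1) + 1 - \eps(r-1) = 1-2\eps(r)$ (both derivable from \Cref{clm:eps(r)}), one gets $|R|^{\eps(r-1)}k^{1-\eps(r-1)} \gtrsim n^{2\eps(r)}k^{1-2\eps(r)} = (n/k)^{\eps(r)} \cdot n^{\eps(r)}k^{1-\eps(r)} \geq s$ (using $n \geq k$, which we may assume since otherwise the partition is trivial). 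Thus setting $s'$ to be $s$ rounded down to the nearest even integer suffices, and $s'$ satisfies the required bound for the recursive call. This is the only technically delicate step; everything else reduces to the arithmetic already carried out in the proof of \Cref{thm:LR-pair-UB}.
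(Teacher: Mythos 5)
Your overall architecture matches the paper exactly: recurse as in \Cref{thm:LR-pair-UB}, replacing the all-pairs subroutine with the non-adaptive strong-query algorithm of \Cref{thm:NA-strong-UB}, with the first-round cost $\ell \cdot 5t^2/s^2 \leq 40\,n^{1+\eps(r)}k^{1-\eps(r)}/s^2$ and the $|R| \leq \tfrac12 n^{1-\eps(r)}k^{\eps(r)}$ bound. However, the step you flag as the "only technically delicate" one is where your argument breaks. You claim that $s \leq |R|^{\eps(r-1)}k^{1-\eps(r-1)}$ always holds, deriving this from the \emph{upper} bound $|R| \leq \tfrac12 n^{1-\eps(r)}k^{\eps(r)}$; but an upper bound on $|R|$ cannot yield a lower bound on $|R|^{\eps(r-1)}k^{1-\eps(r-1)}$. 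The size of $R$ depends on the hidden partition: if each block $U_i$ intersects only one class, then $|R| = \ell \approx \tfrac13(n/k)^{1-\eps(r)}$, which is smaller than your bound by a factor of about $k$. Concretely, for $r=2$, $k=\sqrt{n}$, and $s = n^{1/3}k^{2/3}$ (allowed by the hypothesis), one gets $|R|^{\eps(1)}k^{1-\eps(1)} = |R| \approx \tfrac13 n^{1/3} \ll s$, and indeed $s$ can even exceed $|R|$ itself, so recursing with query size $s$ is not a valid invocation of the inductive statement (nor of \Cref{thm:NA-strong-UB} at the bottom of the recursion). So the case you dismiss genuinely arises and must be handled.

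The paper handles it by setting $s' = \min\bigl(s, |R|^{\eps(r-1)}k^{1-\eps(r-1)}\bigr)$ and splitting the analysis. When $s' = s$, the computation is the one you carried out and gives $40\,n^{1+\eps(r)}k^{1-\eps(r)}/s^2$ via \Cref{clm:eps(r)}. When instead $s' = |R|^{\eps(r-1)}k^{1-\eps(r-1)} < s$, the inductive bound becomes
\[
80\,\frac{|R|^{1+\eps(r-1)}k^{1-\eps(r-1)}}{|R|^{2\eps(r-1)}k^{2(1-\eps(r-1))}} = 80\left(\frac{|R|}{k}\right)^{1-\eps(r-1)} \leq 40\left(\frac{n}{k}\right)^{1-\eps(r)} \leq \frac{40\,n^{1+\eps(r)}k^{1-\eps(r)}}{s^2},
\]
where the middle inequality uses $|R| \leq \tfrac12 n^{1-\eps(r)}k^{\eps(r)}$ (or is trivial if $|R| \leq k$) and the last inequality is exactly where the hypothesis $s \leq n^{\eps(r)}k^{1-\eps(r)}$ is needed. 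Without this second case your induction does not close, so the proposal as written has a genuine gap, even though the algorithm and all remaining arithmetic are the paper's.
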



In particular, with $O(\log \log n)$ rounds and $s \leq k$, we get $O(\frac{nk}{s^2})$ queries, matching the $\Omega(\max(\frac{nk}{s^2},\frac{n}{s}))$ fully adaptive lower bound. With $O(\log \log n)$ rounds we get an algorithm making $O(\frac{n}{k})$ strong subset queries of size $k$. To design our $r$-round algorithm we essentially follow the same strategy presented in \Cref{sec:LR-pair} for pair queries, replacing the non-adaptive subroutine with that of \Cref{thm:NA-strong-UB}.

\begin{proof}[Proof of \Cref{thm:LR-strong}] For shorthand in the proof, we will use $\eps(r) = \frac{1}{2^{r}-1}$ for all $r \geq 1$. We will use the non-adaptive $s$-bounded strong subset query algorithm $\NASQS$ of \Cref{thm:NA-strong-UB}. By \Cref{thm:NA-strong-UB}, this algorithm learns a $k$-partition of $n$ elements using at most $\frac{5n^2}{s^2}$ queries of size at most $s$ where $s \leq [2,n]$ is even.

The algorithm is recursive and pseudocode is given in \Alg{low-round-subset-strong}. We prove the theorem by induction on $r$. For the base case of $r=1$, $\LRSQS$ simply runs $\NASQS$, and correctness follows by \Cref{thm:NA-strong-UB}.

\begin{algorithm}
\caption{$\LRSQS(U,s,r)$} \label{alg:low-round-subset-strong}
\textbf{Input:} Strong subset query access (on subsets of size at most $s$) to hidden partition $\cC$ over $U$ with $n$ points and $|\cC| \leq k$. An allowed number of rounds $r$. Let $\eps(r) = \frac{1}{2^r-1}$\;

\If{$r = 1$ \textbf{or} $n \leq 16k$}{ 
    Run $\NASQS(U,s)$ and output the returned partition\; 
} \Else{

Partition the $n$ points of $U$ arbitrarily into $\ell = \ceil{\frac{1}{3}\left(\frac{n}{k}\right)^{1-\eps(r)}}$ sets $U_1,\ldots,U_{\ell}$ each of size $|U_i| \leq t := \ceil{3n^{\eps(r)} k^{1-\eps(r)}}$\;

\For{$i\in [\ell]$} {
Run $\NASQS(U_i,s)$ to learn the partition $\cC_i = \{C \cap U_i \colon C \in \cC \}$\;
$\backslash \backslash$ $\triangleright$ Note that $s \leq |U_i|$ and so this is a valid call to $\NASQS$. $\backslash \backslash$ \\
Form $R_i$ by taking exactly one representative from each set $C \in \cC_i$\;
}
Set $R = R_1 \cup \cdots \cup R_{\ell}$ and $s' = \min(s,|R|^{\eps(r-1)} k^{1-\eps(r-1)})$ (this is so that the recursive call to $\LRSQS$ has a valid query-size bound)\;
Recursively call $\LRSQ(R,s',r-1)$ and let $\cC_R$ be the returned partition of $R$\;
\textbf{Return} the partition $\{\bigcup_{C' \in \cC_1 \cup \cdots \cup \cC_{\ell} \colon C' \cap C \neq \emptyset} C' \colon C \in \cC_R\}$\;
}
\end{algorithm}

Now suppose $r \geq 2$. First, if $n \leq 16k$, then we simply run $\NASQS$ on $U$ using query size $s$ (line 3), for a total of
\[
\frac{5n^2}{s^2} \leq \frac{80 nk}{s^2} \leq \frac{80 n^{1+\eps(r)}k^{1-\eps(r)}}{s^2}
\]
queries, where the last step simply used $n \geq k$. Note that in this case the algorithm is non-adaptive and again correctness follows from \Cref{thm:NA-strong-UB}. This completes the proof for the case of $n \leq 16k$.

Now suppose that $n > 16k$. The proof of correctness is simple and completely analogous to that of \Cref{thm:LR-pair-UB}. Now let us prove the desired query complexity. Since $r \geq 2$, note that $1-\eps(r) \geq 1-\eps(2) \geq 2/3$. Using these bounds, we have $(n/k)^{1-\eps(r)} > 16^{2/3} > 6$. Recalling the definition of $\ell$ and $t$ in line (6), this implies that
\begin{align} \label{eq:round_1-strong}
    \ell \leq \frac{1}{3}\left(\frac{n}{k}\right)^{1-\eps(r)} + 1 = \frac{1}{2}\left(\frac{n}{k}\right)^{1-\eps(r)} - \frac{1}{6}\left(\frac{n}{k}\right)^{1-\eps(r)} + 1 < \frac{1}{2}\left(\frac{n}{k}\right)^{1-\eps(r)}
\end{align}
and clearly $t \leq 4n^{\eps(r)}k^{1-\eps(r)}$. Thus, using the bound on the query complexity of $\NASQS$, the first round (line 8) makes at most
\begin{align} \label{eq:round_1-strong'}
     \ell \cdot \frac{5t^2}{s^2} < \frac{1}{2}\left(\frac{n}{k}\right)^{1-\eps(r)} \cdot \frac{80 n^{2\eps(r)} k^{2(1-\eps(r))}}{s^2} \leq \frac{40n^{1+\eps(r)} k^{1-\eps(r)}}{s^2}
\end{align}
queries. Then, the resulting set $R$ is of size
\begin{align} \label{eq:|R|bound-strong}
    |R| \leq k \cdot \frac{1}{2} \left(\frac{n}{k}\right)^{1-\eps(r)} = \frac{1}{2} \cdot n^{1-\eps(r)} k^{\eps(r)}\text{.}
\end{align}
We now show that the total number of queries in the remaining $r-1$ rounds is at most $\frac{40 n^{1+\eps(r)} k^{1-\eps(r)}}{s^2}$. Combining this with \cref{eq:round_1-strong} shows that the total number of queries is then at most this $\frac{80 n^{1+\eps(r)} k^{1-\eps(r)}}{s^2}$, which completes the proof.

First, suppose that $s' = |R|^{\eps(r-1)} k^{1-\eps(r-1)}$. By induction, the recursive call to $\LRSQS$ in line (12) costs at most
\begin{align} \label{eq:round_r-1'-strong}
    80 \frac{|R|^{1+\eps(r-1)}k^{1-\eps(r-1)}}{|R|^{2\eps(r-1)}k^{2(1-\eps(r-1))}} &= 80 \left(\frac{|R|}{k}\right)^{1-\eps(r-1)} 
\end{align}
queries. If $|R| \leq k$, then the RHS above is at most $80$, which clearly satisfies the desired bound. Otherwise, the RHS of \cref{eq:round_r-1'-strong} is at most $80|R|/k$ and using \cref{eq:|R|bound-strong}, this is at most $40(n/k)^{1-\eps(r)}$. Now, since $s \leq n^{\eps(r)}k^{1-\eps(r)}$, observe that 
\[
\frac{40 n^{1+\eps(r)} k^{1-\eps(r)}}{s^2} \geq 40 \left(\frac{n}{k}\right)^{1-\eps(r)} 
\]
and so the number of queries in the remaining $r-1$ rounds satisfies the desired bound.

Now, suppose $s' = s$. By induction, using the upper on $|R|$ from \cref{eq:|R|bound-strong}, we obtain that running $\LRSQS(R,s,r-1)$ costs at most
\begin{align*} \label{eq:round_r-1-strong}
    &\frac{80}{s^2} \left(\frac{1}{2} \cdot n^{1-\eps(r)} k^{\eps(r)}\right)^{1+\eps(r-1)} k^{1-\eps(r-1)} \leq \frac{40 n^{(1-\eps(r))(1+\eps(r-1))} k^{\eps(r)(1+\eps(r-1))+(1-\eps(r-1))}}{s^2}  = \frac{40n^{1+\eps(r)}k^{1-\eps(r)} }{s^2} 
\end{align*}
queries where the equality is by \Cref{clm:eps(r)}. 
This completes the proof. \end{proof}


\bibliographystyle{alpha}
\bibliography{biblio}

\end{document}